\newtheorem{theorem}{Theorem}
\newtheorem{corollary}{Corollary}
\theoremstyle{definition}
\title{Approximation of bivariate densities with compositional splines}
\author{Stanislav Škorňa\thanks{Corresponding author: stanislav.skorna01@upol.cz}, Jitka Machalová, Jana Burkotová, Karel Hron, Sonja Greven
}
\date{}
\begin{document}
\maketitle

\begin{abstract}
Reliable estimation and approximation of probability density functions is fundamental for their further processing. However, their specific properties, i.e. scale invariance and relative scale, prevent the use of standard methods of spline approximation and have to be considered when building a suitable spline basis. Bayes Hilbert space methodology allows to account for these properties of densities and enables their conversion to a standard Lebesgue space of square integrable functions using the centered \mbox{log-ratio} transformation. As the transformed densities fulfill a zero integral constraint, the constraint should likewise be respected by any spline basis used. Bayes Hilbert space methodology also allows to decompose bivariate densities into their interactive and independent parts with univariate marginals. As this yields a useful framework for studying the dependence structure between random variables, a spline basis ideally should admit a corresponding decomposition. This paper proposes a new spline basis for (transformed) bivariate densities respecting the desired zero integral property. We show that there is a one-to-one correspondence of this  basis to a corresponding basis in the Bayes Hilbert space of bivariate densities using tools of this methodology. Furthermore, the spline  representation and the resulting decomposition into interactive and independent parts are derived. Finally, this novel spline representation is evaluated in a simulation study and applied to empirical geochemical data.
\end{abstract}

\keywords{Probability density functions \and Bayes space \and Functional data \and Spline approximation \and Smoothing spline}

\section{Introduction} \label{intro}
Probability density functions (PDFs), as a specific instance of distributional data resulting from massive data collection, are of increasing importance in applications, e.g., as age distributions \citep{hron16}, income distributions \cite{talska20,maier2021additive}, particle size distributions \citep{menafoglio16,talska21}, share index \citep{iacopini19} or anthropometric distributions \citep{hron22}. Typically, the interest lies in the relative structure of the collected values on a given compact domain. Accordingly, the input data need to be aggregated into tractable form before further statistical processing. This can be achieved through histograms \citep{compositional} or using a nonparametric method like kernel density estimation \citep{steyer23}. The latter already reflects by construction the fact that continuous distributional data are in essence characterized by PDFs. However, to make full use of the functional nature of  PDFs and use methods of functional data analysis (FDA) \citep{ramsay05,kokoszka17}  for further statistical processing of samples of PDFs, usually some proper basis expansion is preferred as a representation of the PDFs. There are several options, with one of the most popular being $B$-splines \cite{Dierckx}, which offer, among other properties, high flexibility for efficient smoothing, local support resulting in numerical efficiency, and the possibility to perform FDA via multivariate analysis on spline coefficients. The resulting spline expansions can be further used as a building block in more complex methods, e.g. regression \cite{maier2021additive} or classification \citep{pavlu23} with densities.

While the framework for their statistical analysis has so far mostly been developed for univariate PDFs \cite{boogaart14,hron16,talska20}, many real world distributional data are in essence bivariate, or more generally multivariate. For example, the joint height and weight distributions (densities) can be of interest in anthropometric studies \citep{hron22} or multielemental PDFs resulting from the monitoring of contamination in soil \citep{genest22}. As in the univariate case, there is a demand for an~appropriate preprocessing of such bivariate density data using spline approximations. 

However, PDFs as positive functions on a given domain are characterized by their proportionality a.k.a. scale invariance, which results in their usual unit integral representation, therefore, suitable care needs to be taken in their meaningful representation, including their spline approximation. This is achieved through the concept of Bayes spaces \cite{egozcue06,boogaart14}, which provide a geometric representation of PDFs with a separable Hilbert space structure. This enables a one-to-one transformation to 
the $L^2_0$ space of $L^2$ functions with zero integral, where a $B$-spline basis can be constructed. For densities with square integrable logarithm (the space of which we refer to as $\mathcal{B}^2$ space), this transformation is called the centred log-ratio (clr) transformation.

Note that the use of a standard $B$-spline basis would not respect the zero integral constraint. This constraint can be imposed by additional requirements on the spline coefficients as proposed in \cite{machalova16,hron22}. Constrained coefficients are, however, inconvenient for further processing e.g.\ using FDA methods. To overcome this limitation and to respect the nature of the $L^2_0$ space, an alternative spline basis, known as $Z\!B$-splines, was introduced for the case of univariate densities in \cite{compositional}. These basis functions have the desired property of zero integral and thus avoid inconvenient coefficient constraints. 

Moreover, a bivariate PDF can be orthogonally decomposed into its interactive and independent part with appealing probabilistic implications, as introduced in \citep{hron22}, e.g. one can derive a measure of dependence called \textit{simplicial deviance}. However, a standard $B$-spline basis does not fully reflect the potential of this decomposition due to the constrained coefficients. In this paper, our aim is to fill this gap and provide a~comprehensive approach. Our newly introduced spline basis for bivariate densities naturally fulfills both desiderata:
\begin{enumerate}
    \item
    The zero integral constraint, which is beneficial from both theoretical and practical perspectives, without the need for additional constraints on spline coefficients.
    \smallskip
    \item
    A decomposition of the basis into its interactive and independent parts, which directly corresponds to the analogous useful decomposition of the PDFs.
\end{enumerate}

\noindent Finally, its construction opens a path to further desirable generalization of the resulting approximation and decomposition to multivariate densities in the Bayes space framework.

The paper is organized as follows. In the next section, the main idea of the Bayes space framework with emphasis on bivariate densities is outlined. Section \ref{twodim} is devoted to a brief summary of previous developments of the $B$-spline representation and mainly to our novel introduction of $Z\!B$-splines and the $Z\!B$-spline representation for bivariate densities in the $L^2_0$ space, supported with further theoretical background in Section~\ref{ZBsmoothing}. The expression of $Z\!B$-splines in the original Bayes spaces is studied in Section~\ref{backtoBayes}. Sensitivity of the new spline representation to the choice of related tuning parameters in the approximation of histogram data is analyzed with a simulation study in Section~\ref{simul}. The potential of the new approximation approach is additionally demonstrated in Section~\ref{appl} with empirical geochemical data. The final Section~\ref{concl} concludes with a discussion and outlook.

\section{Bayes Hilbert spaces} \label{Bayes}
Bayes spaces, which have the structure of separable Hilbert spaces, were originally developed for $\sigma$-finite measures defined on a compact interval $I\subset\mathbb{R}$, see \cite{boogaart14}. The idea can be extended to the bivariate case \citep{hron22}, where we consider a domain $\Omega\subset\mathbb{R}^2$ as a Cartesian product of two univariate domains $\Omega_X = [a,b]$ and $\Omega_Y = [c,d]$. Using \mbox{Radon-Nikodym} derivatives, the Bayes Hilbert space can be equivalently expressed with densities, see \cite{boogaart14, genest22} for details. We assume a Bayes space ${\cal B}^{2}(\Omega)$ as a space of positive bivariate densities with respect to the Lebesgue measure, defined on a product domain $\Omega$ as defined above, and being square-log integrable.

For densities $f,\ g \in {\cal B}^{2}(\Omega)$ and $\alpha\in\mathbb{R}$, the operations \textit{perturbation} and \textit{powering} are defined as
$$
(f\oplus g) =_{{\cal B}^{2}(\Omega)} f \cdot g,\qquad (\alpha \odot f) =_{{\cal B}^{2}(\Omega)} f^{\alpha},
$$
respectively, where $=_{{\cal B}^{2}(\Omega)}$ indicates equality up to rescaling with a constant (which does not alter the relative information carried by density functions in ${\cal B}^{2}(\Omega)$). To complete the Hilbert structure of ${\cal B}^{2}(\Omega)$, the inner product is defined as
\begin{equation*}
\left\langle f, g\right\rangle_{\mathcal{B}^2(\Omega)} = \frac{1}{2\,(d-c)\,(b-a)} \iint_{\Omega} \iint_{\Omega} \ln \frac{f(x,y)}{f(s,t)}\, \ln \frac{g(x,y)}{g(s,t)} \, \mbox{d}x\,\mbox{d}y\,\mbox{d}s\,\mbox{d}t,
\end{equation*}
which induces the norm and distance $d$ in the standard way as
\begin{equation*}
\|f\|_{\mathcal{B}^2(\Omega)}=\sqrt{\langle f, f\rangle_{\mathcal{B}^2(\Omega)}},\quad d_{\mathcal{B}^2(\Omega)}(f,g)=\|f\ominus g\|_{\mathcal{B}^2(\Omega)},
\end{equation*}
where $f\ominus g=f\oplus[(-1)\odot g]$. See \cite{boogaart14, egozcue06, hron22, genest22} for more details.\\

For bivariate and multivariate densities in general, one of their main useges is to study the dependence structure between random variables. For this purpose, it is useful that Bayes space methodology enables to orthogonally decompose a density function into its \textit{interactive} and \textit{independent} parts, the latter being the Bayes space sum (perturbation or product) of univariate \textit{geometric marginals}. In the independence case, independent part and full density are equal. The initial idea was introduced in \cite{hron22} and further adjusted and extended in \cite{genest22}.

Geometric marginals of a density $f\in{\cal B}^{2}(\Omega)$ are $f_{1}\in{\cal B}^{2}_{\{1\}}(\Omega)$ and $f_{2}\in{\cal B}^{2}_{\{2\}}(\Omega)$ defined for any $(x,y)\in\Omega=[a,b] \times [c,d]$ as
\begin{align*}
f_{1}(x,y) &\equiv f_{1}(x) =_{{\cal B}^{2}(\Omega)} 
\mbox{exp}\left\{\frac{1}{d-c}\int_c^d\ln f(x,t)\,\mbox{d}t\right\}, \\
f_{2}(x,y) &\equiv f_{2}(y) =_{{\cal B}^{2}(\Omega)} 
\mbox{exp}\left\{\frac{1}{b-a}\int_a^b\ln f(t,y)\,\mbox{d}t \right\},
\end{align*}
where ${\cal B}^{2}_{\{1\}}(\Omega)$ and ${\cal B}^{2}_{\{2\}}(\Omega)$ stand for subspaces of ${\cal B}^{2}(\Omega)$ with PDFs depending only on one variable and being constant in the respective other variable, which corresponds to embeddings in the bivariate space. Geometric marginals represent orthogonal projections of the density $f$ onto ${\cal B}^{2}_{\{1\}}(\Omega)$ and ${\cal B}^{2}_{\{2\}}(\Omega)$, as stated in \cite{genest22}. In particular, the following theorem holds.
\begin{theorem}[\cite{genest22}] \label{Bmargin}
For the independent and interactive parts of $f\in{\cal B}^{2}(\Omega)$ defined as
$$
f_{ind} = f_{1} \oplus f_{2} =_{{\cal B}^{2}} f_{1}\,f_{2},\quad f_{int} = f\ominus f_{ind} =_{{\cal B}^{2}} \frac{f}{f_{1}\,f_{2}},
$$
the following holds: 
\begin{itemize}
\item[(i)] The independent part $f_{ind}$ is the unique orthogonal projection of $f$ onto ${\cal B}_{ind}^{2}(\Omega)$, defined as the direct sum of ${\cal B}^{2}_{\{1\}}(\Omega)$ and ${\cal B}^{2}_{\{2\}}(\Omega)$, i.e., ${\cal B}_{ind}^{2}(\Omega)={\cal B}^{2}_{\{1\}}(\Omega)\oplus{\cal B}^{2}_{\{2\}}(\Omega)$.
\item[(ii)] The interactive part $f_{int}$ is the unique orthogonal projection of $f$ onto ${\cal B}_{int}^{2}(\Omega)$, being the orthogonal complement of ${\cal B}_{ind}^{2}(\Omega)$ in ${\cal B}^{2}(\Omega)$ . 
\end{itemize}
\end{theorem}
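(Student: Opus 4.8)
The plan is to transport the whole problem, via the centred log-ratio map $\mathrm{clr}$, to the Lebesgue space $L^2_0(\Omega)$ of square-integrable functions on $\Omega=[a,b]\times[c,d]$ with vanishing double integral, where orthogonal projections are classical, and then pull the result back. Recall that $\mathrm{clr}$ is an isometric isomorphism $\mathcal{B}^2(\Omega)\to L^2_0(\Omega)$ taking $\oplus,\ominus$ and $\odot$ to $+,-$ and scalar multiplication and the Bayes inner product to the ordinary $L^2$ inner product (this can be read off directly from the inner-product formula in Section~\ref{Bayes}). Writing $F=\ln f$ and $\bar F=\frac{1}{(b-a)(d-c)}\iint_\Omega F$, one has $\mathrm{clr}(f)=F-\bar F$. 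Since $\mathrm{clr}$ is a linear isometry carrying each subspace onto its image, it suffices to verify both claims on the $L^2_0$ side.

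First I would identify the images of the relevant subspaces. A density in $\mathcal{B}^2_{\{1\}}(\Omega)$ depends only on $x$, and its clr image is a function of $x$ alone with zero integral over $[a,b]$; call this closed subspace $L^2_{0,x}$, and define $L^2_{0,y}$ symmetrically. The crucial observation is that the geometric marginal corresponds under $\mathrm{clr}$ exactly to conditional averaging: from $\ln f_1(x)=\frac{1}{d-c}\int_c^d F(x,t)\,\mathrm{d}t$ one gets
\[
\mathrm{clr}(f_1)(x)=\frac{1}{d-c}\int_c^d\bigl(F(x,t)-\bar F\bigr)\,\mathrm{d}t=\frac{1}{d-c}\int_c^d \mathrm{clr}(f)(x,t)\,\mathrm{d}t,
\]
that is, $\mathrm{clr}(f_1)=P_x\,\mathrm{clr}(f)$, where $P_x$ is the $L^2$-orthogonal projection onto $L^2_{0,x}$ given by averaging out $y$; symmetrically $\mathrm{clr}(f_2)=P_y\,\mathrm{clr}(f)$. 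I expect this to be the heart of the argument, since it is precisely where the ``$\exp$ of averaged $\log$'' form of the geometric marginal matches the ``$\log$ minus mean'' form of $\mathrm{clr}$.

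Next I would record the orthogonality of the two marginal subspaces: for $u\in L^2_{0,x}$ and $v\in L^2_{0,y}$,
\[
\iint_\Omega u(x)\,v(y)\,\mathrm{d}x\,\mathrm{d}y=\Bigl(\int_a^b u\Bigr)\Bigl(\int_c^d v\Bigr)=0,
\]
so $L^2_{0,x}\perp L^2_{0,y}$ and their sum is a closed orthogonal direct sum equal to $\mathrm{clr}(\mathcal{B}^{2}_{ind}(\Omega))$. For orthogonal closed subspaces the projection onto the sum is $P_{ind}=P_x+P_y$, whence $\mathrm{clr}(f_{ind})=\mathrm{clr}(f_1)+\mathrm{clr}(f_2)=P_{ind}\,\mathrm{clr}(f)$; transporting back by the isometry yields (i). Likewise $\mathrm{clr}(f_{int})=\mathrm{clr}(f)-\mathrm{clr}(f_{ind})=(\mathrm{Id}-P_{ind})\,\mathrm{clr}(f)$ is the projection onto the orthogonal complement, giving (ii).

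Uniqueness is then immediate from the Hilbert projection theorem: $\mathcal{B}^{2}_{ind}(\Omega)$ is closed (its clr image is a finite sum of orthogonal closed subspaces), so its orthogonal projection is unique, and I have exhibited $f_{ind}$ as the element satisfying the defining characterization, namely membership in the subspace together with orthogonality of the residual $f\ominus f_{ind}=f_{int}$. Beyond the marginal-averaging identity, the only points needing care are checking that $P_x,P_y$ indeed map $L^2_0$ into $L^2_{0,x},L^2_{0,y}$ (both images have zero integral, which is immediate) and the closedness used for uniqueness; neither poses a real obstacle.
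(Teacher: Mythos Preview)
The paper does not prove this theorem at all: it is stated with attribution to \cite{genest22} and used as a cited result, so there is no in-paper proof to compare against. Your argument is correct and is in fact the natural one---transport via the $\mathrm{clr}$ isometry to $L^2_0(\Omega)$, identify the geometric marginals with the conditional-averaging projections $P_x,P_y$, check $L^2_{0,x}\perp L^2_{0,y}$ so that $P_{ind}=P_x+P_y$, and pull back. The marginal-averaging identity you single out as the heart of the matter is exactly what the paper later records as Theorem~\ref{clrmarg} (also cited from \cite{genest22}); you derive it directly from the definition of $f_1$, so there is no circularity.
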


\begin{corollary}[\cite{genest22}] \label{Bdecomp}
Any bivariate density $f\in{\cal B}^{2}(\Omega)$ can be orthogonally decomposed as
$$
f = f_{1}\oplus f_{2}\oplus f_{int} = f_{ind}\oplus f_{int}.
$$
\end{corollary}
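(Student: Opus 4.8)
The plan is to obtain the statement as a direct consequence of Theorem~\ref{Bmargin} together with the orthogonal decomposition (projection) theorem for Hilbert spaces. Since $\mathcal{B}^2(\Omega)$ is a separable Hilbert space and, by Theorem~\ref{Bmargin}(ii), $\mathcal{B}^2_{int}(\Omega)$ is \emph{by definition} the orthogonal complement of $\mathcal{B}^2_{ind}(\Omega)$, we have the orthogonal direct sum $\mathcal{B}^2(\Omega) = \mathcal{B}^2_{ind}(\Omega) \oplus \mathcal{B}^2_{int}(\Omega)$. Hence every $f \in \mathcal{B}^2(\Omega)$ admits a unique representation as the perturbation of one element of each subspace.

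First I would observe that the algebraic identity $f = f_{ind} \oplus f_{int}$ is immediate from the definition $f_{int} = f \ominus f_{ind}$ and the vector-space (Abelian group) structure of $(\mathcal{B}^2(\Omega), \oplus, \odot)$: indeed, $f_{ind} \oplus f_{int} = f_{ind} \oplus (f \ominus f_{ind}) = f$. Substituting $f_{ind} = f_1 \oplus f_2$ then yields the chain $f = f_1 \oplus f_2 \oplus f_{int} = f_{ind} \oplus f_{int}$ claimed in the statement.

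Second, I would argue that this decomposition is \emph{orthogonal} in the asserted sense. By Theorem~\ref{Bmargin}(i) the component $f_{ind}$ is the orthogonal projection of $f$ onto $\mathcal{B}^2_{ind}(\Omega)$, hence $f_{ind} \in \mathcal{B}^2_{ind}(\Omega)$, and by Theorem~\ref{Bmargin}(ii) the component $f_{int}$ lies in $\mathcal{B}^2_{int}(\Omega) = \mathcal{B}^2_{ind}(\Omega)^{\perp}$. Since the two summands lie in mutually orthogonal subspaces, $\langle f_{ind}, f_{int} \rangle_{\mathcal{B}^2(\Omega)} = 0$, so $f = f_{ind} \oplus f_{int}$ is a genuine orthogonal decomposition, and by the projection theorem it is the unique such decomposition of $f$.

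I do not expect a serious obstacle here, as the corollary is essentially a restatement of Theorem~\ref{Bmargin}; the only point requiring care is the consistency of the two projections — that the orthogonal projections of $f$ onto $\mathcal{B}^2_{ind}(\Omega)$ and onto its complement perturb back to exactly $f$ — which rests on $\mathcal{B}^2_{ind}(\Omega)$ being a closed subspace, as is needed for the Hilbert space projection theorem to apply and as is already implicit in Theorem~\ref{Bmargin}. Once closedness is granted, both the uniqueness and the orthogonality of the decomposition follow automatically.
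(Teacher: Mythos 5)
Your argument is correct and is exactly the intended derivation: the identity $f=f_{ind}\oplus f_{int}$ is immediate from $f_{int}=f\ominus f_{ind}$ and the group structure of $(\mathcal{B}^2(\Omega),\oplus)$, and orthogonality follows because Theorem~\ref{Bmargin} places $f_{ind}$ and $f_{int}$ in mutually orthogonal (closed) subspaces. The paper itself gives no proof --- the corollary is quoted from \cite{genest22} as a direct consequence of Theorem~\ref{Bmargin} --- so there is nothing to compare against beyond noting that your reasoning is the standard one-line application of the Hilbert-space projection theorem that the citation presupposes.
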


Orthogonal decomposition in ${\cal B}^{2}(\Omega)$ is a strong theoretical result for the study of the dependence structure. However, since the decomposition is defined in ${\cal B}^{2}(\Omega)$, it prevents direct use of standard (functional data) methods embedded in the $L^{2}$ space, which would have to be adjusted for Bayes spaces. It is, therefore, highly desirable to perform the dependence analysis (or any related FDA analysis) in the $L^{2}$ space. This is possible using the \textit{clr transformation} \cite{boogaart14}, which represents a one-to-one transformation between ${\cal B}^{2}(\Omega)$ and
$$
L^{2}_{0}(\Omega) = \left\{ h:\Omega\rightarrow\mathbb{R}\ \Big\vert\ \iint_{\Omega} h^{2}(x,y)\,\mbox{d}x\,\mbox{d}y < \infty,\ \iint_{\Omega} h(x,y)\,\mbox{d}x\,\mbox{d}y = 0\right\},
$$
defined for every $f\in{\cal B}^{2}(\Omega)$ as
$$
f^{c} = \mbox{clr}\left(f\right) = \ln f - \frac{1}{(d-c)\,(b-a)}\iint_{\Omega}\ln f(x,y)\,\mbox{d}x\,\mbox{d}y.
$$
As a result,
$$
\iint_{\Omega}f^{c}(x,y)\,\mbox{d}x\,\mbox{d}y = 0;
$$
moreover, for $f,\,g\in{\cal B}^{2}(\Omega)$ and $\alpha\in\mathbb{R}$ it follows that
$$
\mbox{clr}(f\oplus g) = \mbox{clr}(f) + \mbox{clr}(g),\quad \mbox{clr}(\alpha\odot f) = \alpha \cdot \mbox{clr}(f).
$$
Consequently, geometric marginals can be expressed in $L^{2}_{0}(\Omega)$ using the following assertions.
\\
\begin{theorem}[\cite{genest22}] \label{clrmarg}
Given any $f\in{\cal B}^{2}(\Omega)$ with geometric marginals $f_{1}$ and $f_{2}$, one has
$$
f_{1}^{c}(x,y) = \frac{1}{
d-c}\int_c^d f^c(x,t)\,\textnormal{d}t,\quad f_{2}^{c}(x,y) = \frac{1}{ b-a}\int_a^b f^{c}(t,y)\,\textnormal{d}t.
$$
\end{theorem}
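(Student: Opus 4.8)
The plan is to prove the identity by directly substituting the definitions of the geometric marginal $f_1$ and of the clr transformation and checking that both sides reduce to the same expression; the computation for $f_2$ then follows by symmetry, interchanging the roles of the two coordinates and their domains. A conceptually appealing alternative, which I would mention but not pursue in detail, is to invoke Theorem~\ref{Bmargin}: since $\mathrm{clr}$ is an isometric isomorphism from $\mathcal{B}^2(\Omega)$ onto $L^2_0(\Omega)$, it maps the orthogonal projection $f_1$ of $f$ onto $\mathcal{B}^2_{\{1\}}(\Omega)$ to the orthogonal projection of $f^c$ onto $\mathrm{clr}\big(\mathcal{B}^2_{\{1\}}(\Omega)\big)$, which one then identifies with averaging in the $y$-variable. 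The direct route, however, is more transparent.

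First I would fix a representative of $f_1$ and write $\ln f_1(x) = \frac{1}{d-c}\int_c^d \ln f(x,t)\,\mathrm{d}t$. The equality $=_{\mathcal{B}^2(\Omega)}$ in the definition of $f_1$ is only up to a multiplicative constant, i.e.\ up to an additive constant in $\ln f_1$; since the clr transformation subtracts the mean over $\Omega$, this ambiguity is annihilated and the choice of representative is immaterial. By definition,
\begin{equation*}
f_1^c(x,y) = \ln f_1(x) - \frac{1}{(d-c)(b-a)}\iint_\Omega \ln f_1(x,y)\,\mathrm{d}x\,\mathrm{d}y.
\end{equation*}

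The key step is to evaluate the normalizing constant. Because $\ln f_1$ does not depend on $y$, the inner integration over $[c,d]$ contributes a factor $(d-c)$, and after inserting the definition of $f_1$ the constant collapses to the global log-mean
\begin{equation*}
M := \frac{1}{(d-c)(b-a)}\iint_\Omega \ln f(x,t)\,\mathrm{d}x\,\mathrm{d}t,
\end{equation*}
so that $f_1^c(x,y) = \frac{1}{d-c}\int_c^d \ln f(x,t)\,\mathrm{d}t - M$. On the other hand, using $f^c(x,t) = \ln f(x,t) - M$ together with $\frac{1}{d-c}\int_c^d M\,\mathrm{d}t = M$, the right-hand side of the claimed identity equals $\frac{1}{d-c}\int_c^d \ln f(x,t)\,\mathrm{d}t - M$ as well. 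Comparing the two expressions proves the first equality; the second follows verbatim with $a,b$ in place of $c,d$ and the integration carried out in the first coordinate.

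The proof is essentially bookkeeping, and the only point requiring care — the main (minor) obstacle — is confirming that the two distinct normalizing constants present in the construction, namely the factor $\frac{1}{d-c}$ inside the marginal and the global factor $\frac{1}{(d-c)(b-a)}$ in clr, combine so that the constant appearing in $f_1^c$ is exactly the same global mean $M$ that is subtracted from $\ln f$ to form $f^c$. This is precisely what makes the marginal averaging commute with the clr map.
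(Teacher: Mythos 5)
Your proof is correct: the direct computation showing that the clr normalizing constant of $f_1$ collapses to the global log-mean $M$, so that $f_1^c$ and the $y$-average of $f^c$ agree, is exactly the right argument, and your remark that the additive ambiguity in $\ln f_1$ is annihilated by the clr centering properly disposes of the only subtlety. Note that the paper itself does not prove this statement — it is quoted as Theorem~\ref{clrmarg} from the reference \cite{genest22} — so there is no in-paper proof to compare against; your verification is the standard one and is complete as written.
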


\begin{theorem}[\cite{genest22}] \label{clrrepr}
The clr transformed independent and interactive parts of a PDF $f$ can be expressed as
$$
f_{ind}^{c} = f_{1}^{c} + f_{2}^{c},\quad f_{int}^{c} = f^{c} - f_{1}^{c} - f_{2}^{c},
$$
and, finally,
$$
f^{c} = f_{1}^{c} + f_{2}^{c} + f_{int}^{c} = f_{ind}^{c} + f_{int}^{c}.
$$
\end{theorem}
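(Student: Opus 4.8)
The plan is to derive all three identities directly by pushing the Bayes-space decomposition through the clr transformation, exploiting that clr turns perturbation $\oplus$ into pointwise addition and powering $\odot$ into scalar multiplication. Because clr is a one-to-one linear map from ${\cal B}^2(\Omega)$ onto $L^2_0(\Omega)$, every equality holding in ${\cal B}^2(\Omega)$ passes term by term to an equality of functions in $L^2_0(\Omega)$, so the whole argument reduces to transporting Corollary~\ref{Bdecomp} across this isomorphism.

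First I would treat the independent part. Since $f_{ind}=f_1\oplus f_2$ by the definition in Theorem~\ref{Bmargin}, applying clr together with the additivity property $\mbox{clr}(f\oplus g)=\mbox{clr}(f)+\mbox{clr}(g)$ gives at once
\begin{equation*}
f_{ind}^{c}=\mbox{clr}(f_1)+\mbox{clr}(f_2)=f_1^{c}+f_2^{c},
\end{equation*}
where $f_1^{c}$ and $f_2^{c}$ are precisely the clr images of the geometric marginals, whose explicit integral form is recorded in Theorem~\ref{clrmarg}.

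Next I would handle the interactive part. Writing $f_{int}=f\ominus f_{ind}=f\oplus[(-1)\odot f_{ind}]$ and using both homomorphism rules of clr yields $f_{int}^{c}=f^{c}+(-1)\cdot f_{ind}^{c}=f^{c}-f_1^{c}-f_2^{c}$. The closing identity $f^{c}=f_1^{c}+f_2^{c}+f_{int}^{c}=f_{ind}^{c}+f_{int}^{c}$ is then a mere rearrangement, and equivalently follows by applying clr directly to the full decomposition $f=f_1\oplus f_2\oplus f_{int}$ of Corollary~\ref{Bdecomp}.

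I do not expect a genuine obstacle here: the entire derivation is bookkeeping with the two homomorphism properties, and the substantive content (the orthogonal splitting itself) has already been supplied by Corollary~\ref{Bdecomp}. The only mild point worth confirming is consistency, namely that the $f_1^{c}$, $f_2^{c}$ appearing above coincide with the integral expressions of Theorem~\ref{clrmarg}; but this is immediate, since both denote $\mbox{clr}(f_1)$ and $\mbox{clr}(f_2)$. If desired, one could additionally remark that the additive splitting $f^{c}=f_{ind}^{c}+f_{int}^{c}$ inherits orthogonality in $L^2_0(\Omega)$ because clr preserves the inner product, though this is not needed for the stated equalities.
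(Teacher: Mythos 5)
Your derivation is correct: all three identities follow immediately from applying the clr homomorphism properties $\mbox{clr}(f\oplus g)=\mbox{clr}(f)+\mbox{clr}(g)$ and $\mbox{clr}(\alpha\odot f)=\alpha\cdot\mbox{clr}(f)$ to the definitions $f_{ind}=f_1\oplus f_2$ and $f_{int}=f\ominus f_{ind}$ from Theorem~\ref{Bmargin} and to the decomposition of Corollary~\ref{Bdecomp}. The paper itself states this result without proof, citing \cite{genest22}, and your argument is exactly the standard one-line transport of the Bayes-space decomposition through the clr isomorphism, so there is nothing to add.
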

In addition, $f^{c}$ (as well as the other functions from Theorem \ref{clrrepr}) can be transformed from $L_{0}^{2}(\Omega)$ back to the original Bayes space ${\cal B}^{2}(\Omega)$ using the inverse clr transformation defined as
\begin{equation} \label{iclr}
\mbox{clr}^{-1}\left(f^c(x,y))\right) =_{{\cal B}^{2}(\Omega)} \exp(f^c(x,y)).
\end{equation}

\section{Spline representation of bivariate clr transformed densities} \label{twodim}
A standard spline approximation of transformed PDFs does not guarantee integration of the resulting spline function to zero, as required. Therefore, we introduce a new spline basis in $L_0^2$ having the desired property. We start with a summary of the univariate case in order to introduce new suitable notation, and then introduce our novel  construction of a bivariate spline basis for PDF representation. Finally, we show that the resulting spline representation allows the same decomposition as PDFs into independent and interactive parts.


\subsection{Bivariate spline representation} \label{bivariaterepr}
In this subsection, we first briefly review the bivariate tensor product $B$-spline representation, which will be needed for further developments. Let $\Omega = [a,b]\times[c,d]\subset \mathbb{R}^2$ be given and let us focus on the construction of the bivariate tensor product spline $s_{kl}(x,y)$. We select two increasing sequences of knots
\begin{align*}
  \Delta\lambda\, = \, \{\lambda_i\}_{i=0}^{g+1}, \quad  a=\lambda_{0}<\lambda_{1}<\ldots<\lambda_{g}<\lambda_{g+1}=b,\\
  \Delta\mu \, = \, \{\mu_j\}_{j=0}^{h+1}, \quad c=\mu_{0}<\mu_{1}<\ldots<\mu_{h}<\mu_{h+1}=d
\end{align*}
and denote the vector space of tensor product splines ${\cal S}_{kl}^{\Delta\lambda,\Delta\mu}(\Omega)$ defined on $\Omega$ of degree $k\in \mathbb{N}_0$ in $x$ with knots $\Delta\lambda$ and of degree $l\in \mathbb{N}_0$ in $y$ with knots $\Delta\mu$. From spline theory \cite{Dierckx,deboor}, it is known that
$$
\dim\left({\cal S}_{kl}^{\Delta\lambda,\Delta\mu}(\Omega)\right)\, = \, (g+k+1)\,(h+l+1).
$$
For the construction of all basis functions of the space ${\cal S}_{kl}^{\Delta\lambda,\Delta\mu}(\Omega)$, it is necessary to introduce additional knots \cite{Dierckx}. 
Without loss of generality, we can add coincident knots and denote the extended sequences of knots as
\noindent
\mbox{$\Delta\Lambda = \{\lambda_i\}_{i=-k}^{g+k+1},\, \Delta\text{M} =\{ \mu_j\}_{j=-l}^{h+l+1}$},
\begin{align}
\lambda_{-k}&=\cdots=\lambda_{0}=a <\lambda_{1}<\ldots<\lambda_{g}<
   b=\lambda_{g+1}=\cdots=\lambda_{g+k+1}\label{exknotsX}, \\
\mu_{-l}&=\cdots=\mu_{0}=c <\mu_{1}<\ldots<\mu_{h}<d=\mu_{h+1}=\cdots=\mu_{h+l+1} \label{exknotsY},
\end{align}
and construct $B$-splines $B_{i}^{k+1}(x)$, $i=-k,\dots,g$ 
of degree $k$ in variable $x$ and $B$-splines $B_{j}^{l+1}(y)$, $j=-l,\dots,h$ of degree $l$ in $y$. The basis of the vector space ${\cal S}_{kl}^{\Delta\lambda,\Delta\mu}(\Omega)$ is formed by functions $B_{ij}^{k+1,l+1}(x,y)$ defined as the products of univariate $B$-splines, 
\begin{equation}\label{basis_prod}
B_{ij}^{k+1,l+1}(x,y) = B_{i}^{k+1}(x)B_{j}^{l+1}(y),\quad i=-k,\dots,g,\ j=-l,\dots,h,
\end{equation}
see \cite{Dierckx}. Consequently, every tensor product spline $s_{kl}(x,y)\in{\cal S}_{kl}^{\Delta\lambda,\Delta\mu}(\Omega)$ has a unique representation
\begin{equation*}
s_{kl}(x,y) \; = \; \sum\limits_{i=-k}^{g} \sum\limits_{j=-l}^{h} \, b_{ij} \, B_{ij}^{k+1,l+1}(x,y),
\end{equation*}
where $b_{ij}$ are its spline coefficients. Let $\mathbf{B}_{k+1}(x)= \left(B_{-k}^{k+1}(x),\dots,B_g^{k+1}(x)\right)^{\top}$ and $\mathbf{B}_{l+1}(y)= \left(B_{-l}^{l+1}(y),\dots,B_h^{l+1}(y)\right)^{\top}$. Then the spline representation can be written in matrix notation as
\begin{equation*}
s_{kl}(x,y) \; = \; \mathbf{B}_{k+1}^{\top}(x)\, \mathbf{B} \, \mathbf{B}_{l+1}(y),
\end{equation*}
where $\mathbf{B}=\left(b_{ij}\right)_{i=-k,j=-l}^{g,h}$ is a matrix of the spline coefficients. In the case of clr transformed bivariate densities, the corresponding spline is desired to fulfill the zero integral constraint. This could be achieved by additional constraints on spline coefficients as done in \citep{hron22} or directly by introducing a spline basis in $L^2_0(\Omega)$. As the former approach has several disadvantages, we proceed to the construction of a suitable basis in the next section.


\subsection{Spline representation in $L_0^2(\Omega)$} \label{ZBconstruct}
The idea of this subsection is to develop a spline basis for representation of clr transformed bivariate densities that fulfill zero integral constraints. Considering the given domain $\Omega$, extended sequences of knots \eqref{exknotsX} and \eqref{exknotsY}, and degrees of spline $k,\ l$ in variables $x,\ y$, respectively, we define the vector space ${\cal Z}_{kl}^{\Delta\lambda, \Delta\mu}(\Omega)$ as a subspace of splines from ${\cal S}_{kl}^{\Delta\lambda, \Delta\mu}(\Omega)$ having zero integral. In order to build its basis, we first consider the univariate case.

Let ${\cal S}_{k}^{\Delta\lambda}([a,b])$ be the vector space of univariate splines $s_{k}(x)$ of degree $k$ on~the~interval $[a,b]$ with the extended sequence of knots \eqref{exknotsX}. It is well known that $\dim\left({\cal S}_{k}^{\Delta\lambda}[a,b]\right) = g+k+1$ and $B$-splines $B_{-k}^{k+1}(x),\,\dots,\,B_{g}^{k+1}(x)$ form a basis of~${\cal S}_{k}^{\Delta\lambda}([a,b])$, see \cite{Dierckx}. A basis of the vector subspace ${\cal Z}_{k}^{\Delta\lambda}([a,b])\subset{\cal S}_{k}^{\Delta\lambda}([a,b])$ is composed of $Z\!B$-splines defined in \cite{compositional} as
$$
Z_{i}^{k+1}(x) = \frac{\mbox{d}}{\mbox{d}x} B_{i}^{k+2}(x),\quad i=-k,\dots,g-1.
$$
When coincident knots \eqref{exknotsX} are considered, $Z\!B$-splines have zero integral, i.e.
\begin{equation}\label{zeroInt}
\int_a^b Z_{i}^{k+1}(x)\,\mbox{d}x \; = \; 0,
\end{equation}
which is a crucial and most desirable property of basis splines for a suitable spline representation in ${\cal Z}_{k}^{\Delta\lambda}([a,b])$. 
Moreover, $Z\!B$-splines can be expressed using $B$-splines as
$$
Z_{i}^{k+1}(x) = (k+1)\left(\frac{B_{i}^{k+1}(x)}{\lambda_{i+k+1}-\lambda_{i}} - \frac{B_{i+1}^{k+1}(x)}{\lambda_{i+k+2}-\lambda_{i+1}}\right),\quad i=-k,\dots,g-1,\\
$$
which can be formulated in matrix notation as
\begin{equation} \label{ZBX_to_B}
\mathbf{Z}_{k+1}(x)\,=\,\mathbf{K}_{g k}\,\mathbf{D}_{\lambda}\,\mathbf{B}_{k+1}(x),
\end{equation}
where $\mathbf{Z}_{k+1}(x)= \left(Z_{-k}^{k+1}(x),\dots,Z_{g-1}^{k+1}(x)\right)^{\top}$ 
and 
\begin{equation} \label{matrix_Dx}
\mathbf{D}_{\lambda}\,=\,(k+1)\left[\mbox{diag}\left(\mathbf{d}_{\lambda}\right)\right]^{-1}
\end{equation}
with vector $\mathbf{d}_{\lambda} = (\lambda_{1} - \lambda_{-k},\ \dots\ ,\,\lambda_{g+k+1} - \lambda_{g})$. Matrix $\mathbf{K}_{ts}=(k_{ij}^{ts})\in\mathbf{R}^{t+s,t+s+1}$ is defined as a matrix with elements
\begin{equation}\label{matrixK}
k_{ij}^{ts} \, = \, 
\begin{cases}
     1 & \mbox{if} \; i=j, \; j=1,\ldots,t+s,\\
    -1 & \mbox{if} \; i=j-1, \; j=2,\ldots,t+s+1,\\
     0 & \mbox{otherwise}.
\end{cases}
\end{equation}
More details on $Z\!B$-splines can be found in \cite{compositional}.
As already mentioned, $B$-splines $B_{-k}^{k+1}(x),\,\dots,\,B_{g}^{k+1}(x)$ form a basis of ${\cal S}_{k}^{\Delta\lambda}([a,b])$, therefore, if we multiply the vector $\mathbf{B}_{k+1}(x)$ with an arbitrary regular transformation matrix, we get a~vector of splines which are linearly independent  and form a basis in ${\cal S}_{k}^{\Delta\lambda}([a,b])$. Consider the full-rank transformation matrix
\begin{equation} \label{matrixT_lambda}
\mathbf{T}_{\lambda} = \begin{pmatrix} \mathbf{K}_{gk}\,\mathbf{D}_{\lambda} \\ \mathbf{e}_{g+k+1}^{\top} \end{pmatrix}\in\mathbb{R}^{(g+k+1)\times(g+k+1)},
\end{equation}
where matrices $\mathbf{D}_{\lambda}$ and $\mathbf{K}_{gk}$ are defined in \eqref{matrix_Dx} and \eqref{matrixK}, respectively, i.e.
$$
\mathbf{T}_{\lambda} = \,\begin{pmatrix} (k+1)\left(\frac{1}{\lambda_{1}-\lambda_{k}} - \frac{1}{\lambda_{2}-\lambda_{k+1}}\right) & 0 & \dots  & 0 \\ 
0 & (k+1)\left(\frac{1}{\lambda_{2}-\lambda_{k+1}} - \frac{1}{\lambda_{3}-\lambda_{k+2}}\right) & \dots & 0 \\
\vdots & \vdots & \ddots & \vdots\\
0 & 0 & \dots & (k+1)\left(\frac{1}{\lambda_{g+k}-\lambda_{g-1}} - \frac{1}{\lambda_{g+k+1}-\lambda_{g}}\right) \\
1 & 1 & \dots & 1 \end{pmatrix}.
$$
The symbol $\mathbf{e}_{n}$ stands for vector $\mathbf{e}_{n} = \left(1,\,\dots,\,1\right)^{\top}\in\mathbb{R}^{n}$. Then the vector
\begin{equation*}
\overline{\mathbf{Z}}_{k+1}(x) = \mathbf{T}_{\lambda}\,\mathbf{B}_{k+1}(x)
\end{equation*}
forms a new basis of ${\cal S}_{k}^{\Delta\lambda}([a,b])$. Considering \eqref{ZBX_to_B} and
$$
\mathbf{e}_{g+k+1}^{\top}\,\mathbf{B}_{k+1}(x) = \sum_{i=-k}^{g} B_{i}^{k+1}(x) = 1,
$$
the vector $\overline{\mathbf{Z}}_{k+1}(x)$ can also be written as
\begin{equation}\label{transf_basis_x_2}
\overline{\mathbf{Z}}_{k+1}(x) = \left(Z_{-k}^{k+1}(x),\,\dots,\,Z_{g-1}^{k+1}(x),\,1\right)^{\top} = \left(\mathbf{Z}_{k+1}^{\top}(x),1\right)^{\top}.
\end{equation}
Since the functions
$Z_{-k}^{k+1}(x),\,\dots,\,Z_{g-1}^{k+1}(x)$ are linearly independent splines having zero integral, they form a basis of ${\cal Z}_{k}^{\Delta\lambda}([a,b])$. Clearly,
$$
\dim\left({\cal Z}_{k}^{\Delta\lambda}([a,b])\right) = g+k.
$$
Similarly, a basis of the vector space ${\cal S}_{l}^{\Delta\mu}([c,d])$ of splines $s_{l}(y)$ of degree $l$ defined on the interval $[c,d]$ with extended sequence of knots \eqref{exknotsY} is generated with $B$-spline basis
$B_{-l}^{l+1}(y),\,\dots,\,B_{h}^{l+1}(y)$
and $\dim({\cal S}_{l}^{\Delta\mu}([c,d])) = h+l+1$. Analogously to above, we define $ZB$-splines $Z_{j}^{l+1}(y),\ j=-l,\,\dots,\,h-1$, which can be expressed in terms of $B$-splines in matrix notation as
\begin{equation}\label{ZBY_to_B}
\mathbf{Z}_{l+1}(y) = \mathbf{K}_{hl}\,\mathbf{D}_{\mu}\,\mathbf{B}_{l+1}(y),
\end{equation}
where $\mathbf{K}_{hl}$ is defined in \eqref{matrixK} and
\begin{equation}\label{matrix_Dy}
\mathbf{D}_{\mu}=(l+1)[\textnormal{diag}(\mathbf{d}_{\mu})]^{-1}
\end{equation}
with $\mathbf{d}_{\mu}~=~(\mu_{1} - \mu_{-l},\,\dots\, ,\,\mu_{h+l+1}-\mu_{h})$. 
Transformation of a vector of $B$-splines $\mathbf{B}_{l+1}(y)$ with matrix
\begin{equation} \label{matrixT_mu}
\mathbf{T}_{\mu} = \begin{pmatrix} \mathbf{K}_{hl}\,\mathbf{D}_{\mu} \\ \mathbf{e}_{h+l+1}^{\top} \end{pmatrix}\in\mathbb{R}^{(h+l+1)\times(h+l+1)}
\end{equation}
implies that the vector
\begin{equation}\label{transf_basis_y}
\overline{\mathbf{Z}}_{l+1}(y)=\mathbf{T}_{\mu}\,\mathbf{B}_{l+1}(y) = \left(Z_{-l}^{l+1}(y),\,\dots,\,Z_{h-1}^{l+1}(y),\,1\right)^{\top} = \left(\mathbf{Z}_{l+1}^{\top}(y),1\right)^{\top}
\end{equation}
is a vector of linearly independent splines, 
which form a basis of ${\cal S}_{l}^{\Delta\mu}([c,d])$. Consequently, splines 
$Z_{-l}^{l+1}(y),\,\dots,\,Z_{h-1}^{l+1}(y)$ form a basis of ${\cal Z}_{l}^{\Delta\mu}([c,d])$. In addition,
$$
\dim({\cal Z}_{l}^{\Delta\mu}([c,d])) = h+l.
$$

By considering the introduced notation, we can now proceed to extend this approach to the bivariate case. We use products of univariate $B$-splines defined in \eqref{basis_prod} as basis functions of the vector space ${\cal S}_{kl}^{\Delta\lambda,\Delta\mu}(\Omega)$. By transforming them with matrices $\mathbf{T}_{\lambda}$ and $\mathbf{T}_{\mu}$ defined in \eqref{matrixT_lambda} and \eqref{matrixT_mu}, 
we obtain a matrix of functions
\begin{align*}
\overline{\mathbf{Z}}_{k+1,l+1}(x,y) & =
\overline{\mathbf{Z}}_{k+1}(x)\,\overline{\mathbf{Z}}_{l+1}^{\top}(y)
= \begin{pmatrix}
Z_{-k,-l}^{k+1,l+1}(x,y) & \dots & Z_{-k,h-1}^{k+1,l+1}(x,y) & Z_{-k}^{k+1}(x,y) \\[0.3cm]
\vdots & \ddots & \vdots & \vdots \\[0.3cm]
Z_{g-1,-l}^{k+1,l+1}(x,y) & \dots & Z_{g-1,h-1}^{k+1,l+1}(x,y) & Z_{g-1}^{k+1}(x,y) \\[0.3cm]
Z_{-l}^{l+1}(x,y) & \dots & Z_{h-1}^{l+1}(x,y) & 1
\end{pmatrix},
\end{align*}
which are linearly independent bivariate splines and form a basis of ${\cal S}_{kl}^{\Delta\lambda,\Delta\mu}(\Omega)$. 
Here we used the notation
\begin{equation}\label{ZBprod}
Z_{ij}^{k+1,l+1}(x,y) \; = \; Z_{i}^{k+1}(x)\,Z_{j}^{l+1}(y)
\end{equation}
and
\begin{equation}\label{ConstZ}
Z_{i}^{k+1}(x,y) \, = \, Z_{i}^{k+1}(x),\ \qquad
Z_{j}^{l+1}(x,y) \, = \, Z_{j}^{l+1}(y)
\end{equation}
for $i=-k,\dots,g-1$, $j=-l,\dots,h-1$. Considering the zero integral constraint, definitions \eqref{ZBprod}, \eqref{ConstZ} and relation \eqref{zeroInt} yield that
\begin{equation*}
\iint\limits_{\Omega} Z_{ij}^{k+1,l+1}(x,y) \, \mbox{d}x\,\mbox{d}y\, = \, \int_{a}^{b} Z_{i}^{k+1}(x)\,\mbox{d}x \,  \int_c^d Z_{j}^{l+1}(y) \, \mbox{d}y \, = \, 0,
\end{equation*}
and similarly
\begin{equation*}
\iint\limits_{\Omega} Z_{i}^{k+1}(x,y) \, \mbox{d}x\,\mbox{d}y\, = 
\iint\limits_{\Omega} Z_{j}^{l+1}(x,y) \, \mbox{d}x\,\mbox{d}y\,  =  \, 0.
\end{equation*}
This allows us to summarize the results in the following theorem. 
\\
\begin{theorem}\label{basisJM}
The functions $Z_{ij}^{k+1,l+1}(x,y)$, $ Z_{i}^{k+1}(x,y)$ and $Z_{j}^{l+1}(x,y)$, for $i=-k,\dots,g-1$, $j=-l,\dots,h-1$ defined on the extended sequences of knots (\ref{exknotsX}), (\ref{exknotsY}) form a basis of the space ${\cal Z}_{kl}^{\Delta\lambda,\Delta\mu}(\Omega)$, where
$$
\dim\left({\cal Z}_{kl}^{\Delta\lambda,\Delta\mu}(\Omega)\right) = (g+k+1)(h+l+1)-1 = (g+k)(h+l)+g+k+h+l.
$$
\end{theorem}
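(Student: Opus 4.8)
The plan is to realize the listed functions as precisely those elements of a suitable basis of the full tensor-product space ${\cal S}_{kl}^{\Delta\lambda,\Delta\mu}(\Omega)$ that have vanishing integral, and then to show that these span exactly the zero-integral subspace ${\cal Z}_{kl}^{\Delta\lambda,\Delta\mu}(\Omega)$.

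First I would establish that the full collection of entries of $\overline{\mathbf{Z}}_{k+1,l+1}(x,y)=\overline{\mathbf{Z}}_{k+1}(x)\,\overline{\mathbf{Z}}_{l+1}^{\top}(y)$ forms a basis of ${\cal S}_{kl}^{\Delta\lambda,\Delta\mu}(\Omega)$. Writing the family in matrix form as $\mathbf{T}_{\lambda}\,\mathbf{B}_{k+1}(x)\,\mathbf{B}_{l+1}^{\top}(y)\,\mathbf{T}_{\mu}^{\top}$, one sees that it is obtained from the tensor-product $B$-spline basis $\{B_{ij}^{k+1,l+1}\}$ by the invertible linear recombination $\mathbf{T}_{\lambda}\otimes\mathbf{T}_{\mu}$ acting on the coefficient matrix; since $\mathbf{T}_{\lambda}$ and $\mathbf{T}_{\mu}$ are full rank (already noted in Section~\ref{ZBconstruct}), so is the Kronecker product, and the transformed family is again a basis of the same $(g+k+1)(h+l+1)$-dimensional space. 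Equivalently, this is the standard fact that pairwise products of a basis of ${\cal S}_{k}^{\Delta\lambda}([a,b])$ and a basis of ${\cal S}_{l}^{\Delta\mu}([c,d])$ form a basis of their tensor product.

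Next I would evaluate the integral functional $s\mapsto\iint_{\Omega}s\,\mbox{d}x\,\mbox{d}y$ on this basis. The three families $Z_{ij}^{k+1,l+1}$, $Z_{i}^{k+1}(x,y)$ and $Z_{j}^{l+1}(x,y)$ all integrate to zero, as computed in the excerpt by factorizing the double integral and invoking \eqref{zeroInt}; the only remaining entry, the bottom-right constant $1$, has integral $(b-a)(d-c)\neq 0$. Hence, expanding an arbitrary $s\in{\cal S}_{kl}^{\Delta\lambda,\Delta\mu}(\Omega)$ in this basis, the functional returns exactly $(b-a)(d-c)$ times the coefficient of the constant. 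Therefore $s\in{\cal Z}_{kl}^{\Delta\lambda,\Delta\mu}(\Omega)$ if and only if that single coefficient vanishes, which shows that ${\cal Z}_{kl}^{\Delta\lambda,\Delta\mu}(\Omega)$ is exactly the span of the three zero-integral families. Being a subfamily of a basis, these functions are linearly independent, so they form a basis; the dimension is then $(g+k+1)(h+l+1)-1$, and the second displayed equality is a routine expansion.

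I do not expect a serious obstacle here, as the statement mainly assembles results already prepared in the excerpt. The step deserving the most care is the basis claim of the first paragraph: one must confirm that the appended constant row $\mathbf{e}^{\top}$ in $\mathbf{T}_{\lambda}$ and $\mathbf{T}_{\mu}$ keeps these matrices invertible (so that the Kronecker recombination is genuine) and, crucially, that it is this appended row that produces the unique basis element---the constant---carrying nonzero integral. This design feature is exactly what reduces the zero-integral constraint to a single linear condition and thereby makes the final dimension count clean.
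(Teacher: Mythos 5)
Your proposal is correct and follows essentially the same route as the paper, which establishes the theorem through the preceding construction: transform the tensor-product $B$-spline basis by the full-rank matrices $\mathbf{T}_{\lambda}$ and $\mathbf{T}_{\mu}$ to obtain a new basis of ${\cal S}_{kl}^{\Delta\lambda,\Delta\mu}(\Omega)$ whose entries all have zero integral except the constant. You are in fact slightly more explicit than the paper on one point worth keeping: the observation that the integral functional vanishes on every basis element except the constant, so that ${\cal Z}_{kl}^{\Delta\lambda,\Delta\mu}(\Omega)$ is \emph{exactly} the span of the remaining elements (codimension one), which the paper leaves implicit when asserting the dimension.
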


\noindent Examples of the introduced basis functions can be seen in Figure \ref{basis_ZB}. Theorem~\ref{basisJM} can be used to construct the spline representation in the considered space ${\cal Z}_{kl}^{\Delta\lambda,\Delta\mu}(\Omega)$. Note that functions $Z_{ij}^{k+1,l+1}(x,y)$ have zero marginals by construction, which is a property shared by the clr representation of the interactive parts of densities \citep{hron22}. We will use this advantage in the sequel.

\begin{figure}[h]
    \centering
    \includegraphics[width=0.35\textwidth]{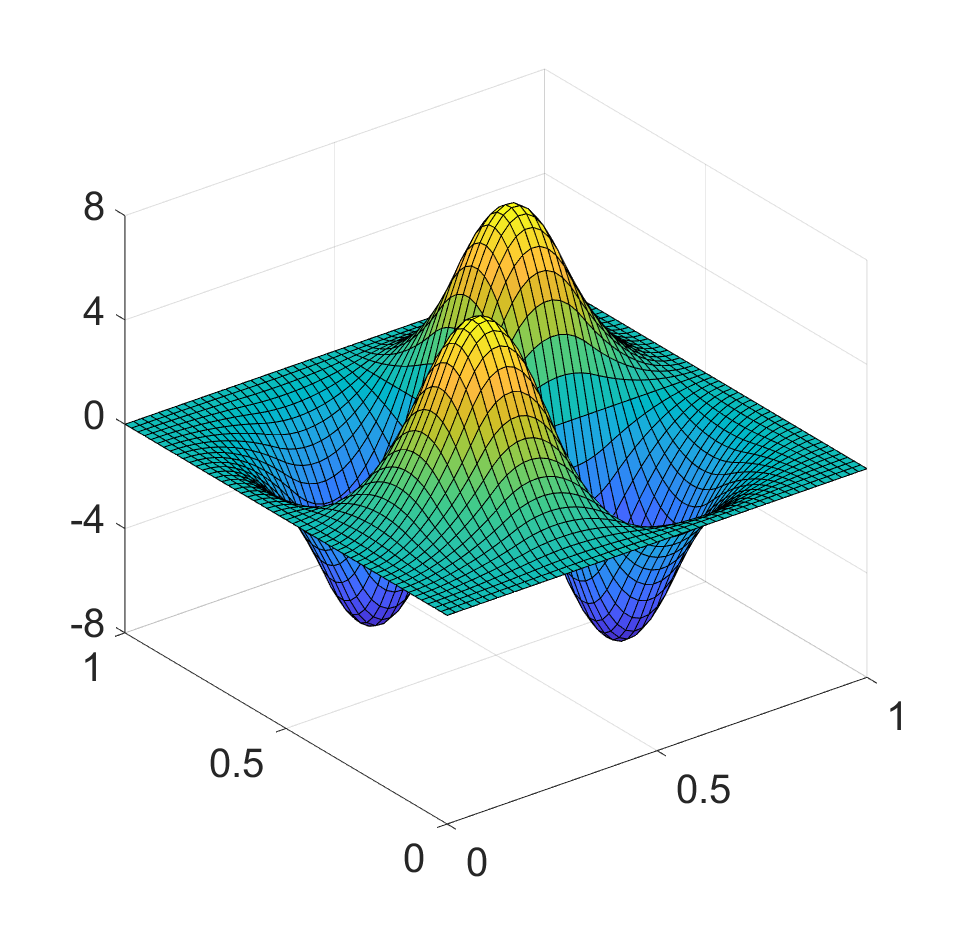}
    \includegraphics[width=0.35\textwidth]{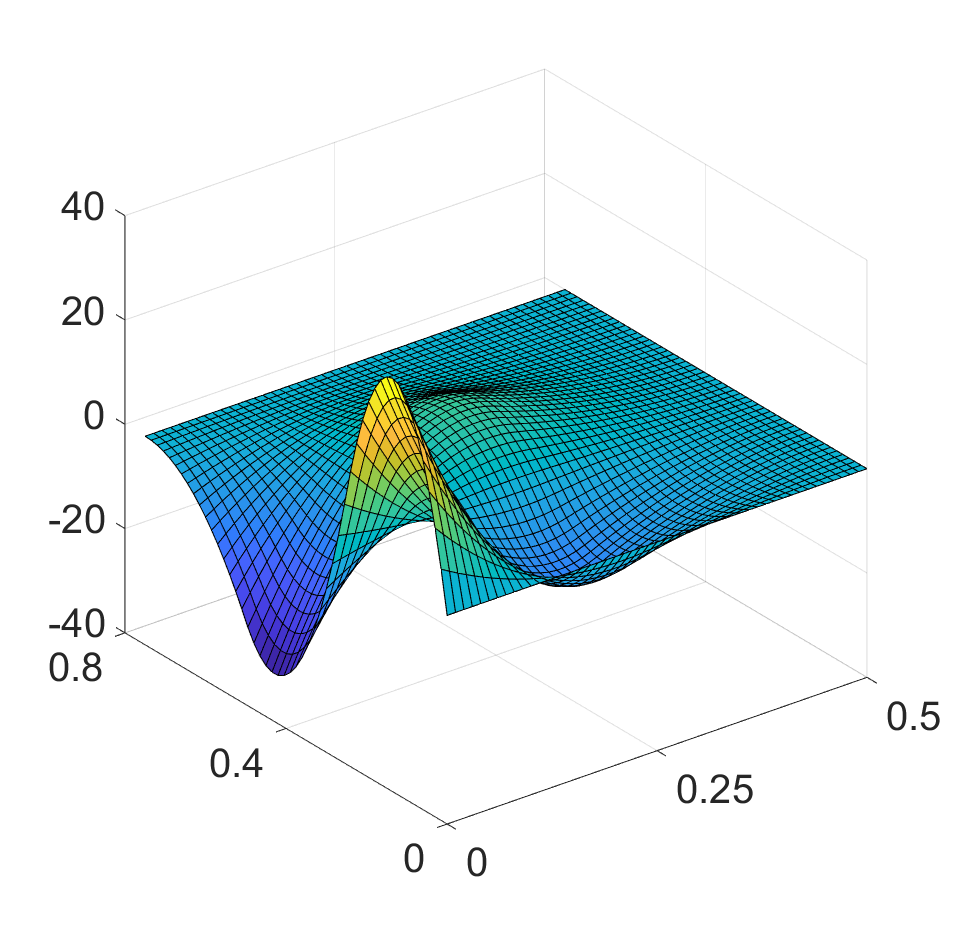}\\
    \includegraphics[width=0.35\textwidth]{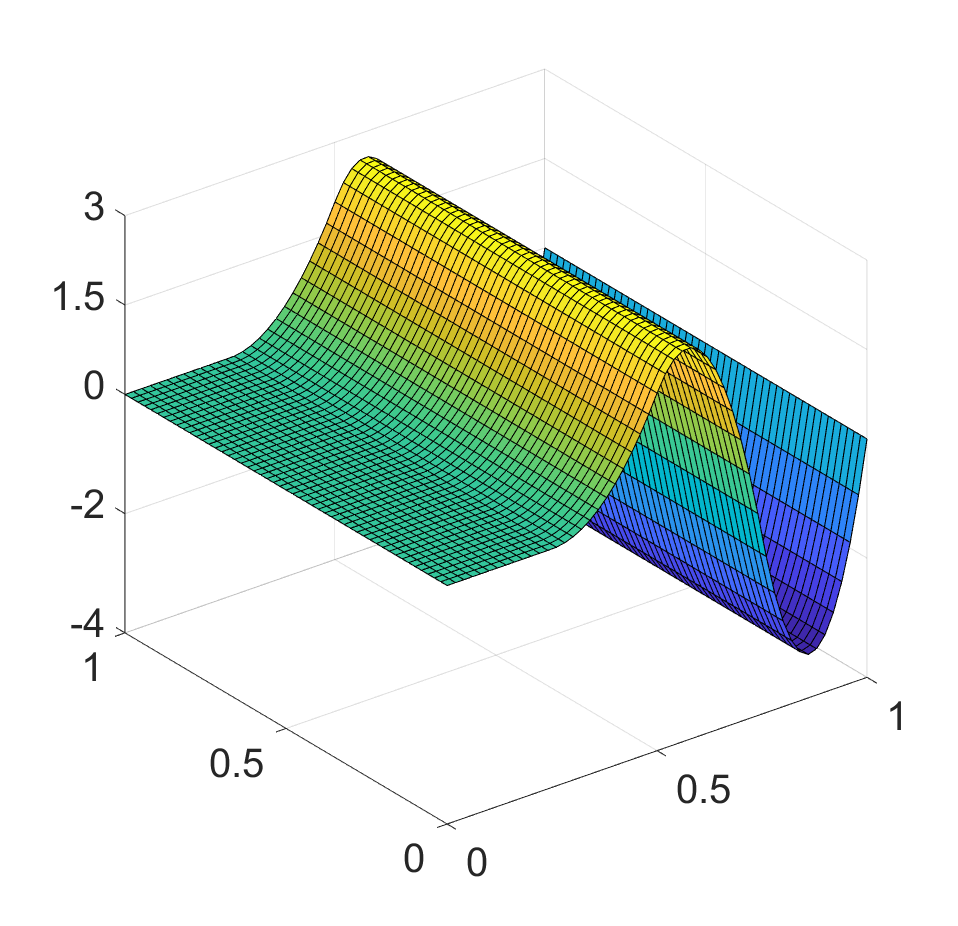}
    \includegraphics[width=0.35\textwidth]{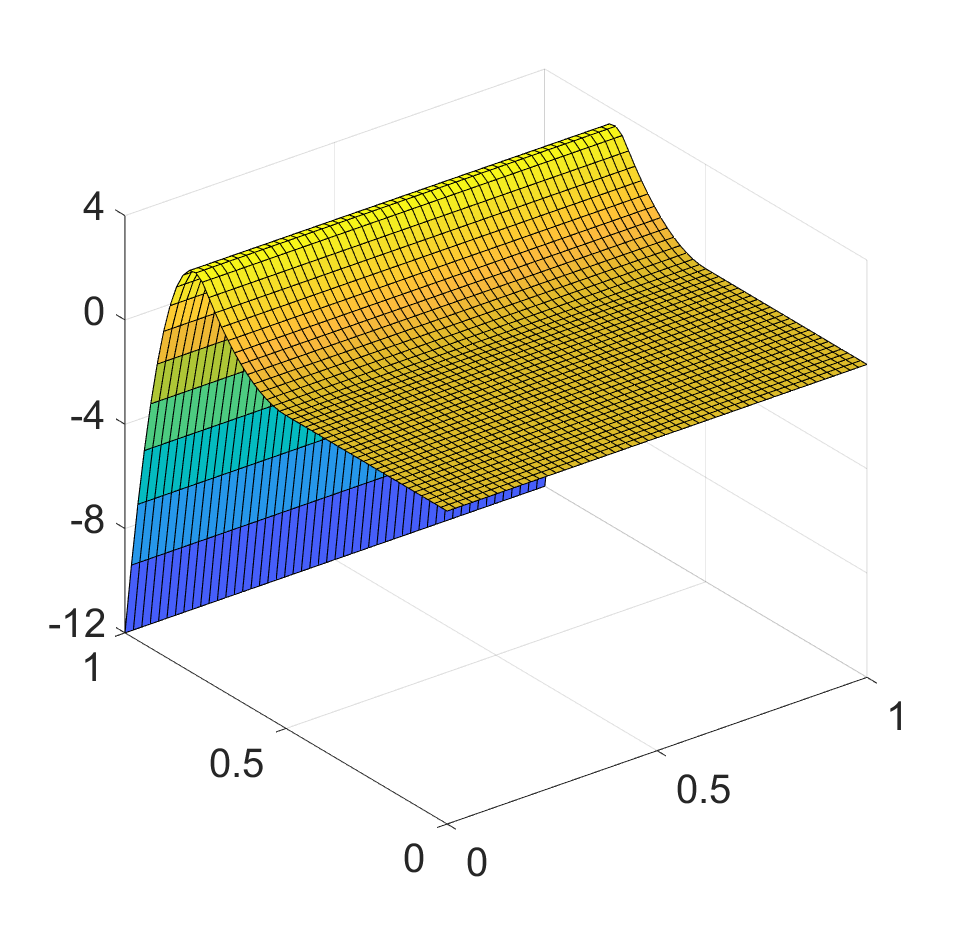}
    \caption{Two examples of  $Z\!B$-spline  basis functions $Z_{ij}^{k+1,l+1}(x,y)$ (top) and examples of  $Z\!B$-spline basis functions $Z_{i}^{k+1}(x,y)$ (bottom, left) and $Z_{j}^{l+1}(x,y)$ (bottom, right)}
    \label{basis_ZB}
\end{figure}

\begin{corollary}
Every spline $s_{kl}(x,y)\in{\cal Z}_{kl}^{\Delta\lambda,\Delta\mu}(\Omega)$ has a unique representation
\begin{equation}\label{ZB_repr_1}
s_{kl}(x,y) \, = \, \sum_{i=-k}^{g-1}\sum_{j=-l}^{h-1}z_{ij}\,Z_{ij}^{k+1,l+1}(x,y) + \sum_{i=-k}^{g-1}v_{i}\,Z_{i}^{k+1}(x,y) + \sum_{j=-l}^{h-1}\,u_{j}\,Z_{j}^{l+1}(x,y),
\end{equation}
where $z_{ij},v_{i},u_{j},\ i=-k,\dots,g-1,\ j=-l,\dots,h-1$ are corresponding spline coefficients.
\end{corollary}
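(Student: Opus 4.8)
The plan is to read the expansion \eqref{ZB_repr_1} directly off Theorem~\ref{basisJM}. That theorem states that the three families $Z_{ij}^{k+1,l+1}(x,y)$, $Z_{i}^{k+1}(x,y)$ and $Z_{j}^{l+1}(x,y)$ (over the ranges $i=-k,\dots,g-1$ and $j=-l,\dots,h-1$) jointly form a basis of the finite-dimensional space ${\cal Z}_{kl}^{\Delta\lambda,\Delta\mu}(\Omega)$. By the defining property of a basis, every element of the space is a linear combination of these functions and such a combination is unique; labelling the coefficients of the three families as $z_{ij}$, $v_{i}$ and $u_{j}$ yields exactly \eqref{ZB_repr_1}, so the corollary is the standard reformulation of the basis property as a unique-expansion statement.

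Concretely, I would split the argument into existence and uniqueness. For existence, the spanning property of the basis lets me write any $s_{kl}\in{\cal Z}_{kl}^{\Delta\lambda,\Delta\mu}(\Omega)$ as a linear combination of the listed functions; grouping the coefficients into the array $(z_{ij})$ and the vectors $(v_{i})$, $(u_{j})$ produces the three sums in \eqref{ZB_repr_1}. For uniqueness, I would take two coefficient collections yielding the same spline, subtract them, and invoke the linear independence from Theorem~\ref{basisJM} to conclude that the two collections agree. As a consistency check, the number of coefficients equals $(g+k)(h+l)+(g+k)+(h+l)=\dim({\cal Z}_{kl}^{\Delta\lambda,\Delta\mu}(\Omega))$, and one observes that the representation carries no constant term precisely because the constant function $1$, present in the full-space basis $\overline{\mathbf{Z}}_{k+1,l+1}(x,y)$, is the single element removed in passing to the zero-integral subspace.

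There is no genuine obstacle at this stage: the substantive work---verifying the zero-integral property, establishing linear independence through the regular transformation matrices $\mathbf{T}_{\lambda}$ and $\mathbf{T}_{\mu}$, and confirming the matching dimension count---has already been completed in Theorem~\ref{basisJM}. The corollary therefore reduces to a one-line appeal to that theorem combined with the elementary fact that a basis yields a unique representation of every vector.
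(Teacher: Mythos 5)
Your argument is correct and is exactly the route the paper takes: the corollary is stated as an immediate consequence of Theorem~\ref{basisJM}, with existence from the spanning property and uniqueness from linear independence, and the paper accordingly gives no separate proof. Your dimension-count consistency check matches the count $(g+k+1)(h+l+1)-1$ already recorded in that theorem.
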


\noindent
Spline representation \eqref{ZB_repr_1} can be rewritten using definitions \eqref{ZBprod}, \eqref{ConstZ} as
\begin{equation*}
s_{kl}(x,y) \, = \, \sum_{i=-k}^{g-1}\sum_{j=-l}^{h-1}z_{ij}\,Z_{i}^{k+1}(x)\, Z_{j}^{l+1}(y) + \sum_{i=-k}^{g-1}v_{i}\,Z_{i}^{k+1}(x) + \sum_{j=-l}^{h-1}\,u_{j}\,Z_{j}^{l+1}(y).
\end{equation*}
Let $\mathbf{Z}=\left(z_{ij}\right)_{i=-k, j=-l}^{g-1, h-1}$, $\mathbf{u}=(u_{-l},\dots,u_{h-1})^{\top}$ and $\mathbf{v}=(v_{-k},\dots,v_{g-1})^{\top}$. Then the spline $s_{kl}(x,y)\in{\cal Z}_{kl}^{\Delta\lambda\Delta\mu}(\Omega)$ can be expressed in matrix notation as
\begin{equation}\label{ZB_repr_2}
s_{kl}(x,y) = \overline{\mathbf{Z}}_{k+1}^{\top}(x)\,\mathbf{R}\,\overline{\mathbf{Z}}_{l+1}(y), 
\end{equation}
where
$$
\mathbf{R}= \begin{pmatrix} \mathbf{Z} & \mathbf{v} \\ \mathbf{u}^{\top} & 0 \end{pmatrix}
$$
is the matrix of spline coefficients and vectors $\overline{\mathbf{Z}}_{k+1}(x)$ and $\overline{\mathbf{Z}}_{l+1}(y)$ are expressed in \eqref{transf_basis_x_2} and \eqref{transf_basis_y}. Using a Kronecker tensor product, the spline $s_{kl}(x,y)$ can also be written as
\begin{equation*}
s_{kl}(x,y) = \left(\overline{\mathbf{Z}}_{l+1}^{\top}(y)\otimes\overline{\mathbf{Z}}_{k+1}^{\top}(x)\right)cs(\mathbf{R}),
\end{equation*}
where $cs(\mathbf{R})$ is the vectorized form of the matrix $\mathbf{R}$, columnwise.
However, the vectorized form of the matrix $\mathbf{R}$ is not always convenient in applications. Therefore, we express the spline in the form
\begin{equation}\label{ZB_repr_3}
s_{kl}(x,y) = \left(\overline{\mathbf{Z}}_{l+1}^{\top}(y)\otimes\overline{\mathbf{Z}}_{k+1}^{\top}(x)\right)\,\mathbf{P}\begin{pmatrix}cs(\mathbf{Z})\\ \mathbf{v}\\ \mathbf{u} \end{pmatrix},
\end{equation}
where $\mathbf{P}$ is a corresponding matrix composed of zeros and ones.

From a practical perspective, it is very convenient to express the bivariate spline $s_{kl}(x,y)\in{\cal Z}_{kl}^{\Delta\lambda,\Delta\mu}(\Omega)$ using a~\mbox{$B$-spline} basis, similarly as it is possible to obtain (univariate) $Z\!B$-splines in terms of $B$-splines using \eqref{ZBX_to_B} and \eqref{ZBY_to_B}. It enables to express $Z\!B$-splines with classical $B$-splines easily, which has a positive impact in applications, since standard $B$-spline software packages can be employed effectively. The relation is stated in the following theorem.

\begin{theorem} \label{biv_mn}
Spline $s_{kl}(x,y)\in{\cal Z}_{kl}^{\Delta\lambda,\Delta\mu}(\Omega)$ defined in \eqref{ZB_repr_1} can be expressed as
\begin{align*}
s_{kl}(x,y) = \mathbf{B}_{k+1}^{\top}(x)\,\mathbf{B}\,\mathbf{B}_{l+1}(y)
\end{align*}
with the matrix of $B$-spline coefficients
$$
\mathbf{B} = \mathbf{D}_{\lambda}^{\top}\,\mathbf{K}_{g k}^{\top}\,\mathbf{Z}\,\mathbf{K}_{h l}\,\mathbf{D}_{\mu} + \left(\mathbf{V}\,\mathbf{K}_{g k}\,\mathbf{D}_{\lambda}\right)^{\top} + \mathbf{U}\,\mathbf{K}_{hl}\,\mathbf{D}_{\mu},
$$
where $\mathbf{V} = \mathbf{e}_{h+l+1} \otimes \mathbf{v}^{\top}$, $\mathbf{U} = \mathbf{e}_{g+k+1} \otimes \mathbf{u}^{\top}$ and $\mathbf{D}_{\lambda},\,\mathbf{D}_{\mu},\,\mathbf{K}_{g k},\,\mathbf{K}_{h l}$ are defined in \eqref{matrix_Dx}, \eqref{matrix_Dy} and \eqref{matrixK}.
\end{theorem}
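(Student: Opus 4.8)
The plan is to reduce the claim to a single matrix identity and then verify it by block multiplication. Starting from the compact representation \eqref{ZB_repr_2}, namely $s_{kl}(x,y) = \overline{\mathbf{Z}}_{k+1}^{\top}(x)\,\mathbf{R}\,\overline{\mathbf{Z}}_{l+1}(y)$, I would substitute the two change-of-basis relations $\overline{\mathbf{Z}}_{k+1}(x) = \mathbf{T}_{\lambda}\,\mathbf{B}_{k+1}(x)$ and $\overline{\mathbf{Z}}_{l+1}(y) = \mathbf{T}_{\mu}\,\mathbf{B}_{l+1}(y)$ taken from \eqref{transf_basis_x_2} and \eqref{transf_basis_y}. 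This immediately gives
\begin{align*}
s_{kl}(x,y) = \mathbf{B}_{k+1}^{\top}(x)\,\big(\mathbf{T}_{\lambda}^{\top}\,\mathbf{R}\,\mathbf{T}_{\mu}\big)\,\mathbf{B}_{l+1}(y),
\end{align*}
so that, since the $B$-spline basis $\mathbf{B}_{k+1}(x)\otimes\mathbf{B}_{l+1}(y)$ is linearly independent and the coefficient matrix is therefore unique, it remains only to show that $\mathbf{T}_{\lambda}^{\top}\,\mathbf{R}\,\mathbf{T}_{\mu}$ equals the stated $\mathbf{B}$.

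For the verification I would exploit the block structure of the three factors. By \eqref{matrixT_lambda} one has $\mathbf{T}_{\lambda}^{\top} = \big(\,\mathbf{D}_{\lambda}^{\top}\mathbf{K}_{gk}^{\top},\ \mathbf{e}_{g+k+1}\,\big)$ (an appended column of the unit vector next to the transposed $\mathbf{K}_{gk}\mathbf{D}_{\lambda}$ block), and analogously for $\mathbf{T}_{\mu}$ via \eqref{matrixT_mu}; recalling the block form of $\mathbf{R}$ with its $(2,2)$ entry equal to zero, a first multiplication $\mathbf{T}_{\lambda}^{\top}\mathbf{R}$ and then multiplication by $\mathbf{T}_{\mu}$ produces exactly three nonzero contributions, the zero entry of $\mathbf{R}$ killing the fourth:
\begin{align*}
\mathbf{T}_{\lambda}^{\top}\,\mathbf{R}\,\mathbf{T}_{\mu}
&= \mathbf{D}_{\lambda}^{\top}\mathbf{K}_{gk}^{\top}\,\mathbf{Z}\,\mathbf{K}_{hl}\mathbf{D}_{\mu}
+ \mathbf{e}_{g+k+1}\,\mathbf{u}^{\top}\,\mathbf{K}_{hl}\mathbf{D}_{\mu} \\
&\quad + \mathbf{D}_{\lambda}^{\top}\mathbf{K}_{gk}^{\top}\,\mathbf{v}\,\mathbf{e}_{h+l+1}^{\top}.
\end{align*}
The first summand is already verbatim the leading term of the claimed $\mathbf{B}$.

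It then remains to match the two outer-product terms with the Kronecker expressions $\mathbf{U} = \mathbf{e}_{g+k+1}\otimes\mathbf{u}^{\top}$ and $\mathbf{V} = \mathbf{e}_{h+l+1}\otimes\mathbf{v}^{\top}$; this is the one place calling for care. For a column vector $\mathbf{e}$ of ones and a row vector $\mathbf{w}^{\top}$, the Kronecker product $\mathbf{e}\otimes\mathbf{w}^{\top}$ coincides with the outer product $\mathbf{e}\,\mathbf{w}^{\top}$, the matrix each of whose rows equals $\mathbf{w}^{\top}$. Hence $\mathbf{e}_{g+k+1}\,\mathbf{u}^{\top} = \mathbf{U}$, so the second summand is $\mathbf{U}\,\mathbf{K}_{hl}\mathbf{D}_{\mu}$, while $\mathbf{e}_{h+l+1}\,\mathbf{v}^{\top} = \mathbf{V}$ gives $\mathbf{D}_{\lambda}^{\top}\mathbf{K}_{gk}^{\top}\,\mathbf{v}\,\mathbf{e}_{h+l+1}^{\top} = \big(\mathbf{e}_{h+l+1}\,\mathbf{v}^{\top}\mathbf{K}_{gk}\mathbf{D}_{\lambda}\big)^{\top} = \big(\mathbf{V}\,\mathbf{K}_{gk}\mathbf{D}_{\lambda}\big)^{\top}$, the middle term of $\mathbf{B}$. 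The main obstacle is thus not conceptual but a matter of bookkeeping: one must carry the transpose correctly so that the $\mathbf{v}$-contribution lands with an overall transpose (its row index ranging over the $x$-knots and its column index over the $y$-knots, opposite to the $\mathbf{u}$-contribution), and keep the block sizes $g+k$, $h+l$ together with the appended unit row/column consistent at each step of the product.
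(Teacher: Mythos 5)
Your proof is correct and is essentially the paper's argument in more compact packaging: the paper expands \eqref{ZB_repr_1} into three matrix terms, uses the partition of unity to reduce $\mathbf{V}^{\top}\mathbf{B}_{l+1}(y)$ to $\mathbf{v}$, and substitutes \eqref{ZBX_to_B}, \eqref{ZBY_to_B}, whereas you perform the same change of basis in one block product $\mathbf{T}_{\lambda}^{\top}\mathbf{R}\,\mathbf{T}_{\mu}$ (with the partition of unity already encoded in the last row of $\mathbf{T}_{\lambda}$, $\mathbf{T}_{\mu}$). Your block multiplication and the identifications $\mathbf{e}_{g+k+1}\mathbf{u}^{\top}=\mathbf{U}$, $\mathbf{D}_{\lambda}^{\top}\mathbf{K}_{gk}^{\top}\mathbf{v}\,\mathbf{e}_{h+l+1}^{\top}=(\mathbf{V}\,\mathbf{K}_{gk}\mathbf{D}_{\lambda})^{\top}$ check out dimensionally and algebraically.
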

\begin{proof}
The spline representation \eqref{ZB_repr_1} can be written in matrix notation as
$$
s_{kl}(x,y) = \mathbf{Z}_{k+1}^{\top}(x)\,\mathbf{Z}\,\mathbf{Z}_{l+1}(y) + \mathbf{Z}_{k+1}^{\top}(x)\,\mathbf{V}^{\top}\,\mathbf{B}_{l+1}(y) + \mathbf{B}_{k+1}^{\top}(x)\,\mathbf{U}\,\mathbf{Z}_{l+1}(y),
$$
since
\begin{align}\label{VBv}
\mathbf{V}^{\top}\,\mathbf{B}_{l+1}(y) &= \begin{pmatrix}
    v_{-k} & v_{-k} & \dots & v_{-k} \\
    \vdots & \vdots & & \vdots \\
    v_{g-1} & v_{g-1} & \dots & v_{g-1}
\end{pmatrix}\begin{pmatrix} B_{-l}^{l+1}(y) \\ 
\vdots \\ B_{h}^{l+1}(y) \end{pmatrix} \\ \nonumber
&= \left(
v_{-k} \sum\limits_{i=-l}^{h} B_{i}^{l+1}(y),\dots,v_{g-1} \sum\limits_{i=-l}^{h} B_{i}^{l+1}(y)
\right)^{\top} = \left( v_{-k},\dots,v_{g-1}\right)^{\top} = 
\mathbf{v},
\end{align}
and analogously $\mathbf{B}_{k+1}^{\top}(x)\, \mathbf{U} = \mathbf{u}^{\top}$, due to
$$
\sum_{i=-k}^{g} B_{i}^{k+1}(x) = \sum_{j=-l}^{h} B_{j}^{l+1}(y) = 1 \quad \forall x, y\in\Omega. 
$$
The expression of spline $s_{kl}(x,y)$ can be further modified using relations \eqref{ZBX_to_B} and \eqref{ZBY_to_B} as
\begin{align*}
s_{kl}(x,y) &= \mathbf{B}_{k+1}^{\top}(x)\,\mathbf{D}_{\lambda}^{\top}\,\mathbf{K}_{g k}^{\top}\,\mathbf{Z}\,\mathbf{K}_{h l}\mathbf{D}_{\mu}\mathbf{B}_{l+1}(y) + \mathbf{B}_{k+1}^{\top}(x)\mathbf{D}_{\lambda}^{\top}\mathbf{K}_{g k}^{\top}\mathbf{V}^{\top}\mathbf{B}_{l+1}(y) + \mathbf{B}_{k+1}^{\top}(x)\mathbf{U}\mathbf{K}_{h l}\mathbf{D}_{\mu}\mathbf{B}_{l+1}(y) \\
&=\mathbf{B}_{k+1}^{\top}(x)\,\left[\mathbf{D}_{\lambda}^{\top}\,\mathbf{K}_{g k}^{\top}\,\mathbf{Z}\,\mathbf{K}_{h l}\,\mathbf{D}_{\mu} + \left(\mathbf{V}\,\mathbf{K}_{g k}\,\mathbf{D}_{\lambda}\right)^{\top} + \mathbf{U}\,\mathbf{K}_{hl}\,\mathbf{D}_{\mu} \right]\,\mathbf{B}_{l+1}(y) \\
&= \mathbf{B}_{k+1}^{\top}(x)\,\mathbf{B}\,\mathbf{B}_{l+1}(y),
\end{align*}
where
$$
\mathbf{B} = \mathbf{D}_{\lambda}^{\top}\,\mathbf{K}_{g k}^{\top}\,\mathbf{Z}\,\mathbf{K}_{h l}\,\mathbf{D}_{\mu} + \left(\mathbf{V}\,\mathbf{K}_{g k}\,\mathbf{D}_{\lambda}\right)^{\top} + \mathbf{U}\,\mathbf{K}_{hl}\,\mathbf{D}_{\mu}
$$
represents the matrix of $B$-spline coefficients as stated in the assertion.
\end{proof}

The orthogonal decomposition of density functions was summarized in Section~\ref{Bayes}. The spline representation \eqref{ZB_repr_1} for clr transformed densities in $L^{2}_{0}(\Omega)$ has the advantage that it can be decomposed accordingly, allowing interchangeability of decomposition and spline approximation, as follows.
\\
\begin{theorem} \label{decomposition}
Let $s_{kl}(x,y)\in{\cal Z}_{kl}^{\Delta\lambda,\Delta\mu}(\Omega)$. Then
\begin{equation*}
s_{kl}^{int}(x,y) = 
\sum_{i=-k}^{g-1}\sum_{j=-l}^{h-1}z_{ij}\,Z_{ij}^{k+1,l+1}(x,y)
\end{equation*}
represents the \textit{interactive part} of spline $s_{kl}(x,y)$ and
$$
s_{kl}^{ind}(x,y) = 
\sum_{i=-k}^{g-1}v_{i}\,Z_{i}^{k+1}(x,y) + \sum_{j=-l}^{h-1}u_{j}\,Z_{j}^{l+1}(x,y)
$$
is the \textit{independent part} with univariate \textit{clr marginals}
$$
s_k^1(x,y) = 
\sum_{i=-k}^{g-1}v_{i}\,Z_{i}^{k+1}(x,y),\quad
s_l^2(x,y) = 
\sum_{j=-l}^{h-1}u_{j}\,Z_{j}^{l+1}(x,y).
$$
\end{theorem}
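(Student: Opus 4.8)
The plan is to verify the claim directly against the Bayes-space definitions recalled in Section~\ref{Bayes}, exploiting the separable product structure of the basis together with the zero-integral property \eqref{zeroInt} of the univariate $ZB$-splines. Since $s_{kl}$ is a clr-transformed object in $L^2_0(\Omega)$, its interactive and independent parts are characterized in Theorem~\ref{clrrepr} through the clr marginals of Theorem~\ref{clrmarg}. It therefore suffices to compute these two marginals explicitly from the representation \eqref{ZB_repr_1} and to match the resulting expressions against the three groups of terms; the labelling of $s_{kl}^{int}$, $s_{kl}^{ind}$, $s_k^1$ and $s_l^2$ is then forced by Theorem~\ref{clrrepr}.

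First I would compute the first clr marginal $s^1(x,y)=\frac{1}{d-c}\int_c^d s_{kl}(x,t)\,\mathrm{d}t$ group by group. Using the factorization \eqref{ZBprod}, each interactive term contributes $z_{ij}\,Z_i^{k+1}(x)\cdot\frac{1}{d-c}\int_c^d Z_j^{l+1}(t)\,\mathrm{d}t$, which vanishes by \eqref{zeroInt}; each term $u_j\,Z_j^{l+1}(x,y)=u_j\,Z_j^{l+1}(y)$ integrates to zero by the same property; and each term $v_i\,Z_i^{k+1}(x,y)=v_i\,Z_i^{k+1}(x)$, being constant in $t$ by \eqref{ConstZ}, is unaffected by the averaging and reproduces itself. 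This yields exactly $s^1(x,y)=\sum_{i=-k}^{g-1} v_i\,Z_i^{k+1}(x,y)=s_k^1(x,y)$. The symmetric computation for $s^2(x,y)=\frac{1}{b-a}\int_a^b s_{kl}(t,y)\,\mathrm{d}t$ leaves only the $u_j$-terms, giving $s^2(x,y)=s_l^2(x,y)$.

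With both marginals identified, the remaining identities follow immediately from Theorem~\ref{clrrepr}: the independent part is $s_{kl}^{ind}=s^1+s^2$, which is precisely the sum of the second and third groups of terms in \eqref{ZB_repr_1}, while the interactive part is $s_{kl}^{int}=s_{kl}-s^1-s^2$, in which these two groups cancel and leave only $\sum_{i,j} z_{ij}\,Z_{ij}^{k+1,l+1}(x,y)$. There is no genuine analytical obstacle here: the entire content lies in the bookkeeping of which terms survive marginalization, and the outcome is guaranteed by the fact, already noted after Theorem~\ref{basisJM}, that the basis was constructed so that the functions $Z_{ij}^{k+1,l+1}$ have vanishing marginals. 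The only point requiring care is to confirm that the averaging operators in Theorem~\ref{clrmarg} are applied with the correct normalizing constants, so that the $v_i$- and $u_j$-terms are reproduced rather than rescaled.
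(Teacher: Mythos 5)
Your proposal is correct and follows essentially the same route as the paper's own proof: compute the two clr marginals of the representation \eqref{ZB_repr_1} via Theorem~\ref{clrmarg}, observe that the zero-integral property \eqref{zeroInt} annihilates the interactive and cross terms while the averaging reproduces the constant-in-the-integrated-variable terms, and then invoke Theorem~\ref{clrrepr} to identify $s_{kl}^{ind}=s_k^1+s_l^2$ and $s_{kl}^{int}=s_{kl}-s_{kl}^{ind}$. No substantive differences from the paper's argument.
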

\begin{proof}
Using Theorem \ref{clrmarg}, clr marginals are obtained as
\begin{align*}
s_{k}^{1}(x,y) &= \frac{1}{d-c}\int\limits_{c}^{d} s_{kl}(x,y)\,\mbox{d}y \\
&= \frac{1}{d-c}\Bigg(\int\limits_{c}^{d}\sum_{i=-k}^{g-1}\sum_{j=-l}^{h-1}z_{ij}\,Z_{ij}^{k+1,l+1}(x,y)\,\mbox{d}y + \int\limits_{c}^{d} \sum_{i=-k}^{g-1}v_{i}\,Z_{i}^{k+1}(x,y)\,\mbox{d}y + \int\limits_{c}^{d} \sum_{j=-l}^{h-1}u_{j}\,Z_{j}^{l+1}(x,y)\,\mbox{d}y\Bigg) \\
&= \frac{1}{d-c}\Bigg(\sum_{i=-k}^{g-1}\sum_{j=-l}^{h-1}z_{ij}\,Z_{i}^{k+1}(x)\int\limits_{c}^{d}Z_{j}^{l+1}(y)\,\mbox{d}y + \sum_{i=-k}^{g-1}v_{i}\,Z_{i}^{k+1}(x)\int\limits_{c}^{d}\, 1\,\mbox{d}y + \sum_{j=-l}^{h-1}u_{j} \int\limits_{c}^{d}Z_{j}^{l+1}(y)\,\mbox{d}y\Bigg) \\
&= \sum_{i=-k}^{g-1}v_{i}\,Z_{i}^{k+1}(x)
\end{align*}
and analogously
\begin{align*}
s_{l}^{2}(x,y) = \frac{1}{b-a}\int\limits_{a}^{b} s_{kl}(x,y)\, \mbox{d}x = \sum_{j=-l}^{h-1}u_{j}\,Z_{j}^{l+1}(y).
\end{align*}
From Theorem \ref{clrrepr}, the independent and interactive parts can be expressed as
\begin{align*}
s_{kl}^{ind}(x,y) = s_{k}^{1}(x,y) + s_{l}^{2}(x,y) = \sum_{i=-k}^{g-1}v_{i}\,Z_{i}^{k+1}(x,y) + \sum_{j=-l}^{h-1}u_{j}\,Z_{j}^{l+1}(x,y)
\end{align*}
and
\begin{align*}
s_{kl}^{int}(x,y) &= s_{kl}(x,y) - s_{kl}^{ind}(x,y) = \sum_{i=-k}^{g-1}\sum_{j=-l}^{h-1}z_{ij}\,Z_{ij}^{k+1,l+1}(x,y).
\end{align*}
\end{proof}

By construction, the spline basis of ${\cal Z}_{kl}^{\Delta\lambda,\Delta\mu}(\Omega)$ is not orthogonal. However, the orthogonality holds mutually between 
$Z_{ij}^{k+1,l+1}(x,y)$, $Z_{i}^{k+1}(x,y)$ and $Z_{j}^{l+1}(x,y)$ for any $i,j$, reflecting hereby the orthogonal decomposition of PDFs from Corollary \ref{Bdecomp}. 

Additional properties of interactive and independent parts $s_{kl}^{int}(x,y)$, $s_{kl}^{ind}(x,y)$ are stated in \ref{secA1}.

\section{Smoothing splines in $L^{2}_{0}(\Omega)$} \label{ZBsmoothing}
In the context of FDA, smoothing splines are very popular for modelling and approximation of the observed data according to the combination of two criteria: sufficient data fitting and smoothness of the resulting spline. In this section, we provide a formula for tensor product smoothing splines in $Z\!B$-spline representation. Although the following derivations are rather lengthy and technical, they cannot be easily skipped since the notation needed for the final formula is introduced and connections with already obtained results are linked.

We assume domain $\Omega = [a,b]\times[c,d]$ and that the density after clr-transformation is observed in the form of values $f_{ij}$ at observation points
$(x_i,y_j)~\in~\Omega,\ i=1,\dots,m,\ j=1,\dots,n$, i.e. $f_{ij}$ represents measurements of an $L^2_0(\Omega)$ function, e.g.\ obtained as a clr-transformed histogram aggregation of a larger dataset. To employ $Z\!B$-splines, let $k,l\in\mathbb{N}_{0}$ be degrees of splines in variables $x$ and $y$ with chosen extended sequences of knots \eqref{exknotsX}, \eqref{exknotsY}, respectively. The aim is to find the spline $s_{kl}(x,y)\in{\cal Z}_{kl}^{\Delta\lambda,\Delta\mu}(\Omega)$, which minimizes the functional
\begin{equation}\label{compfcional}
J_{p,q}(s_{kl}) = \sum_{i=1}^{m} \sum_{j=1}^{n} \big[f_{ij}-s_{kl}(x_i,y_j) \big]^2 + \rho \iint\limits_{\Omega} \big[ s_{kl}^{(p,q)}(x,y) \big]^2\,\mbox{d}x\,\mbox{d}y,
\end{equation}
where $p\in\lbrace 0,1,\dots,k-1\rbrace$ and $q\in\lbrace 0,1,\dots,l-1\rbrace$ denote degrees of derivatives of spline $s_{kl}(x,y)$ in variables $x$ and $y$. The first part of the functional \eqref{compfcional} is an ordinary least squares criterion. The second part can be seen as a roughness penalty, which controls the level of smoothness of spline $s_{kl}(x,y)$ and can be modulated by choosing the value of the so-called \textit{smoothing parameter} $\rho > 0$. Higher values of parameter $\rho$ lead to a higher level of smoothness of the resulting spline approximation.
Functional \eqref{compfcional} is also referred to as a \textit{smoothing functional}, and the minimizing spline $s_{kl}(x,y)$ is called \textit{smoothing spline}. 
Just to recall that for a smoothing problem we assume that the number of the given data is greater than the dimension of the corresponding space 
${\cal Z}_{kl}^{\Delta\lambda,\Delta\mu}(\Omega)$,
i.e.
$m\,n \; > \;(g+k)(h+l)+g+k+h+l$.

Using representation \eqref{ZB_repr_2}, function values of the spline $s_{kl}(x,y)$ at given $\mathbf{x}=(x_1,\dots,x_m)^{\top}$ and $\mathbf{y}=(y_1,\dots,y_n)^{\top}$, collected in matrix $s_{kl}(\mathbf{x},\mathbf{y}) = (s_{kl}(x_i,y_j))_{i=1,\,j=1}^{m,\,n}$, can be expressed as
\begin{align}\label{ZB_repr_val}
s_{kl}(\mathbf{x},\mathbf{y}) &= \overline{\mathbf{Z}}_{k+1}^{\top}(\mathbf{x})\,\mathbf{R}\,\overline{\mathbf{Z}}_{l+1}(\mathbf{y}),
\end{align}
where $\overline{\mathbf{Z}}_{k+1}(\mathbf{x}) = \begin{pmatrix}
\mathbf{Z}_{k+1}^{\top}(\mathbf{x}) \\ \mathbf{e}_{m}^{\top}\end{pmatrix}$ and $\overline{\mathbf{Z}}_{l+1}(\mathbf{y}) = \begin{pmatrix}
\mathbf{Z}_{l+1}^{\top}(\mathbf{y}) \\ \mathbf{e}_{n}^{\top}\end{pmatrix}$ with matrices
\begin{equation}\label{Z_matrices}
\mathbf{Z}_{k+1}(\mathbf{x}) = \begin{pmatrix} Z_{-k}^{k+1}(x_1) & \dots & Z_{g-1}^{k+1}(x_1) \\
\vdots & & \vdots \\
Z_{-k}^{k+1}(x_m) & \dots & Z_{g-1}^{k+1}(x_m)
\end{pmatrix},\quad \mathbf{Z}_{l+1}(\mathbf{y}) = \begin{pmatrix} Z_{-l}^{l+1}(y_1) & \dots & Z_{h-1}^{l+1}(y_1) \\
\vdots & & \vdots \\
Z_{-l}^{l+1}(y_n) & \dots & Z_{h-1}^{l+1}(y_n)
\end{pmatrix}.
\end{equation}
Expression \eqref{ZB_repr_val} can be further reformulated as
\begin{align*}
s_{kl}(\mathbf{x},\mathbf{y}) &= \left(\mathbf{Z}_{k+1}(\mathbf{x})\ \mathbf{e}_{m}\right)\,\begin{pmatrix} \mathbf{Z} & \mathbf{v} \\ \mathbf{u}^{\top} & 0
\end{pmatrix}\,\begin{pmatrix} \mathbf{Z}_{l+1}^{\top}(\mathbf{y}) \\ \mathbf{e}_{n}^{\top}\end{pmatrix} =\mathbf{Z}_{k+1}(\mathbf{x})\,\mathbf{Z}\,\mathbf{Z}_{l+1}^{\top}(\mathbf{y}) + 
\mathbf{e}_{m}\,\mathbf{u}^{\top}\,\mathbf{Z}_{l+1}^{\top}(\mathbf{y}) + 
\mathbf{Z}_{k+1}(\mathbf{x})\,\mathbf{v}\,\mathbf{e}_{n}^{\top}.
\end{align*}
Using Kronecker tensor products, $s_{kl}(\mathbf{x},\mathbf{y})$ can be expressed in vectorized form as
\begin{align} \label{ZB_repr_val_2}
cs(s_{kl}\left(\mathbf{x},\mathbf{y})\right) &= \left(\mathbf{Z}_{l+1}(\mathbf{y})\,\otimes\,\mathbf{Z}_{k+1}(\mathbf{x})\right)\,cs(\mathbf{Z}) + \left(\mathbf{e}_{n}\,\otimes\,\mathbf{Z}_{k+1}(\mathbf{x})\right)\mathbf{v} + \left(\mathbf{Z}_{l+1}(\mathbf{y})\,\otimes\,\mathbf{e}_{m}\right)\mathbf{u}\nonumber \\ &=\mathbb{Z}\,cs(\mathbf{Z}) + \mathbb{Z}_{x}\,\mathbf{v} + \mathbb{Z}_{y}\,\mathbf{u},
\end{align}
where 
\begin{equation}\label{tensor_matrices}
\mathbb{Z} = \mathbf{Z}_{l+1}(\mathbf{y})\,\otimes\,\mathbf{Z}_{k+1}(\mathbf{x}), \quad \mathbb{Z}_{x} = \mathbf{e}_{n}\,\otimes\,\mathbf{Z}_{k+1}(\mathbf{x}),\quad \mathbb{Z}_{y} = \mathbf{Z}_{l+1}(\mathbf{y})\,\otimes\,\mathbf{e}_{m}.
\end{equation}
Spline $s_{kl}(x,y)\in{\cal Z}_{kl}^{\Delta\lambda,\Delta\mu}(\Omega)$ is determined uniquely by spline coefficients $cs(\mathbf{Z})$, $\mathbf{v}$, $\mathbf{u}$. Therefore, the aim is to express the smoothing functional as $J_{p,q}(cs(\mathbf{Z}),\mathbf{v},\mathbf{u})$, i.e. as~a~function of the unknown spline coefficients. To this end, the functional \eqref{compfcional} can be divided into two parts,
$$
J_1 = \sum_{i=1}^{m} \sum_{j=1}^{n} \left[f_{ij}-s_{kl}(x_i,y_j) \right]^2,\qquad
J_2 = \rho \iint\limits_{\Omega} \big[ s_{kl}^{(p,q)}(x,y) \big]^2\,\mbox{d}x\,\mbox{d}y.
$$
Using relation \eqref{ZB_repr_val_2}, $J_{1}$ can be formulated with $\mathbf{F} = (f_{ij})_{i=1,\dots,m; j=1,\dots, n}$ as
\begin{align}\label{J1}
J_1 &=\left[cs(\mathbf{F}) - \mathbb{Z}\, cs(\mathbf{Z}) - \mathbb{Z}_{x}\,\mathbf{v} - \mathbb{Z}_{y}\, \mathbf{u}\right]^{\top} \left[cs(\mathbf{F}) - \mathbb{Z}\, cs(\mathbf{Z}) - \mathbb{Z}_{x}\,\mathbf{v} - \mathbb{Z}_{y}\, \mathbf{u}\right]  \nonumber\\
&= \left[ cs(\mathbf{F})  - \begin{pmatrix}\mathbb{Z} & \mathbb{Z}_{x} & \mathbb{Z}_{y}
\end{pmatrix}\begin{pmatrix} cs(\mathbf{Z}) \\ \mathbf{v} \\ \mathbf{u}
\end{pmatrix}\right]^{\top} \left[ cs(\mathbf{F})  - \begin{pmatrix}\mathbb{Z} & \mathbb{Z}_{x} & \mathbb{Z}_{y}
\end{pmatrix}\begin{pmatrix} cs(\mathbf{Z}) \\ \mathbf{v} \\ \mathbf{u}
\end{pmatrix}\right] \nonumber\\
&= \begin{pmatrix} 
cs(\mathbf{Z})^{\top} & \mathbf{v}^{\top} & \mathbf{u}^{\top}
\end{pmatrix}
\begingroup
\setlength\arraycolsep{0.2cm}
\begin{pmatrix}
\mathbb{Z}^{\top}\mathbb{Z} & \mathbb{Z}^{\top}\mathbb{Z}_{x} & \mathbb{Z}^{\top}\mathbb{Z}_{y} \\[0.1cm]
\mathbb{Z}_{x}^{\top}\mathbb{Z} & \mathbb{Z}_{x}^{\top}\mathbb{Z}_{x} & \mathbb{Z}_{x}^{\top}\mathbb{Z}_{y} \\[0.1cm]
\mathbb{Z}_{y}^{\top}\mathbb{Z} & \mathbb{Z}_{y}^{\top}\mathbb{Z}_{x} & \mathbb{Z}_{y}^{\top}\mathbb{Z}_{y}
\end{pmatrix}
\endgroup
\begin{pmatrix} cs(\mathbf{Z}) \\[0.1cm] \mathbf{v} \\[0.1cm] \mathbf{u}
\end{pmatrix} \nonumber - 2\,cs(\mathbf{F})^{\top}\,\begin{pmatrix}\mathbb{Z} & \mathbb{Z}_{x} & \mathbb{Z}_{y}
\end{pmatrix}\begin{pmatrix} cs(\mathbf{Z}) \\ \mathbf{v} \\ \mathbf{u}
\end{pmatrix} + cs(\mathbf{F})^{\top}\,cs(\mathbf{F}).\nonumber
\end{align}
To express the derivative $s_{kl}^{(p,q)}(x,y)$ and the functional $J_{2}$, we first introduce the following notation. Let
\begin{align*}
\mathbf{S}_{\lambda}^{p} &= \left(\mathbf{D}_{\lambda}^p \mathbf{L}_{\lambda}^p \cdots \mathbf{D}_{\lambda}^1 \mathbf{L}_{\lambda}^1\right) \in \mathbb{R}^{g+k+1-p,g+k+1},\\
\mathbf{D}_{\lambda}^{j} &= (k+1-j)[\mbox{diag}(\mathbf{d}_{\lambda}^{j})]^{-1},\quad j=1,\dots,p,
\end{align*}
where $\mathbf{d}_{\lambda}^{j}=(\lambda_1 - \lambda_{-k+j},\ \dots\ , \lambda_{g+k+1-j} - \lambda_{g})$ and $\mathbf{L}_{\lambda}^j=(l_{\alpha\beta}^{\lambda})\in\mathbb{R}^{g+k+1-j,g+k+2-j}$ be a matrix with elements $l_{\alpha\beta}^{\lambda}$ defined as
$$
l_{\alpha\beta}^{\lambda}=\begin{cases}
-1 \quad \textnormal{if}\ \alpha=\beta, \\
1 \quad \textnormal{if}\ \alpha=\beta - 1,\quad \alpha=1,\dots,g+k+1-j,\ \beta=1,\dots,g+k+2-j \\
0 \quad \textnormal{otherwise}.
\end{cases}
$$
Similarly, let $\mathbf{S}_{\mu}^{q} = \left(\mathbf{D}_{\mu}^q \mathbf{L}_{\mu}^q \cdots \mathbf{D}_{\mu}^1 \mathbf{L}_{\mu}^1\right) \in \mathbb{R}^{h+l+1-q,h+l+1}$, $\mathbf{D}_{\mu}^{j}=(l+1-j)[\mbox{diag}(\mathbf{d}_{\mu}^{j})]^{-1},\ j=1,\dots,q$ with $\mathbf{d}_{\mu}^{j}=(\mu_1 - \mu_{-l+j},\ \dots\ , \mu_{h+l+1-j} - \mu_{h})$
and $\mathbf{L}_{\mu}^{j}=(l_{\alpha\beta}^{\mu})\in\mathbb{R}^{h+l+1-j,h+l+2-j}$ for $\alpha=1,\dots,h+l+1-j$ and $\beta=1,\dots,h+l+2-j$ defined as
$$
l_{\alpha\beta}^{\mu}=\begin{cases}
-1 \quad \textnormal{if}\ \alpha=\beta, \\
1 \quad \textnormal{if}\ \alpha=\beta-1, \\
0 \quad \textnormal{otherwise}.
\end{cases}
$$
Using Theorem \ref{biv_mn}, relation \eqref{VBv} and the fact that $p$-th derivative of a $B$-spline $B_{i}^{k+1}(x)$ is a $B$-spline $B_{i}^{k+1-p}(x)$ (and analogously for $B_{j}^{l+1}(y)$) as further derived in \cite{hron22, Dierckx, machalova16, machalova2002}, derivative $s_{kl}(p,q)(x,y)$ can be formulated as

\begin{align*}
s_{kl}^{(p,q)}(x,y) &= \frac{\partial^p}{\partial x^p}\frac{\partial^q}{\partial y^q} \mathbf{B}_{k+1}^{\top}(x)\,\mathbf{B}\,\mathbf{B}_{l+1}(y) \\
&= \frac{\partial^p}{\partial x^p}\frac{\partial^q}{\partial y^q} \big[\mathbf{B}_{k+1}^{\top}(x)\,\mathbf{D}_{\lambda}\,\mathbf{K}_{g k}^{\top}\,\mathbf{Z}\,\mathbf{K}_{h l}\mathbf{D}_{\mu}\mathbf{B}_{l+1}(y) + \mathbf{B}_{k+1}^{\top}(x)\,\mathbf{D}_{\lambda}\,\mathbf{K}_{g k}^{\top}\,\mathbf{v} + \mathbf{u}^{\top}\,\mathbf{K}_{h l}\,\mathbf{D}_{\mu}\,\mathbf{B}_{l+1}(y) \big] \\
&= \mathbf{B}_{k+1-p}^{\top}(x)\,\mathbf{S}_{\lambda}^{p}\,\mathbf{D}_{\lambda}\,\mathbf{K}_{g k}^{\top}\,\mathbf{Z}\,\mathbf{K}_{h l}\mathbf{D}_{\mu}\left(\mathbf{S}_{\mu}^{q}\right)^{\top}\,\mathbf{B}_{l+1-q}(y) \\
&= \left[\left(\mathbf{B}_{l+1-q}^{\top}(y)\,\mathbf{S}_{\mu}^{q}\,\mathbf{D}_{\mu}\,\mathbf{K}_{h l}^{\top}\big) \otimes (\mathbf{B}_{k+1-p}^{\top}(x)\,\mathbf{S}_{\lambda}^{p}\,\mathbf{D}_{\lambda}\,\mathbf{K}_{g k}^{\top}\right)\right]\, cs(\mathbf{Z}) \\
&= \left(\mathbf{B}_{l+1-q}(y) \otimes \mathbf{B}_{k+1-p}(x)\right)^{\top}
\left(\mathbf{S}_{\mu}^{q} \otimes \mathbf{S}_{\lambda}^{p}\right)
\left(\mathbf{D}_{\mu} \otimes \mathbf{D}_{\lambda}\right)
\left(\mathbf{K}_{h l} \otimes \mathbf{K}_{g k}\right)^{\top}\, cs(\mathbf{Z})\\
&= \left(\mathbf{B}_{l+1-q}(y) \otimes \mathbf{B}_{k+1-p}(x)\right)^{\top}\,\mathbb{S}\,\mathbb{D}\,\mathbb{K}^{\top}\, cs(\mathbf{Z}),
\end{align*}
where
\begin{equation}\label{matrix_SDK}
\mathbb{S} \, = \,  \mathbf{S}_{\mu}^{q}\otimes\mathbf{S}_{\lambda}^{p}, \quad
\mathbb{D} \, = \,  \mathbf{D}_{\mu}\otimes\mathbf{D}_{\lambda}, \quad
\mathbb{K} \, = \,  \mathbf{K}_{hl}\otimes\mathbf{K}_{gk}.
\end{equation}
See \citep{hron22} for more details. Then the functional $J_{2}$ can be expressed as
\begin{align*}
J_{2} &= \rho\,\iint\limits_{\Omega}\left[s_{kl}^{(p,q)}(x,y)\right]^{2}\mbox{d}x\, \mbox{d}y 
= \rho\,\iint\limits_{\Omega} \left[\left(\mathbf{B}_{l+1-q}(y) \otimes \mathbf{B}_{k+1-p}(x)\right)^{\top}\,\mathbb{S}\,\mathbb{D}\,\mathbb{K}^{\top}\, cs(\mathbf{Z})\right]^{2}\,\mbox{d}x\,\mbox{d}y \\
&= \rho\,cs(\mathbf{Z})^{\top} \mathbb{K}\,\mathbb{D}\,\mathbb{S}^{\top}\,\iint\limits_{\Omega}
\left[\mathbf{B}_{l+1-q}(y) \otimes \mathbf{B}_{k+1-p}(x)\right]\left[\mathbf{B}_{l+1-q}(y) \otimes \mathbf{B}_{k+1-p}(x)\right]^{\top}\,\mbox{d}x\,\mbox{d}y\,\mathbb{S}\,\mathbb{D}\,\mathbb{K}^{\top}\,cs(\mathbf{Z}) \\ 
&= \rho\,cs(\mathbf{Z})^{\top} \mathbb{K}\,\mathbb{D}\,\mathbb{S}^{\top}\left(\int\limits_{c}^{d}
\mathbf{B}_{l+1-q}(y)\mathbf{B}_{l+1-q}^{\top}(y)\mbox{d}y\right)\otimes\left(\int\limits_{a}^{b}\mathbf{B}_{k+1-p}(x) \mathbf{B}_{k+1-p}^{\top}(x)\mbox{d}x\right)\,\mathbb{S}\,\mathbb{D}\,\mathbb{K}^{\top}cs(\mathbf{Z}) \\
&= \rho\,cs(\mathbf{Z})^{\top} \mathbb{K}\,\mathbb{D}\,\mathbb{S}^{\top}\,\left(\mathbf{M}_{l,q}^{y} \otimes \mathbf{M}_{k,p}^{x}\right) \mathbb{S}\, \mathbb{D}\, \mathbb{K}^{\top}\, cs(\mathbf{Z}) = \rho\,cs(\mathbf{Z})^{\top}
\mathbb{K}\,\mathbb{D}\,\mathbb{S}^{\top}\,\mathbb{M}\, \mathbb{S}\, \mathbb{D}\, \mathbb{K}^{\top}\, cs(\mathbf{Z}) \\
&= \rho\,cs(\mathbf{Z})^{\top}\mathbb{N}\, cs(\mathbf{Z}),
\end{align*}
where 
\begin{equation}\label{matrix_M}
\mathbb{M} \, = \, \mathbf{M}_{l,q}^{y} \otimes \mathbf{M}_{k,p}^{x},
\qquad \mathbb{N} \, = \, \mathbb{K}\,\mathbb{D}\,\mathbb{S}^{\top}\,\mathbb{M}\, \mathbb{S}\, \mathbb{D}\, \mathbb{K}^{\top},
\end{equation}
with 
$\mathbf{M}_{k,p}^{x} = \left(m_{ij}^x \right)_{i,j=-k+p}^g$,
$\mathbf{M}_{l,q}^{y}= \left(m_{ij}^y \right)_{i,j=-l+q}^h$,
$$
m_{ij}^x=\int\limits_a^b B_i^{k+1-p}(x)B_j^{k+1-p}(x) \, \mbox{d}x, \qquad
m_{ij}^y=\int\limits_a^b B_i^{l+1-q}(y)B_j^{l+1-q}(y) \, \mbox{d}y.
$$
Moreover, $J_2$ can be expressed as a function of the coefficients $cs(\mathbf{Z}),\,\mathbf{v},\,\mathbf{u}$ as
$$
J_ 2 = \begin{pmatrix} cs(\mathbf{Z})^{\top} & \mathbf{v}^{\top} & \mathbf{u}^{\top} \end{pmatrix}
\begingroup
\setlength\arraycolsep{0.2cm}
\begin{pmatrix} \rho\,\mathbb{N}&0&0\\0&0&0\\0&0&0 \end{pmatrix}
\endgroup
\begin{pmatrix}
cs(\mathbf{Z} \\ \mathbf{v} \\ \mathbf{u} \end{pmatrix}.
$$

\noindent
Note that from construction respective marginals are not penalized. Additional penalization can be imposed, if desired, by adding suitable separate penalty for smoothness. Altogether, functional $J_{pq}(s_{kl})$ can be formulated as a quadratic function
\begin{equation}\label{compfunction}
J_{pq}(\mathbf{w}) = \mathbf{w}^{\top}\,\mathbf{G}(\rho)\,\mathbf{w} - 2\,\mathbf{w}^{\top}\,\mathbf{g} + c,
\end{equation}
where $c = cs(\mathbf{F})^{\top}cs(\mathbf{F})$,
\begin{align}\label{compfunction_vectors}
\mathbf{w} = \begin{pmatrix}
cs(\mathbf{Z}) \\ 
\mathbf{v} \\ 
\mathbf{u}
\end{pmatrix},\qquad
\mathbf{g} = \begin{pmatrix}
\mathbb{Z}^{\top} cs(\mathbf{F}) \\[0.2cm]
\mathbb{Z}_{x}^{\top} cs(\mathbf{F}) \\[0.2cm]
\mathbb{Z}_{y}^{\top} cs(\mathbf{F})
\end{pmatrix}
\end{align}
and
\begin{align}\label{compfunction_matrix}
\mathbf{G}(\rho) & = 
\begin{pmatrix} \mathbb{Z}^{\top} \\ \mathbb{Z}_{x}^{\top} \\ \mathbb{Z}_{y}^{\top}\end{pmatrix}\,
\begin{pmatrix}
\mathbb{Z} & \mathbb{Z}_{x} & \mathbb{Z}_{y}
\end{pmatrix} \, + \, 
\begingroup
\setlength\arraycolsep{0.2cm}
\begin{pmatrix} \rho\,\mathbb{N}&0&0\\0&0&0\\0&0&0 \end{pmatrix}
\endgroup \, =
\begin{pmatrix}
\rho\,\mathbb{N} + \mathbb{Z}^{\top}\mathbb{Z} & \mathbb{Z}^{\top}\mathbb{Z}_{x} & \mathbb{Z}^{\top}\mathbb{Z}_{y} \\[0.2cm]
\mathbb{Z}_{x}^{\top}\mathbb{Z} & \mathbb{Z}_{x}^{\top}\mathbb{Z}_{x} & \mathbb{Z}_{x}^{\top}\mathbb{Z}_{y} \\[0.2cm]
\mathbb{Z}_{y}^{\top}\mathbb{Z} & \mathbb{Z}_{y}^{\top}\mathbb{Z}_{x} & \mathbb{Z}_{y}^{\top}\mathbb{Z}_{y}
 \end{pmatrix}.
\end{align}

\noindent
To determine the coefficients of the smoothing spline, it is necessary to find the minimum of the quadratic function \eqref{compfunction}. There exists just one minimum of $J_{pq}$ if and only if the matrix $\mathbf{G}(\rho)$ is positive definite. 
Since the matrix $\mathbb{N}$ is positive semi-definite,
the matrix $\mathbf{G}(\rho)$ is positive definite if and only if 
matrix 
$\begin{pmatrix}
\mathbb{Z} & \mathbb{Z}_{x} & \mathbb{Z}_{y}
\end{pmatrix}
$
is of full column rank. With respect to the notation (\ref{tensor_matrices}) one can easily see that this is fulfilled if and only if matrices $\mathbf{Z}_{k+1}(\mathbf{x})$ and $\mathbf{Z}_{l+1}(\mathbf{y})$, defined in (\ref{Z_matrices}), are of full column rank.
For this fulfilment it is sufficient to ensure that at least one point, from given $x_1,\ldots,x_m$, lies in the support of each $Z\!B$-spline $Z_{i}^{k+1}(x)$, $i=-k,\ldots,g$. The same must hold for $y_1,\ldots,y_n$ 
and $Z\!B$-splines $Z_{j}^{l+1}(y)$, $j=-l,\ldots,h$, which corresponds to the Schoenberg-Whitney condition, see \citep{deboor, machalova2002}. The necessary condition for the minimum of a quadratic function,
$$
\frac{\partial J_{pq}(\mathbf{w})}{\partial\mathbf{w}^{\top}}=\mathbf{0},
$$
forms a system of linear equations $2\,\mathbf{G}(\rho)\,\mathbf{w} - 2\,\mathbf{g} = \mathbf{0}$. Consequently, coefficients of the smoothing spline $s_{kl}(x,y)$ are obtained as solution of the system of linear equations
$$
\mathbf{G}(\rho)\,\mathbf{w} = \mathbf{g},
$$
with matrix $\mathbf{G}(\rho)$ and vectors $\mathbf{g},\, \mathbf{w}$ defined in \eqref{compfunction_vectors}, \eqref{compfunction_matrix}. As a result, the following theorem about existence and uniqueness smoothing splines can be formulated.

\begin{theorem} \label{biv_smoothing_th}
Let the data matrix $\mathbf{F}=\lbrace f_{ij}\rbrace_{i=1,j=1}^{m,n}$ representing measurements belonging to $L^2_0(\Omega)$ at observation points  $(x_i,y_j)~\in~\Omega,\ i=1,\dots,m,\ j=1,\dots,n$   be given. Then there exists a unique smoothing spline $s^{*}_{kl}(x,y)\in{\cal Z}_{kl}^{\Delta\lambda,\Delta\mu}(\Omega)$ minimizing  functional $J_{p,q}(s_{kl})$ 
if and only if matrices
$\mathbf{Z}_{k+1}(\mathbf{x})$ and $\mathbf{Z}_{l+1}(\mathbf{y})$,
defined in \eqref{Z_matrices}, are of full column rank.
The smoothing spline
$$
s_{kl}^{*}(x,y) = \begin{pmatrix} \mathbf{Z}_{k+1}(x) & 1 \end{pmatrix} \begin{pmatrix} \mathbf{Z}^{*} & \mathbf{v}^{*} \\ \mathbf{u}^{*\top} & 0 \end{pmatrix}
\begin{pmatrix} \mathbf{Z}_{l+1}^{\top}(y) \\ 1 \end{pmatrix}
$$
is determined by the spline coefficients $\mathbf{Z}^{*}$, $\mathbf{v}^{*}$ and $\mathbf{u}^{*}$ obtained as the unique solution of the system of linear equations
\begin{align}\label{eq_system}
\begingroup
\setlength\arraycolsep{0.2cm}
\begin{pmatrix}
\rho\,\mathbb{N} + \mathbb{Z}^{\top}\mathbb{Z} & \mathbb{Z}^{\top}\mathbb{Z}_{x} & \mathbb{Z}^{\top}\mathbb{Z}_{y} \\[0.2cm]
\mathbb{Z}_{x}^{\top}\mathbb{Z} & \mathbb{Z}_{x}^{\top}\mathbb{Z}_{x} & \mathbb{Z}_{x}^{\top}\mathbb{Z}_{y} \\[0.2cm]
\mathbb{Z}_{y}^{\top}\mathbb{Z} & \mathbb{Z}_{y}^{\top}\mathbb{Z}_{x} & \mathbb{Z}_{y}^{\top}\mathbb{Z}_{y}
\end{pmatrix}
\endgroup
\begin{pmatrix}
cs(\mathbf{Z}) \\[0.2cm] \mathbf{v} \\[0.2cm] \mathbf{u}
\end{pmatrix} =
\begin{pmatrix}
\mathbb{Z}^{\top} cs(\mathbf{F}) \\[0.2cm]
\mathbb{Z}_{x}^{\top} cs(\mathbf{F}) \\[0.2cm]
\mathbb{Z}_{y}^{\top} cs(\mathbf{F})
\end{pmatrix},
\end{align}
with matrices $\mathbb{Z}$, $\mathbb{Z}_{x}$, $\mathbb{Z}_{y}$ defined in \eqref{tensor_matrices} and matrices $\mathbb{S},\ \mathbb{D},\ \mathbb{K},\ \mathbb{M}$ and $\mathbb{N}$ defined in \eqref{matrix_SDK}, \eqref{matrix_M}.
\end{theorem}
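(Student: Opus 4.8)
The plan is to treat Theorem~\ref{biv_smoothing_th} as the culmination of the algebraic reduction already carried out above: by \eqref{compfunction}--\eqref{compfunction_matrix} the smoothing functional has been rewritten as the real quadratic function $J_{pq}(\mathbf{w}) = \mathbf{w}^{\top}\mathbf{G}(\rho)\,\mathbf{w} - 2\,\mathbf{w}^{\top}\mathbf{g} + c$ in the stacked coefficient vector $\mathbf{w} = (cs(\mathbf{Z})^{\top},\mathbf{v}^{\top},\mathbf{u}^{\top})^{\top}$, with $\mathbf{G}(\rho)$ symmetric. The proof then amounts to three assembling steps: (i) invoke the standard characterization of minimizers of a quadratic function; (ii) show that positive definiteness of $\mathbf{G}(\rho)$ is governed by the stated rank condition on the marginal evaluation matrices $\mathbf{Z}_{k+1}(\mathbf{x})$, $\mathbf{Z}_{l+1}(\mathbf{y})$; and (iii) read off the coefficient system and the spline formula.

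For step (i) I would recall that for symmetric $\mathbf{G}(\rho)$ the quadratic $J_{pq}$ possesses a \emph{unique} global minimizer if and only if $\mathbf{G}(\rho)$ is positive definite, in which case the minimizer is the unique solution of the stationarity condition $\partial J_{pq}/\partial\mathbf{w}^{\top} = \mathbf{0}$, i.e.\ of $\mathbf{G}(\rho)\,\mathbf{w} = \mathbf{g}$. Writing this out blockwise using \eqref{compfunction_vectors} and \eqref{compfunction_matrix} reproduces exactly the linear system \eqref{eq_system}; substituting its solution $\mathbf{Z}^{*},\mathbf{v}^{*},\mathbf{u}^{*}$ into the matrix representation \eqref{ZB_repr_2} yields the asserted closed form for $s_{kl}^{*}(x,y)$. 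One should note in passing that $\mathbf{G}(\rho)$ is always positive semidefinite, so the only alternatives to a unique minimizer are an affine continuum of minimizers or unboundedness from below, both excluded precisely when $\mathbf{G}(\rho)$ is positive definite.

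The crux is step (ii), and it is delicate because the penalty block $\mathbb{N}$ is only positive semidefinite, so positive definiteness cannot be inferred from the penalty. With $\mathbf{A} = (\mathbb{Z}\ \mathbb{Z}_{x}\ \mathbb{Z}_{y})$ from \eqref{tensor_matrices}, relation \eqref{compfunction_matrix} gives the identity $\mathbf{w}^{\top}\mathbf{G}(\rho)\,\mathbf{w} = \|\mathbf{A}\mathbf{w}\|^{2} + \rho\,cs(\mathbf{Z})^{\top}\mathbb{N}\,cs(\mathbf{Z})$, a sum of two nonnegative terms. Hence $\mathbf{G}(\rho)$ is positive definite exactly when no nonzero $\mathbf{w}$ annihilates both terms at once, and the implication ``$\mathbf{A}$ of full column rank $\Rightarrow \mathbf{G}(\rho)$ positive definite'' is immediate, since then $\mathbf{A}\mathbf{w}=\mathbf{0}$ already forces $\mathbf{w}=\mathbf{0}$. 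The necessity of the marginal rank condition is equally elementary: a nonzero $\mathbf{c}$ with $\mathbf{Z}_{k+1}(\mathbf{x})\mathbf{c}=\mathbf{0}$ produces the null vector $\mathbf{w}=(\mathbf{0}^{\top},\mathbf{c}^{\top},\mathbf{0}^{\top})^{\top}$ of $\mathbf{A}$ with $cs(\mathbf{Z})=\mathbf{0}$, which makes both terms vanish; the case of $\mathbf{Z}_{l+1}(\mathbf{y})$ is symmetric.

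The substance, and what I expect to be the main obstacle, is the converse reduction from the bivariate Kronecker-structured matrix $\mathbf{A}$ to the univariate marginal matrices. I would un-vectorize $\mathbf{A}\mathbf{w}=\mathbf{0}$ into the matrix identity $\overline{\mathbf{Z}}_{k+1}(\mathbf{x})^{\top}\mathbf{R}\,\overline{\mathbf{Z}}_{l+1}(\mathbf{y})=\mathbf{0}$ with $\mathbf{R}$ as in \eqref{ZB_repr_2}, and exploit the Kronecker rank factorization of $\overline{\mathbf{Z}}_{l+1}(\mathbf{y})^{\top}\otimes\overline{\mathbf{Z}}_{k+1}(\mathbf{x})^{\top}$, of which $\mathbf{A}$ is the column submatrix obtained by deleting the single all-ones column tied to the fixed zero entry of $\mathbf{R}$. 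The delicate point is that controlling $\mathbf{A}$ requires full column rank of the \emph{augmented} matrices $\overline{\mathbf{Z}}_{k+1}(\mathbf{x})^{\top}=(\mathbf{Z}_{k+1}(\mathbf{x}),\mathbf{e}_{m})$ and $\overline{\mathbf{Z}}_{l+1}(\mathbf{y})^{\top}$, i.e.\ one must ensure that the constant column is not absorbed into the $Z\!B$-spline evaluations at the chosen abscissae -- a property that does not follow from bare full column rank of the marginals alone. I would settle this by passing back to the $B$-spline basis via $\overline{\mathbf{Z}}_{k+1}(x)=\mathbf{T}_{\lambda}\mathbf{B}_{k+1}(x)$ with the regular matrix $\mathbf{T}_{\lambda}$ of \eqref{matrixT_lambda} (cf.\ \eqref{transf_basis_x_2}), which identifies full column rank of $\overline{\mathbf{Z}}_{k+1}(\mathbf{x})^{\top}$ with that of the classical $B$-spline collocation matrix, and hence with the Schoenberg--Whitney condition requiring each basis spline's support to contain a design point; an identical argument in $y$ closes the equivalence. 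Assembling steps (i)--(ii) then yields existence and uniqueness under the stated rank condition together with the coefficient system \eqref{eq_system}, completing the proof.
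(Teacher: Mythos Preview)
Your approach coincides with the paper's: the derivation preceding the theorem statement already recasts $J_{pq}$ as the quadratic \eqref{compfunction}, reduces unique solvability to positive definiteness of $\mathbf{G}(\rho)$, then (using positive semidefiniteness of $\mathbb{N}$) to full column rank of $(\mathbb{Z}\ \mathbb{Z}_x\ \mathbb{Z}_y)$, and finally to the marginal rank condition on $\mathbf{Z}_{k+1}(\mathbf{x})$ and $\mathbf{Z}_{l+1}(\mathbf{y})$, with Schoenberg--Whitney cited as a sufficient criterion---precisely your steps (i)--(iii), and the normal equations \eqref{eq_system} drop out of the stationarity condition in both treatments.

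Where you go further than the paper is in the rank reduction of step (ii). The paper simply asserts that full column rank of $(\mathbb{Z}\ \mathbb{Z}_x\ \mathbb{Z}_y)$ is equivalent to that of the two marginal matrices (``one can easily see''), whereas you correctly flag that the Kronecker structure actually involves the \emph{augmented} evaluation matrices $\overline{\mathbf{Z}}_{k+1}(\mathbf{x})^{\top}=(\mathbf{Z}_{k+1}(\mathbf{x}),\mathbf{e}_m)$ and that adjoining the constant column is not a priori rank-preserving. Your resolution via the regular change of basis $\mathbf{T}_\lambda$ back to the $B$-spline collocation matrix is sound and more careful than the paper's treatment. One residual point: having identified the augmented-matrix rank with the $B$-spline/Schoenberg--Whitney condition, you do not explicitly close the loop back to full column rank of $\mathbf{Z}_{k+1}(\mathbf{x})$ as literally stated in the theorem; the paper is equally loose here (offering Schoenberg--Whitney only as sufficient), so your argument matches the paper's level of rigor on this last link.
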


The resulting smoothing spline is affected by the choice of degrees $k$, $l$ and the degrees of the respective derivatives \mbox{$p$, $q$}. Moreover, the quality of the resulting smoothing spline strongly depends on the choice of the number of knots and their positions, which should reflect the main features of the data. With fixed parameters, the optimal value of smoothing parameter $\rho$ can then be determined using generalized cross-validation (GCV) \cite{ramsay05}. Note that degrees of spline are typically selected by the user, to determine the number and position of the knots, advanced knots insertion algorithms can be employed, see \cite{Dierckx}.

The solution of the system of linear equations \eqref{eq_system} can be formally expressed as
\begin{equation*}
\begin{pmatrix}
cs(\mathbf{Z}) \\ \mathbf{v} \\ \mathbf{u}
\end{pmatrix} =
\mathbf{G}(\rho)^{-1}\,\overline{\mathbb{Z}}\,cs(\mathbf{F}),
\end{equation*}
where $\mathbf{G}(\rho)$ is defined in \eqref{compfunction_matrix} and $\overline{\mathbb{Z}} = \begin{pmatrix} \mathbb{Z} & \mathbb{Z}_{x} & \mathbb{Z}_{y}
\end{pmatrix}^{\top}$. Subsequently, function values of the smoothing spline at given points can be expressed using \eqref{ZB_repr_3} as
\begin{equation*}    
s_{kl}^{*}(\mathbf{x},\mathbf{y}) =\left(\overline{\mathbf{Z}}_{l+1}^{\top}(\mathbf{y})\otimes\overline{\mathbf{Z}}^{\top}_{k+1}(\mathbf{x})\right)\,\mathbf{P}\,\mathbf{G}(\rho)^{-1}\,\overline{\mathbb{Z}}\,cs(\mathbf{F}) = \mathbf{H}(\rho)\,cs(\mathbf{F}).
\end{equation*}
Here
$$
\mathbf{H}(\rho) = \left(\overline{\mathbf{Z}}_{l+1}^{\top}(\mathbf{y})\otimes\overline{\mathbf{Z}}_{k+1}^{\top}(\mathbf{x})\right)\,\mathbf{P}\,\mathbf{G}(\rho)^{-1}\,\overline{\mathbb{Z}}
$$
represents the projection matrix (called \textit{hat matrix}), with which we can define the GCV criterion for choosing $\rho$ as
\begin{equation}\label{GCV}
\mbox{GCV}(\rho)=\frac{\frac{1}{n\,m}\sum_{i=1}^{n}\sum_{j=1}^{m}\left(f_{ij}-s_{kl}^{*}(x_{i},y_{j})\right)^{2}}{\left(1-\mbox{trace}(\mathbf{H}(\rho))/n\,m\right)^{2}}.
\end{equation}

\section{Compositional splines in ${\cal B}^{2}(\Omega)$} \label{backtoBayes}
The $Z\!B$-spline representation enables to analyze density functions using popular methods embedded in the $L^{2}(\Omega)$ space. However, from a theoretical perspective, it is crucial to express spline functions directly in the original space ${\cal B}^{2}(\Omega)$. This can be achieved using the inverse clr transformation defined in \eqref{iclr}. For given \mbox{$\Omega = [a,b]\times[c,d]\subset\mathbb{R}^{2}$}, extended sequences of knots \eqref{exknotsX}, \eqref{exknotsY} and $k,\,l\in\mathbb{N}_{0}$, we denote by ${\cal C}_{kl}^{\Delta\lambda,\Delta\mu}(\Omega) \subset {\cal B}^{2}(\Omega)$ the vector space of \textit{compositional splines} $\xi_{kl}(x,y)$ defined on $\Omega$ of degree $k$ in $x$ with knots $\Delta\lambda$ and of degree $l$ in $y$ with knots $\Delta\mu$. Since the clr transformation is an isomorphism between ${\cal B}^{2}(\Omega)$ and $L^{2}_{0}(\Omega)$ (i.e. also between ${\cal C}_{kl}^{\Delta\lambda,\Delta\mu}(\Omega)$ and ${\cal Z}_{kl}^{\Delta\lambda,\Delta\mu}(\Omega)$) it follows that
$$
\dim\left({\cal C}_{kl}^{\Delta\lambda,\Delta\mu}(\Omega)\right) = (g+k+1)(h+l+1)-1.
$$
Moreover, it allows to define \textit{compositional $B$-splines ($C\!B$-splines)} in ${\cal B}^{2}(\Omega)$ as
\begin{align*}
\zeta_{ij}^{k+1,l+1}(x,y) &=_{{\cal B}^{2}(\Omega)} \exp\left[Z_{ij}^{k+1,l+1}(x,y)\right],\\
\zeta_{i}^{k+1}(x,y) &=_{{\cal B}^{2}(\Omega)} \exp\left[Z_{i}^{k+1}(x,y)\right],\\
\zeta_{j}^{l+1}(x,y) &=_{{\cal B}^{2}(\Omega)} \exp\left[Z_{j}^{l+1}(x,y)\right],
\end{align*}
for $i=-k,\dots,g-1,\ j=-l,\dots,h-1$. Examples of $C\!B$-splines are pictured in Figure \ref{ZB_and_CB_splines_1}.

\begin{figure}[h!]
    \centering
    \includegraphics[width=0.35\textwidth]{Z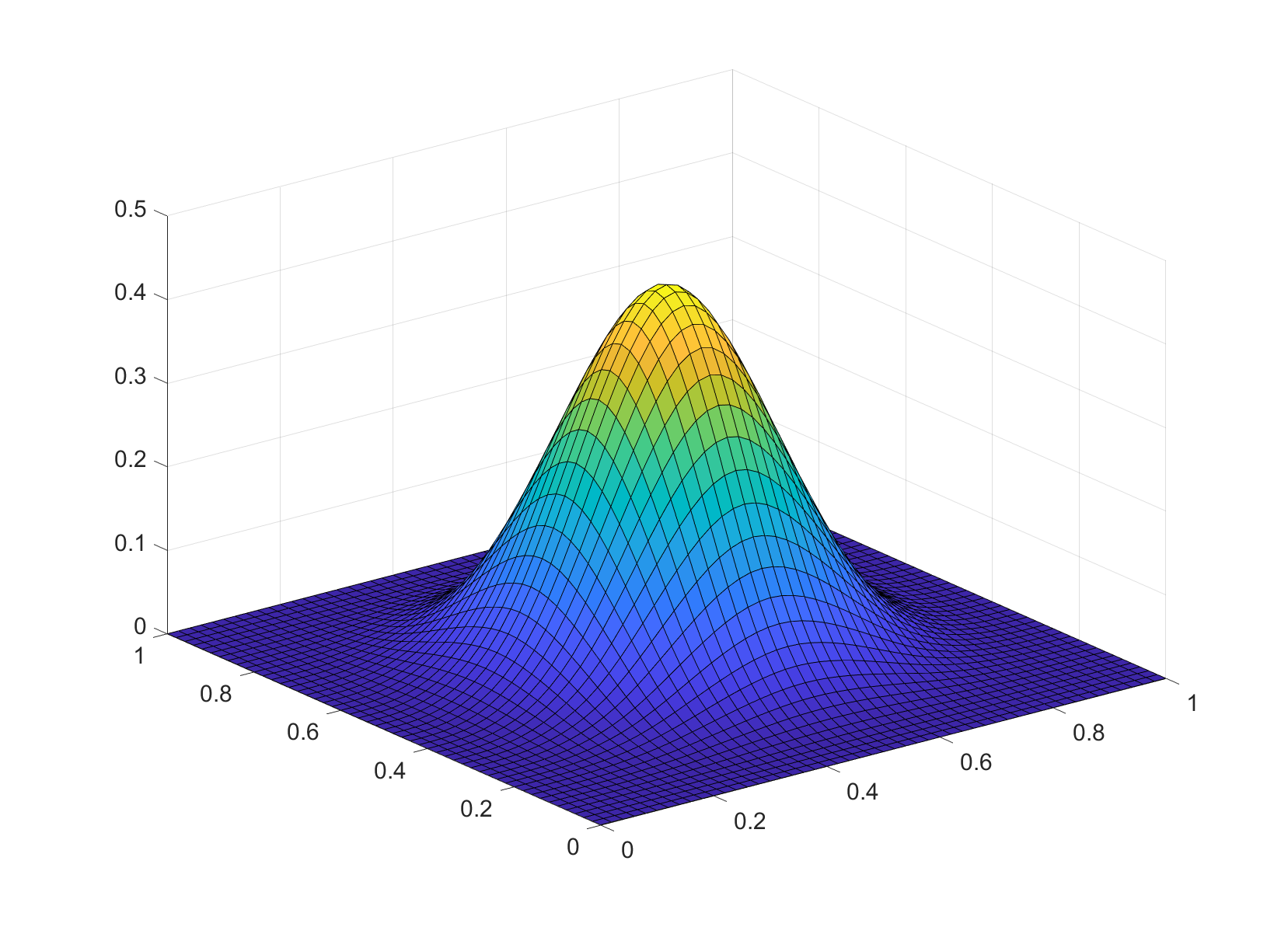}
    \includegraphics[width=0.35\textwidth]{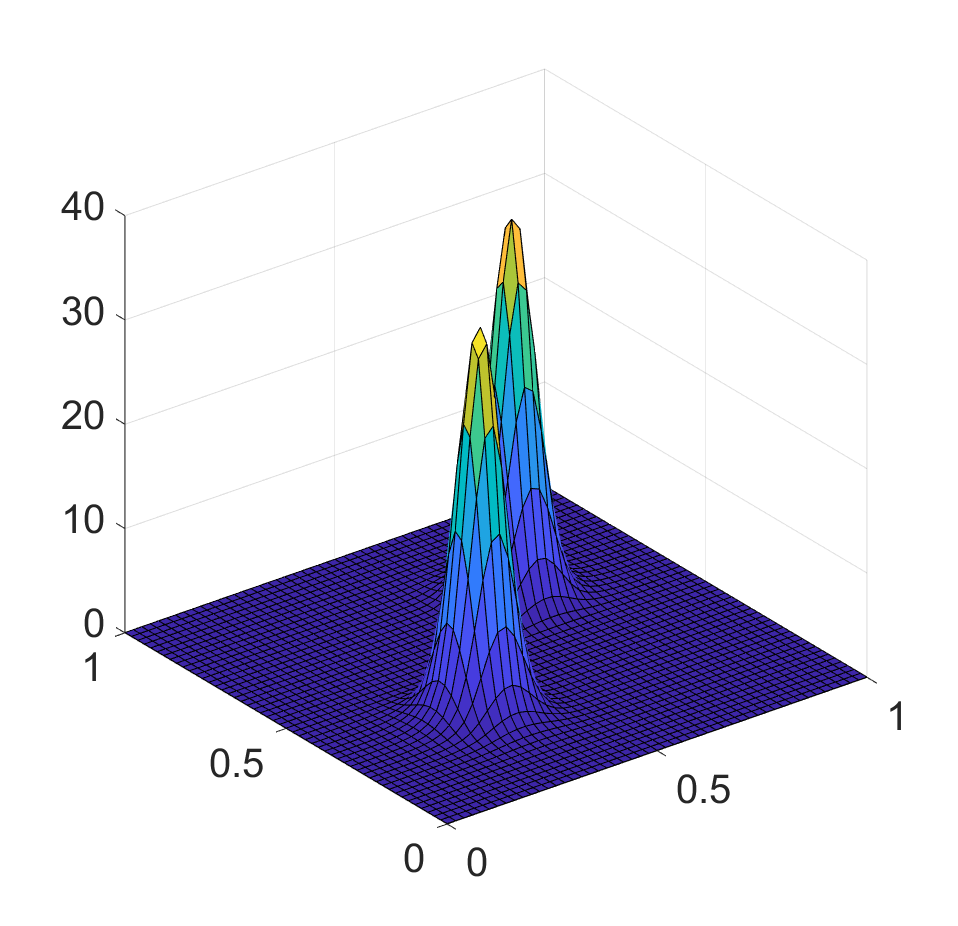} \\
    \includegraphics[width=0.35\textwidth]{ZB-spline_const_2.png}
    \includegraphics[width=0.35\textwidth]{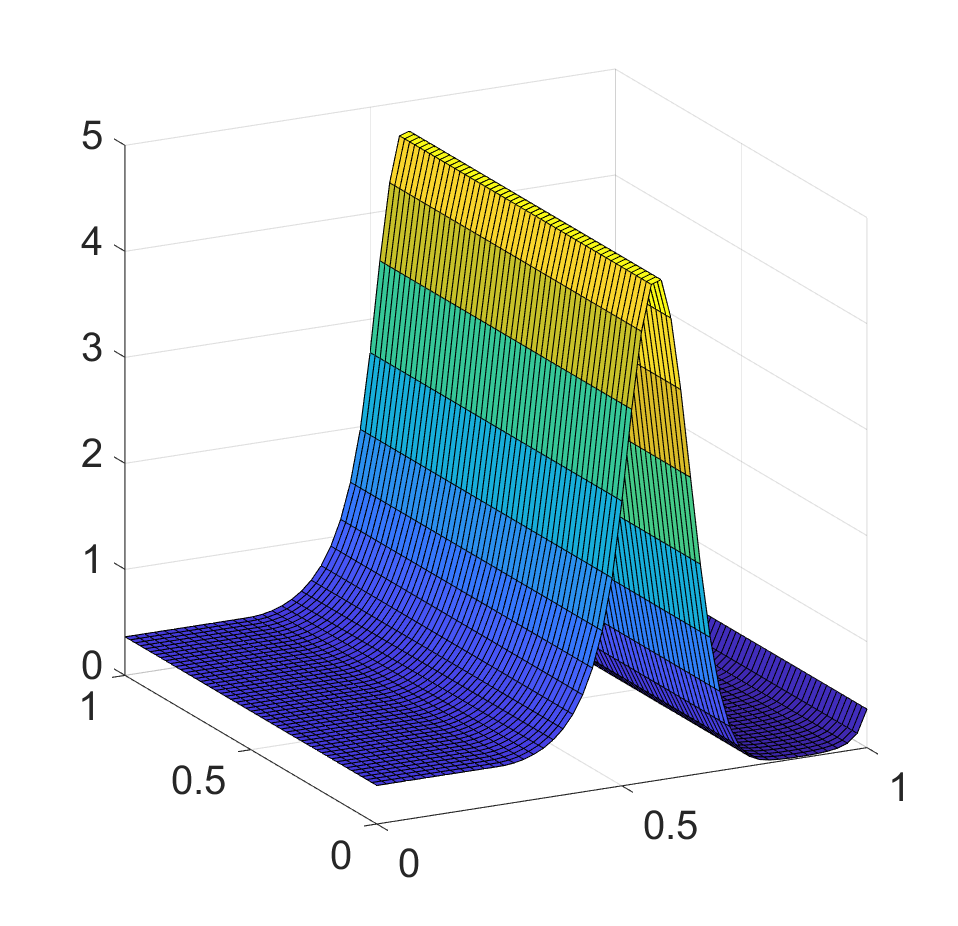}
    \caption{An example of a $Z\!B$-spline $Z_{ij}^{k+1,l+1}(x,y)$ (top, left) and corresponding $C\!B$-splines $\zeta_{ij}^{k+1,l+1}(x,y)$ (top, right) and an example of a $Z\!B$-spline $Z_{i}^{k+1}(x,y)$ (bottom, left) and corresponding $C\!B$-spline $\zeta_{i}^{k+1}(x,y)$ (bottom, right)}
    \label{ZB_and_CB_splines_1}
\end{figure}

Note that $C\!B$-splines meet the positivity and unit integral constraints, as required for PDFs. Finally, isometric properties of the clr transformation ensure that every compositional spline $\xi_{kl}(x,y)\in{\cal C}_{kl}^{\Delta\lambda,\Delta\mu}(\Omega)$ can be expressed uniquely as
$$
\xi_{kl}(x,y) = \bigoplus_{i=-k}^{g-1}\bigoplus_{j=-l}^{h-1} z_{ij}\,\odot\,\zeta_{ij}^{k+1,l+1}(x,y)\,\oplus\,\bigoplus_{i=-k}^{g-1}v_i\,\odot\,\zeta_{i}^{k+1}(x,y)\,\oplus\,\bigoplus_{j=-l}^{h-1}u_j\,\odot\,\zeta_{j}^{l+1}(x,y).
$$
In addition, isometric properties allow to decompose the spline $\xi_{kl}(x,y)$ orthogonally into the interactive part
$$
\xi_{kl}^{int}(x,y) = \bigoplus_{i=-k}^{g-1}\bigoplus_{j=-l}^{h-1} z_{ij}\,\odot\,\zeta_{ij}^{k+1,l+1}(x,y)
$$
and the independent part
$$
\xi_{kl}^{ind}(x,y) = \left(\bigoplus_{i=-k}^{g-1}v_i \odot\,\zeta_{i}^{k+1}(x,y)\right)\,\oplus\,\left(\bigoplus_{j=-l}^{h-1}u_j\,\odot\,\zeta_{j}^{l+1}(x,y)\right).
$$
A compositional spline represents an approximation of a density function in ${\cal B}^2(\Omega)$. Its representation and decomposition directly in ${\cal B}^{2}(\Omega)$ is a useful and strong theoretical result in terms of Bayes space methodology. However, in applications it is often more convenient to work with  the  corresponding and equivalent $Z\!B$-spline representation in $L_0^{2}(\Omega)$, as demonstrated in the following sections.

\section{Simulation study} \label{simul}
In this section, we will examine the properties of the proposed representation on simulated data, generated from a generalized (bivariate) beta distribution, with an emphasis on the quality of the spline approximation of the true bivariate beta density.


\subsection{Data simulation} \label{data:simu}
Consider the bivariate beta distribution with density
$$
f(x,y)\,=\,\frac{1}{B(\alpha_{0},\alpha_{1},\alpha_{2})} \frac{x^{\alpha_{1}-1} \, (1-x)^{\alpha_{0}+\alpha_{2}-1} \, y^{\alpha_{2}-1} \, (1-y)^{\alpha_{0}+\alpha_{1}-1}}{(1-xy)^{\alpha_{0}+\alpha_{1}+\alpha_{2}}},
$$
where $B(\alpha_{0},\alpha_{1},\alpha_{2}) = \Gamma(\alpha_{0})\Gamma(\alpha_{1})\Gamma(\alpha_{2})/\Gamma(\alpha_{0}+\alpha_{1}+\alpha_{2})$ is the generalized beta function, $\alpha_{0},\,\alpha_{1},\,\alpha_{2}\in(0,\infty)$ are parameters of the distribution and $x,\,y\in(0,1)$, hence, $\Omega=[0,1]\times[0,1]$ is the considered domain. Draws from the distribution were generated using the Monte Carlo accept-reject generation algorithm. Specifically, density $f(x,y)$ is the target probability density function, and let $g(x,y)$ be the density of  the bivariate uniform distribution on $\Omega$, which is used as a proposal probability density function. Finally, the condition
$$
f(x,y)\,\leq\,M g(x,y),\quad x,\,y\in(0,1),
$$
holds for some constant $M>0$. The accept-reject algorithm can be summarized in the following steps:
\begin{enumerate}
    \item Independently generate  values $X,\,Y$ and $U$ from a univariate uniform distribution on the interval $(0,1)$.
    \item If $U\leq\frac{f(X,Y)}{M\,g(X,Y)}$, then accept $(X,Y)$, otherwise repeat  step 1.
    \item $(X,Y)$ then has probability density function $f(x,y)$.
\end{enumerate}
The parameters of the beta distribution were set to $\alpha_{0}=\alpha_{1}=\alpha_{2}=3$ (note that the limiting case $\alpha_0+\alpha_1+\alpha_2 \rightarrow 0$ corresponds to the independence case), see Figure \ref{fig:beta_simul} (left). The maximum of $f(x,y)$ can be found numerically and it can be easily verified that its value is approximately $4.097$, which determines the lower bound for choosing the constant $M$ since $g(x,y)$ is constant at level 1. Therefore, the Monte Carlo rejection parameter was set to $M=4.1$. The algorithm was terminated after we reached 3000 observations with density $f(x,y)$. Subsequently, the values were aggregated in a histogram (Figure \ref{fig:beta_simul}, right), from which we estimated the probability density function using compositional splines, since the histogram approximates the PDF up to proportionality. Since there is no bivariate counterpart of Sturges' rule, the number of histogram classes (also referred to as histogram bins) was set heuristically to $m=n=10$ equidistant classes each in the direction of both axes - this choice will be further examined in the following subsection.

\begin{figure}[h!]
    \centering
    \includegraphics[width=0.3\textwidth]{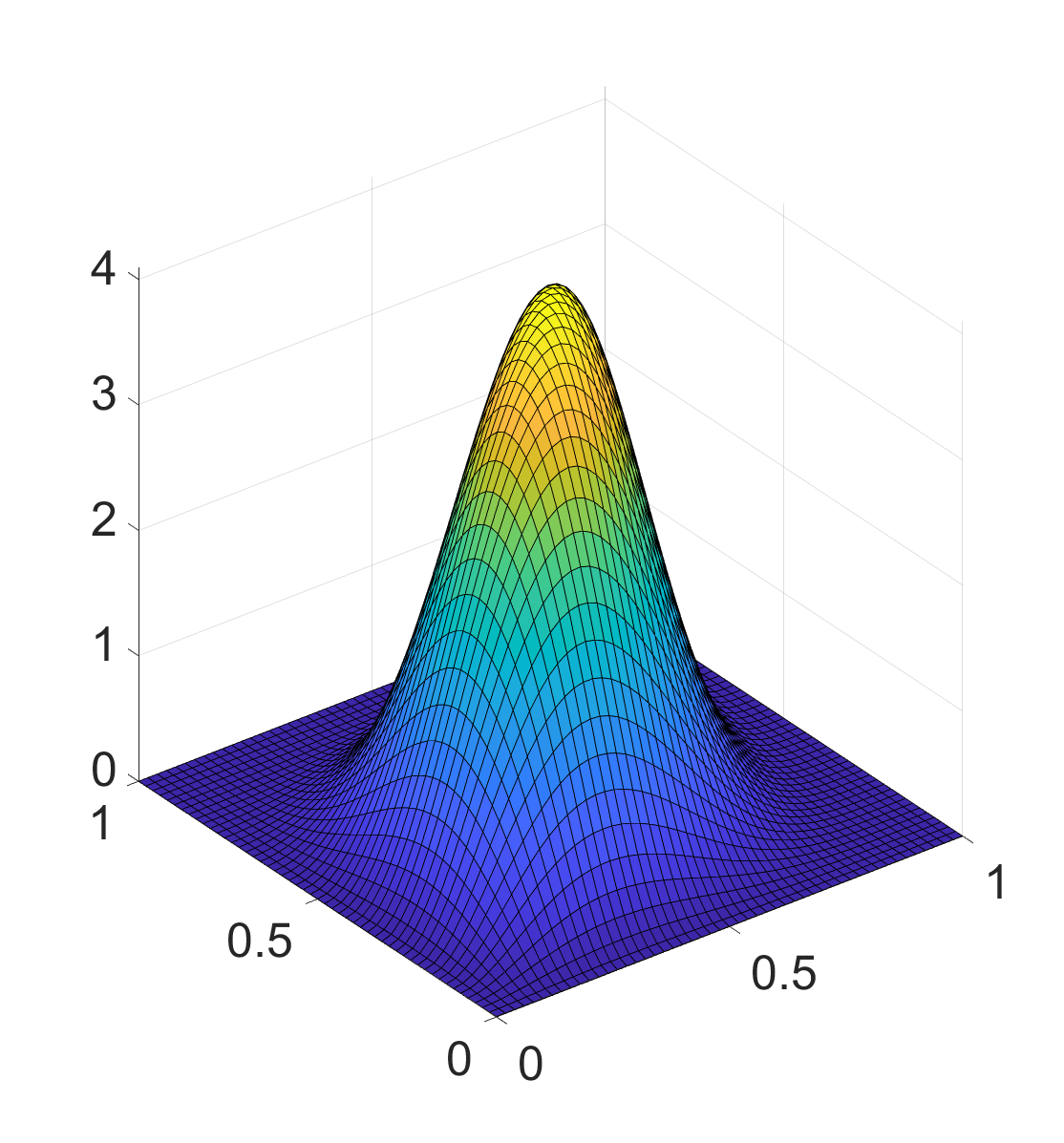}
    \includegraphics[width=0.3\textwidth]{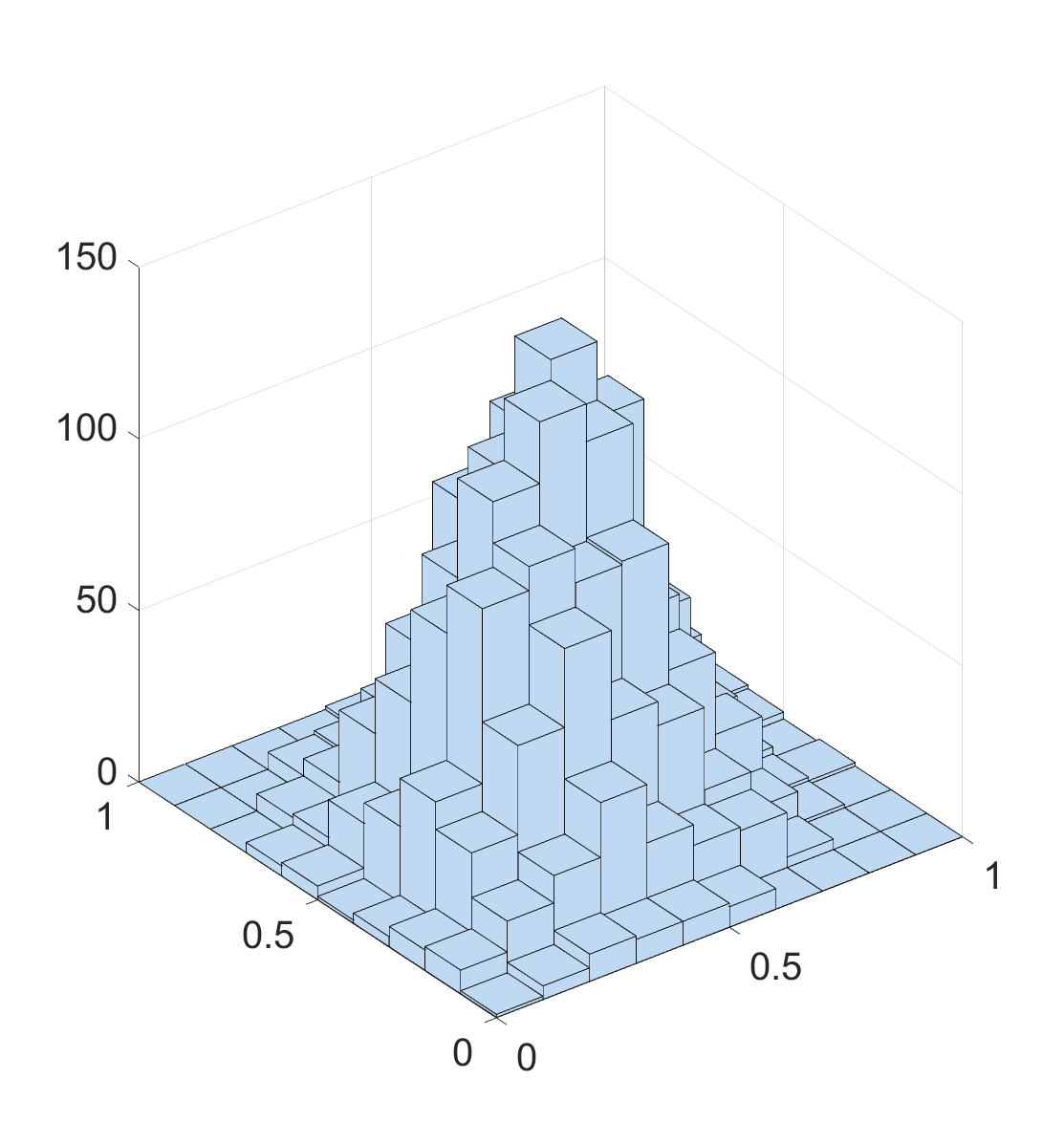}
    \caption{Theoretical density of a bivariate beta distribution with parameters $\alpha_{0}=\alpha_{1}=\alpha_{2}=3$ (left) and the  histogram of $3000$ generated values from this beta distribution 
    (right)}
    \label{fig:beta_simul}
\end{figure}

Let $\mathbf{x} = \mathbf{y} = (0.05,0.15, 0.25,0.35,0.45,0.55,0.65,0.75,0.85, 0.95)$ be the centers of each class in the direction of variables $x$ and $y$ and let $\mathbf{F} = (f_{ij})_{i=1,j=1}^{10,10}$ be the corresponding matrix of clr transformed histogram frequencies obtained using the discrete bivariate clr transformation defined in \citep{hron22}. Then $(\mathbf{x},\,\mathbf{y},\,\mathbf{F})$ represents the final preprocessed data. We repeated the Monte Carlo simulation 100 times and employed tensor product smoothing splines according to 
Theorem~\ref{biv_smoothing_th} with the following settings. Given the shape of the beta distribution, the degree of spline was set to $k=l=2$ in the direction of both variables, with the degree of the respective derivatives $p=q=1$. The knots were placed equidistantly to $\Delta\lambda = \Delta\mu = (0,\ 0.25,\ 0.5,\ 0.75,\ 1)$ and, finally, the optimal value of the smoothing parameter $\rho$ was determined by GCV \eqref{GCV}, calculated as the mean criterion from all simulated densities, resulting in $\rho=0.001$ (see Figure~\ref{fig:GCVerrors}).
 
\begin{figure}[h]
  \centering
\includegraphics[width=0.7\textwidth]{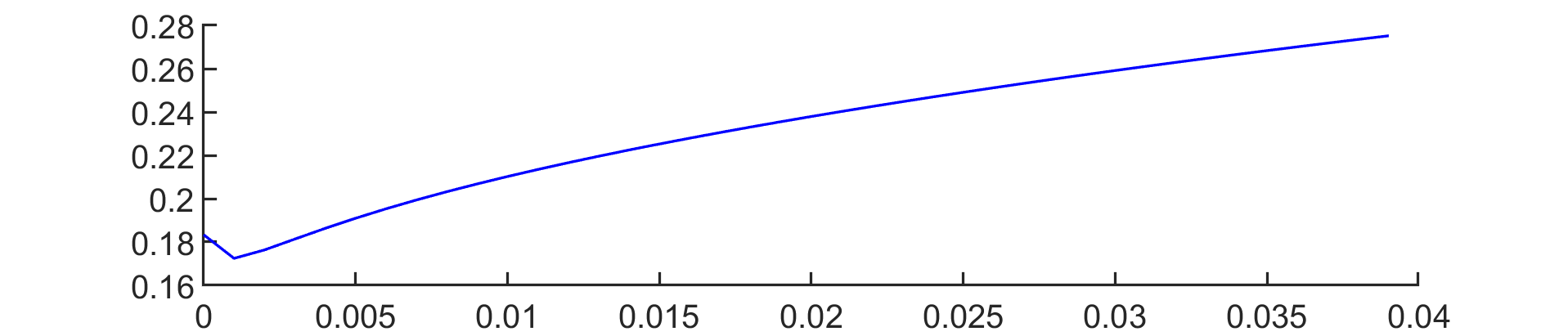}
  \caption{The mean GCV criterion for the choice of the optimal smoothing parameter $\rho$ for fitting the clr transformed histogram values by a tensor product smoothing spline.}
  \label{fig:GCVerrors}
\end{figure}

Selecting one of the random generations, the resulting $Z\!B$-spline approximation is depicted in Figure \ref{simulation_approximation}. Using the inverse clr transformation \eqref{iclr}, the  approximation in the clr-space can be converted back to the compositional spline in the original ${\cal B}^2(\Omega)$ space, where it forms the final estimate of the bivariate beta distribution. The result is visualized in Figure \ref{simulation_estimate}.
 
\begin{figure}[h!]
  \centering
  \includegraphics[width=0.35\textwidth]{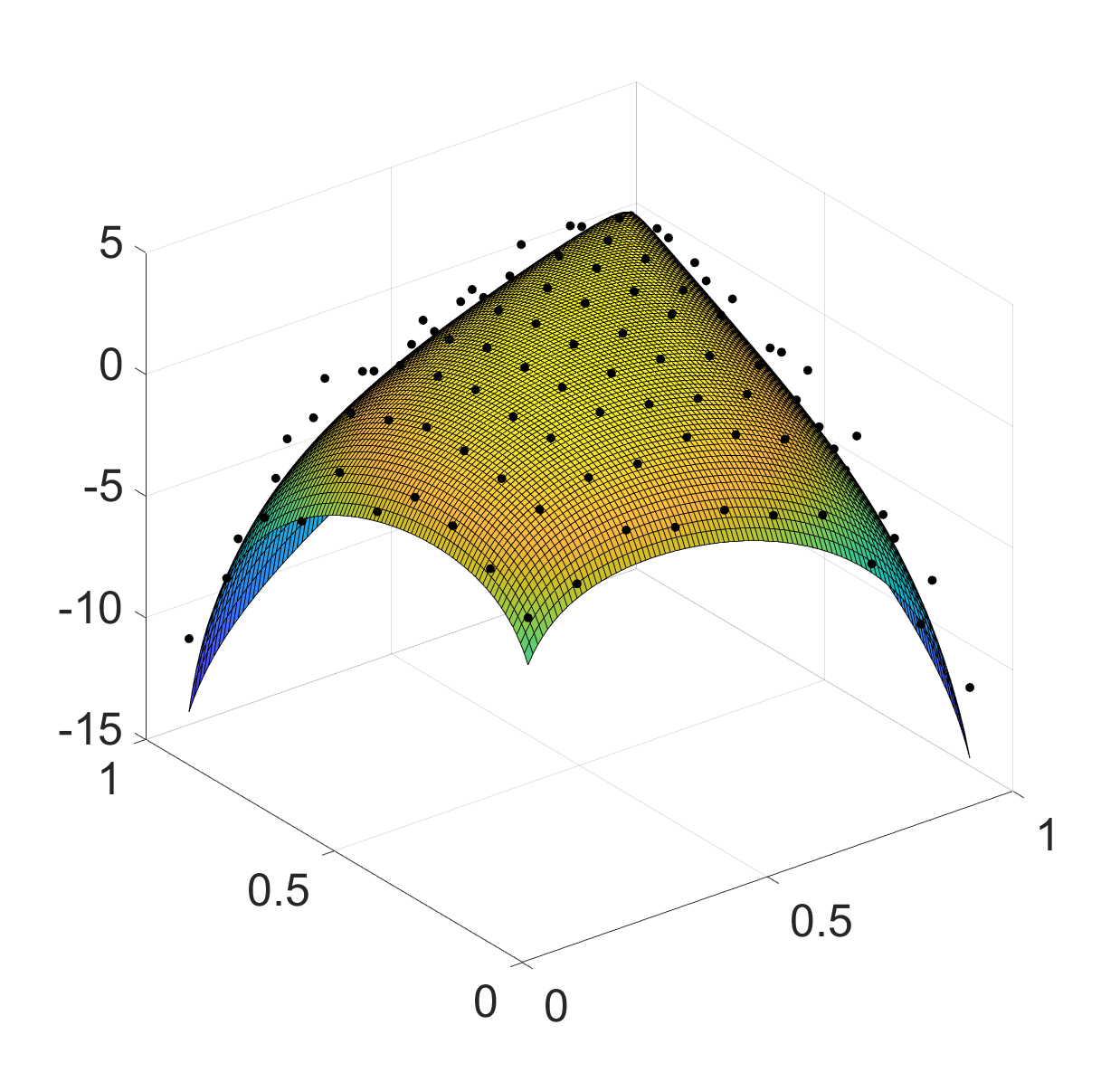}
  \includegraphics[width=0.35\textwidth]{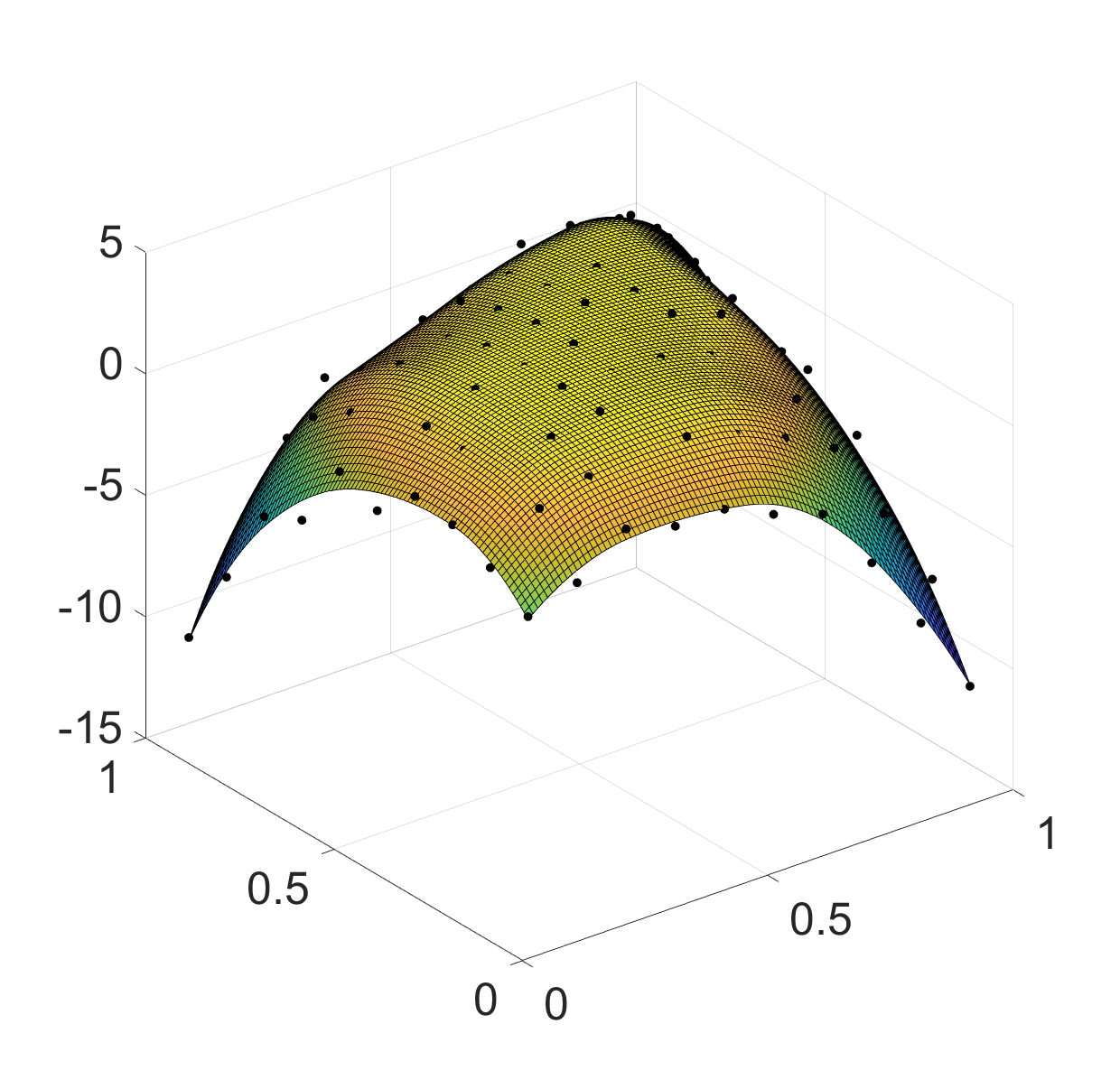}  
  \caption{The clr transformed density of the beta distribution (true density) with the clr transformed histogram data as points (left) and the corresponding $Z\!B$-spline approximation (density estimate), again together with highlighted transformed histogram data (right)}
  \label{simulation_approximation}
\end{figure}

\begin{figure}[h!]
  \centering
  \includegraphics[width=0.35\textwidth]{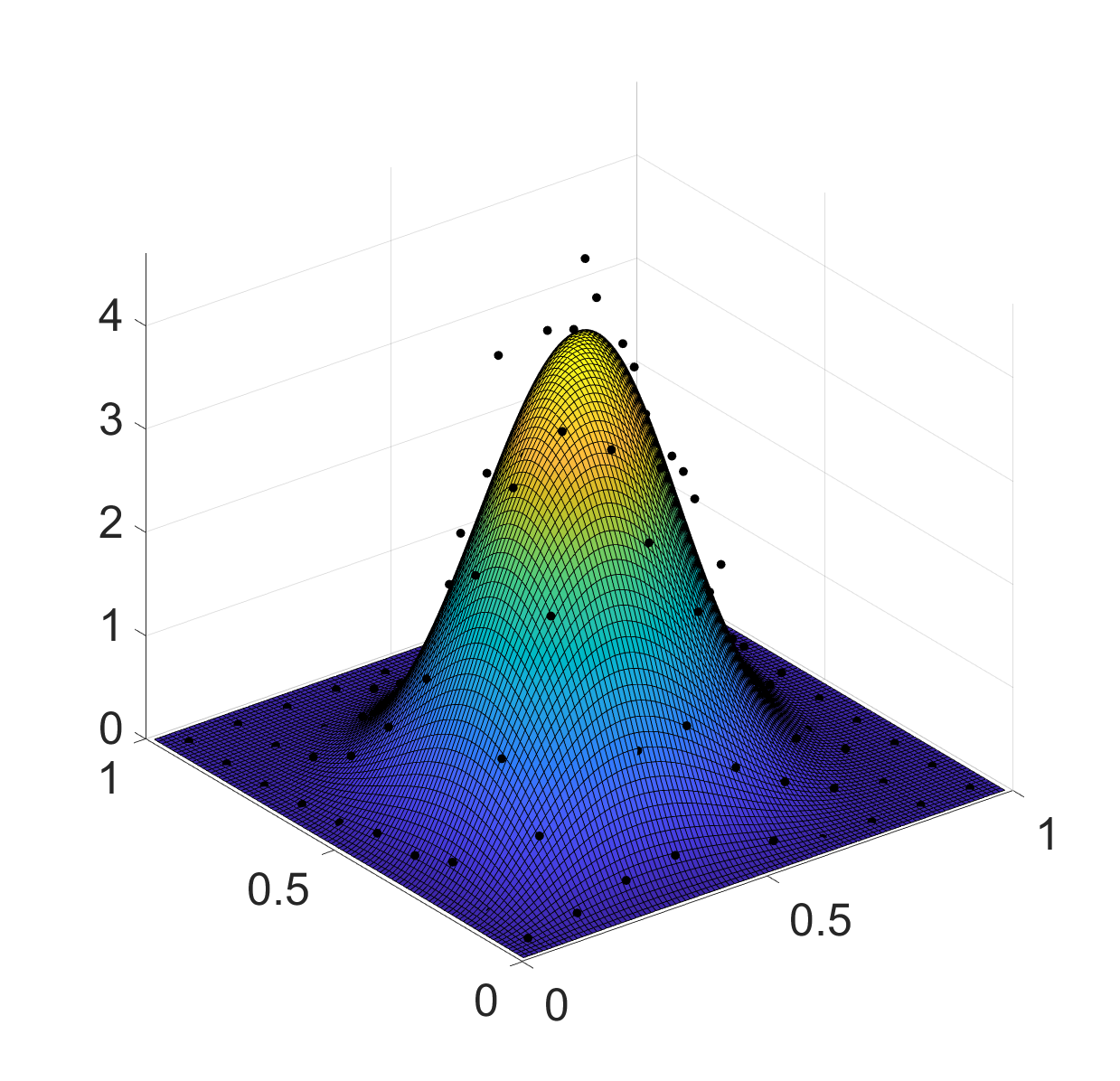}
  \includegraphics[width=0.35\textwidth]{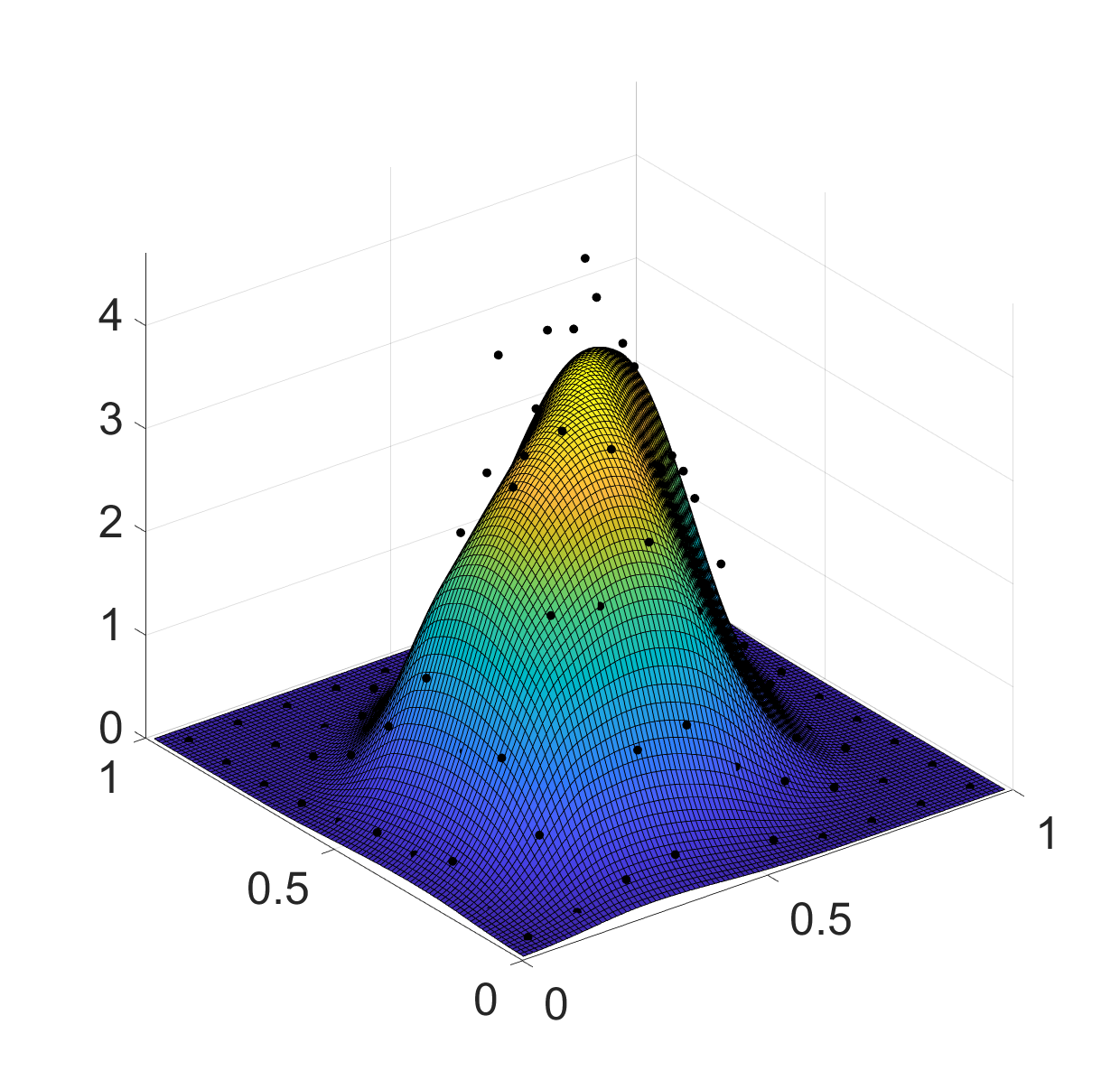}  
  \caption{The true density of the bivariate beta distribution (left) and corresponding compositional spline as its estimate (right) with highlighted iclr transformed histogram data}
  \label{simulation_estimate}
\end{figure}

As described in Section \ref{Bayes}, the bivariate density (and its estimate) can be decomposed into independent and interactive parts directly in ${\cal B}^2(\Omega)$ using Theorem \ref{Bmargin} and Corollary \ref{Bdecomp}. Alternatively, the clr transformed density function (and its $Z\!B$-spline representation) can be decomposed directly in $L^2_0(\Omega)$ using Theorems \ref{clrmarg}, \ref{clrrepr} and the results can then be converted to the original ${\cal B}^2(\Omega)$ space, which is much more convenient from a practical point of view. The corresponding decomposition in $L^2_0(\Omega)$ and ${\cal B}^2(\Omega)$ is depicted in Figures \ref{simulation_ZB_decomposition} and \ref{simulation_comp_decomposition}, respectively.

\begin{figure}[h!]
  \centering
  \includegraphics[width=0.35\textwidth]{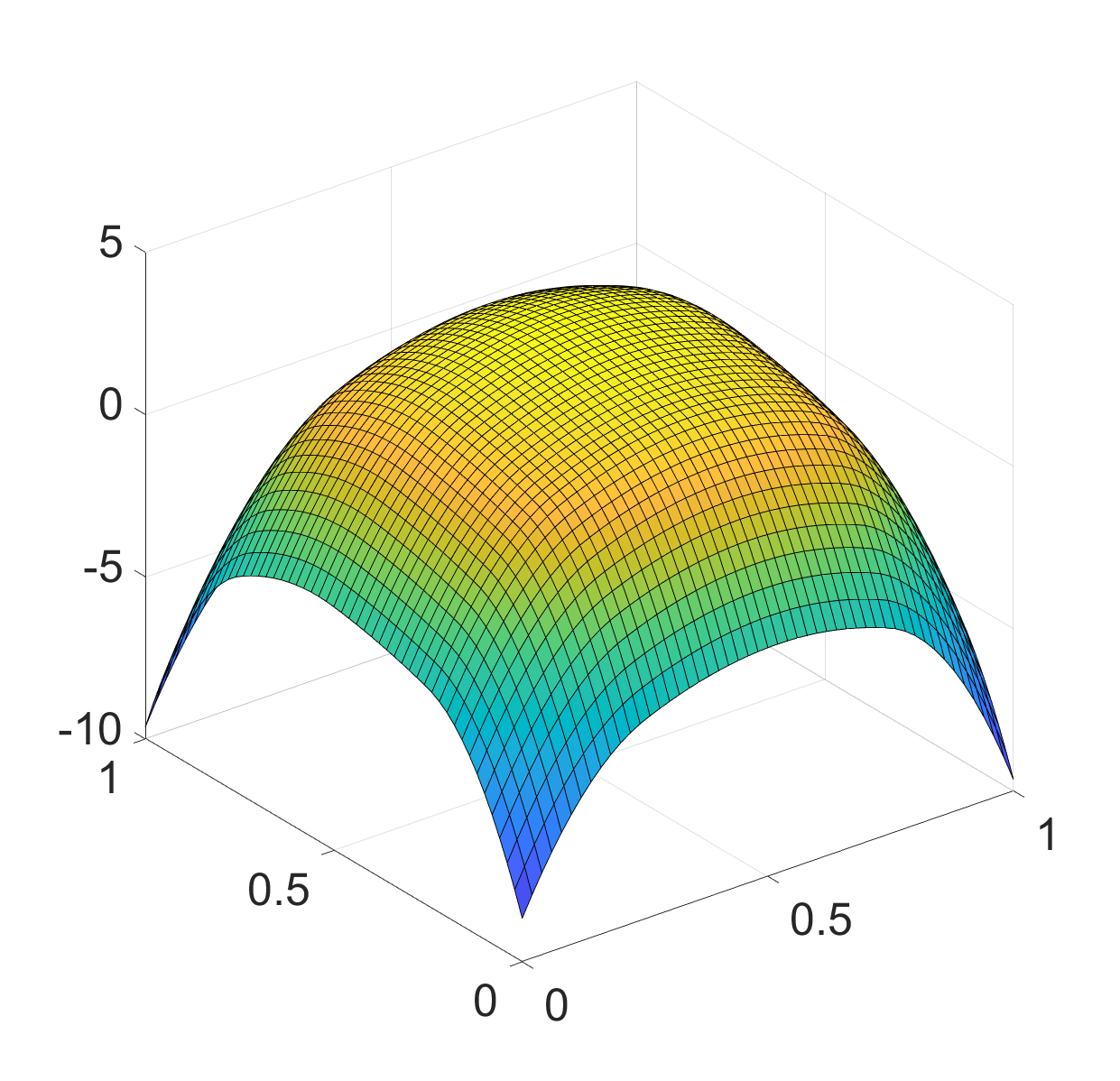}
  \includegraphics[width=0.35\textwidth]{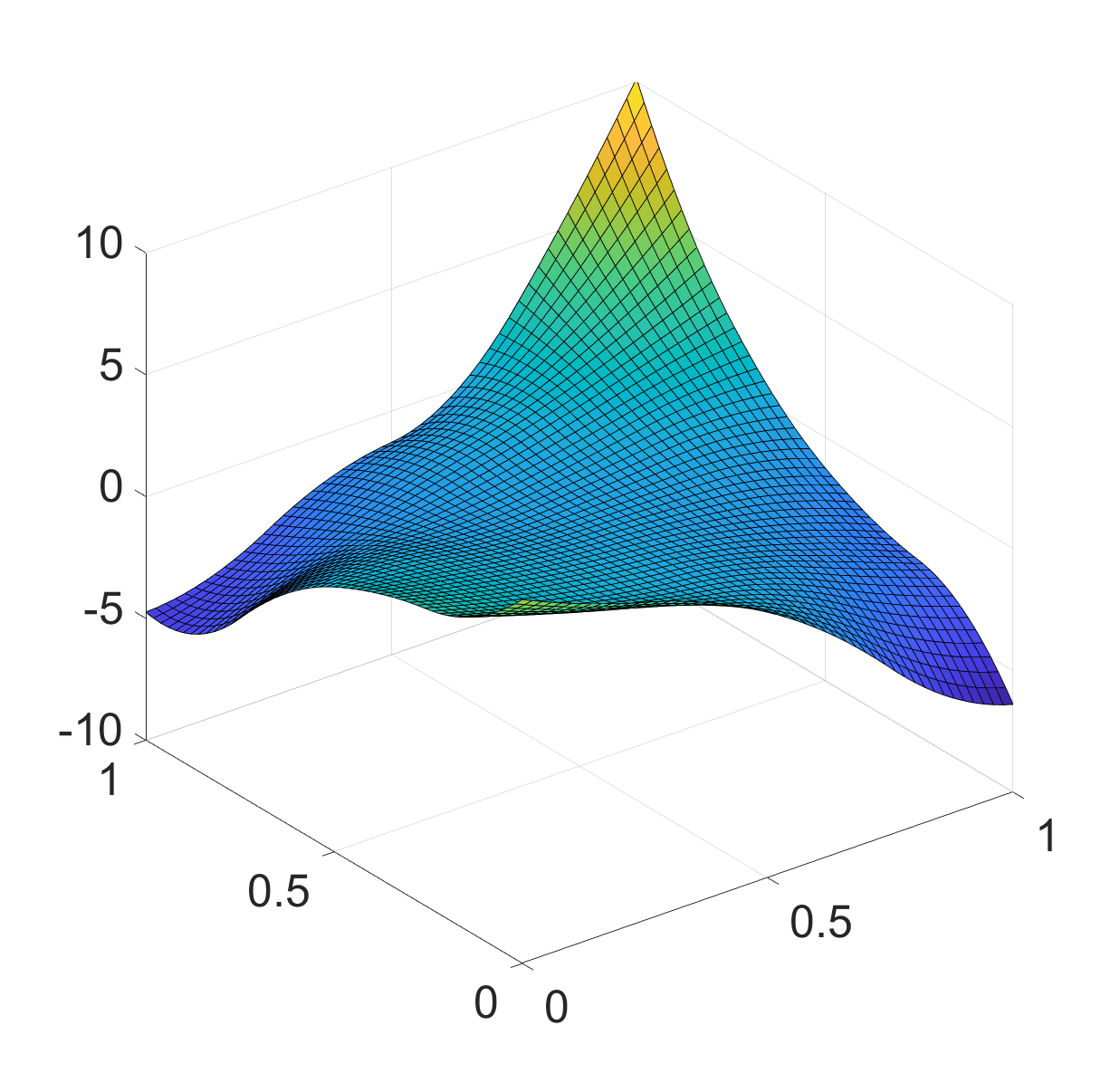}  
  \caption{Independent part (left) and interactive part (right) of the $Z\!B$-spline representation of Figure \ref{simulation_approximation} in $L^2_0(\Omega)$}
  \label{simulation_ZB_decomposition}
\end{figure}

\begin{figure}[h!]
  \centering
  \includegraphics[width=0.35\textwidth]{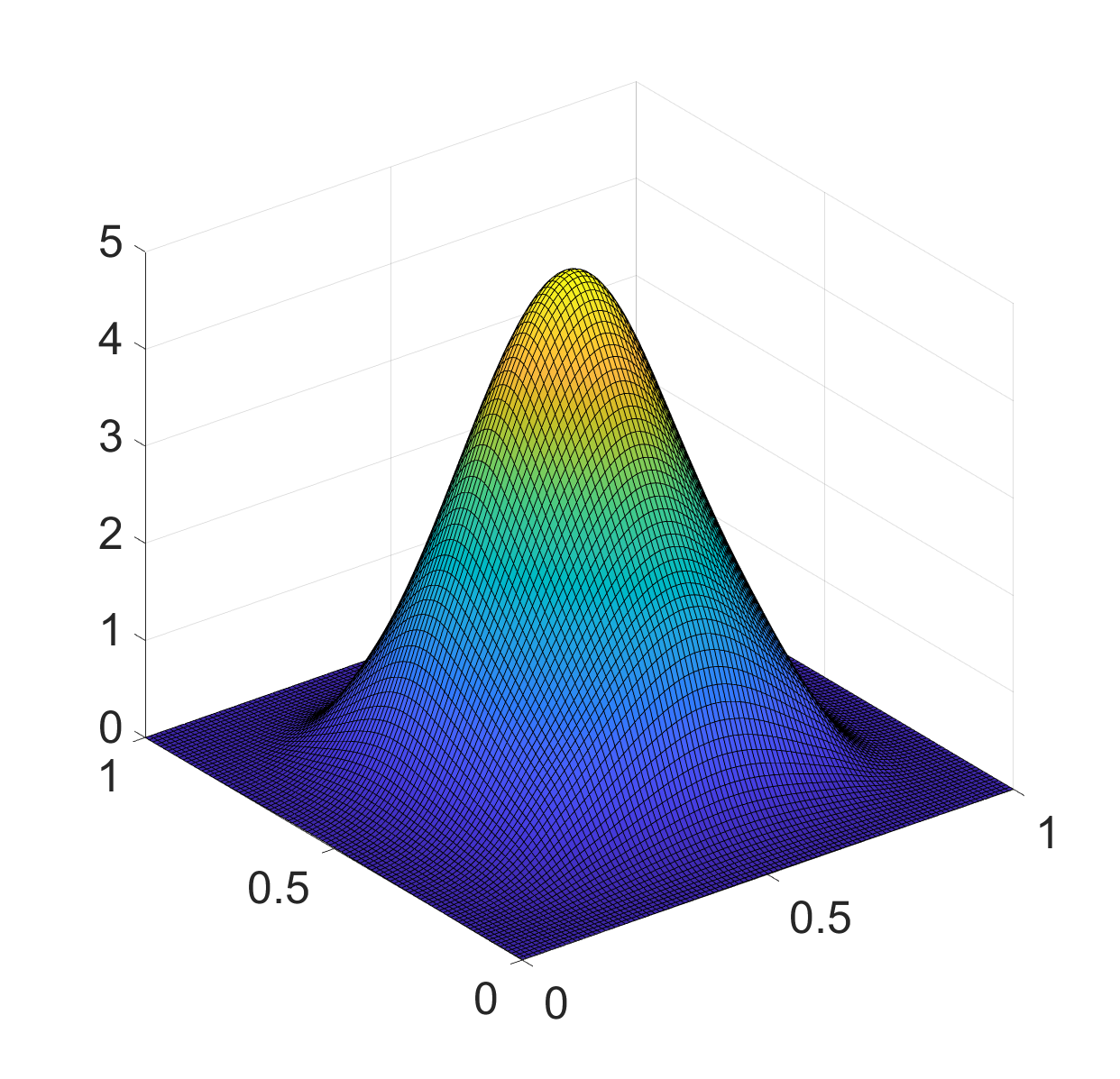}
  \includegraphics[width=0.35\textwidth]{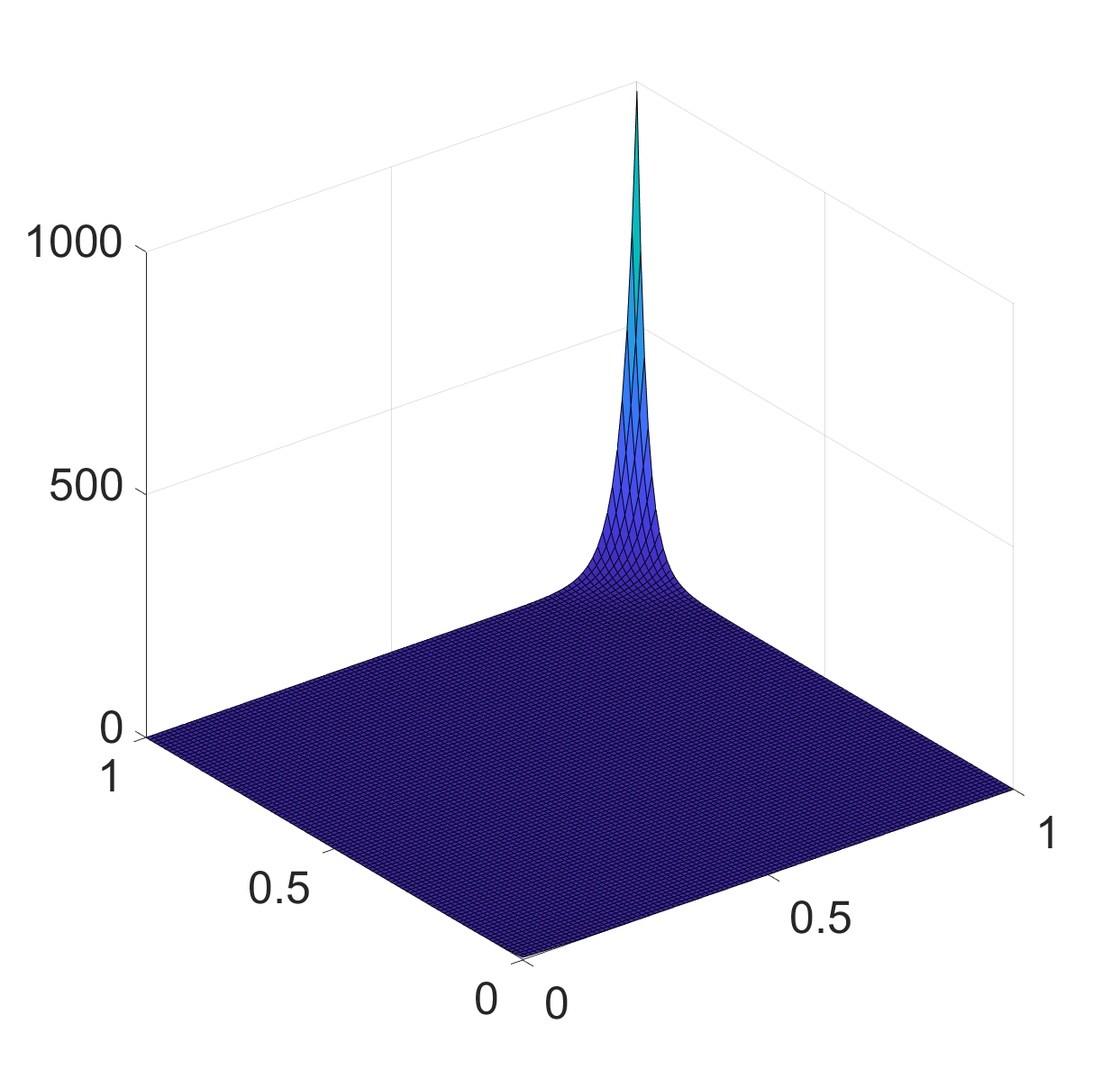}  
  \caption{Independent part (left) and interactive part (right) of the compositional spline of Figure \ref{simulation_estimate} in ${\cal B}^2(\Omega)$}
  \label{simulation_comp_decomposition}
\end{figure}


\subsection{Validation}
The shape and properties of the compositional spline as a density estimate depend on a choice of multiple parameters. This choice is essential for reliable estimation and further data analysis; therefore, it is crucial to determine their optimal values. Above, the histogram was built using $m=n=10$ classes for both $x$ and $y$, which was set heuristically. Thus, as a validation, in the following  we assess the quality of approximation using different numbers of classes. Note that the choice of the number of classes is not completely arbitrary since for a smoothing problem the condition $m\,n \; > \;(g+k)(h+l)+g+k+h+l$ has to be fulfilled. We assumed the original setting of the spline parameters, i.e. the degrees of the spline $k,\,l$, the degrees of the respective derivatives $p,\,q$, the sequences of knots $\Delta\lambda,\,\Delta\mu$ and smoothing parameter $\rho$, to be identical as in Section \ref{data:simu}. Different numbers $m=n$ of histogram bins were compared for the same $R=100$ Monte Carlo simulations as before. The compositional spline $\widehat{f_r}(x,y)$ (density estimate) for the r-th simulated dataset was compared with the true (reference) density $f(x,y)$ of the bivariate beta distribution using the bivariate integrated square error (ISE) \citep{compositional}, i.e.
$$
\mbox{ISE} = \Vert f(x,y) \ominus \widehat{f_{r}}(x,y)\Vert^{2}_{\mathcal{B}^2(\Omega)},\quad r = 1,\dots,\,R.
$$
Note that the criterion can be expressed and calculated equivalently in the $L^2_0$ space and results for one concrete choice of knots are displayed in the form of boxplots in Figure~\ref{beta_ISE_5_classes}.
\begin{figure}[h]
  \centering
  \includegraphics[width=.35\textwidth]{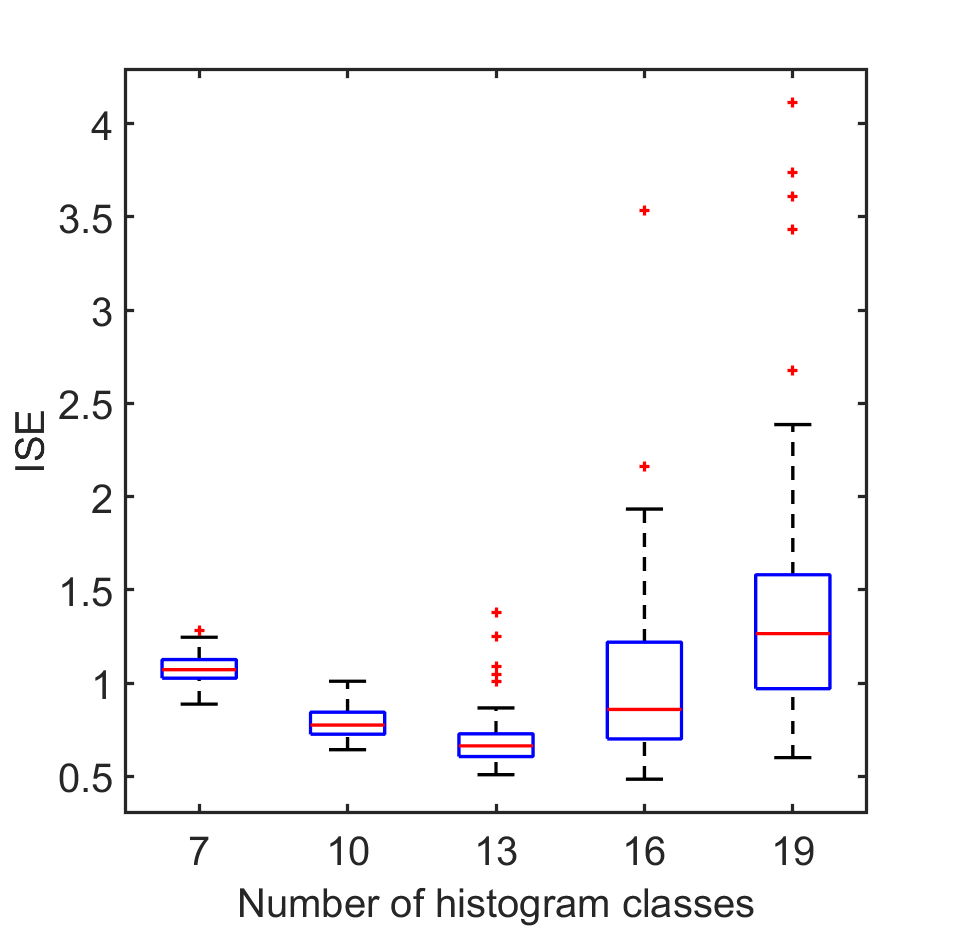}
  \caption{Boxplots of ISE between the reference density and its estimates for different numbers $m=n$ of histogram classes per direction using three interior knots with equidistant placing per direction}
  \label{beta_ISE_5_classes}
\end{figure}

Having set $g=h=3$ and $k=l=2$,  $m=n=6$ per direction is the minimum possible number of histogram classes. In this case, however, the error of approximation is high, because the insufficient number of bins causes oversmoothing of the resulting smoothing spline. The addition of more histogram bins improves the fit and reduces the approximation error. For high numbers, the ISE increases again, showing the typical bias-variance trade-off for the number of classes. Overall, a sufficient number of histogram bins seems necessary to not loose details of the density, while keeping the number small enough to allow for a reasonable average number of observations per bin.

In comparison, if we fix the number of classes to $m=n=13$ and preserve the setting $k=l=2$, the maximum possible number of interior knots is $g=h=10$. However, this approximation is not adequate, 
see the results of a~corresponding ISE comparison in Figure \ref{beta_ISE_13_knots}, which suggests that about 2 interior knots improve the fit.

To conclude, the quality of the resulting approximation depends on several parameters, especially on the number of histogram classes and the number (and position) of interior knots $g,\ h$. Since the reference density is here known, the approximation error can be measured effectively using ISE.  The optimal number of histogram classes and interior knots can then be determined as minimizing the ISE, which represents a compromise between oversmoothing (for lower numbers of bins and knots) and undersmoothing (for higher numbers of bins and knots). In a real data application, the ISE cannot be computed and (generalized) cross-validation needs to be employed, as illustrated in the next section.

\begin{figure}[h]
  \centering
  \includegraphics[width=.35\textwidth]{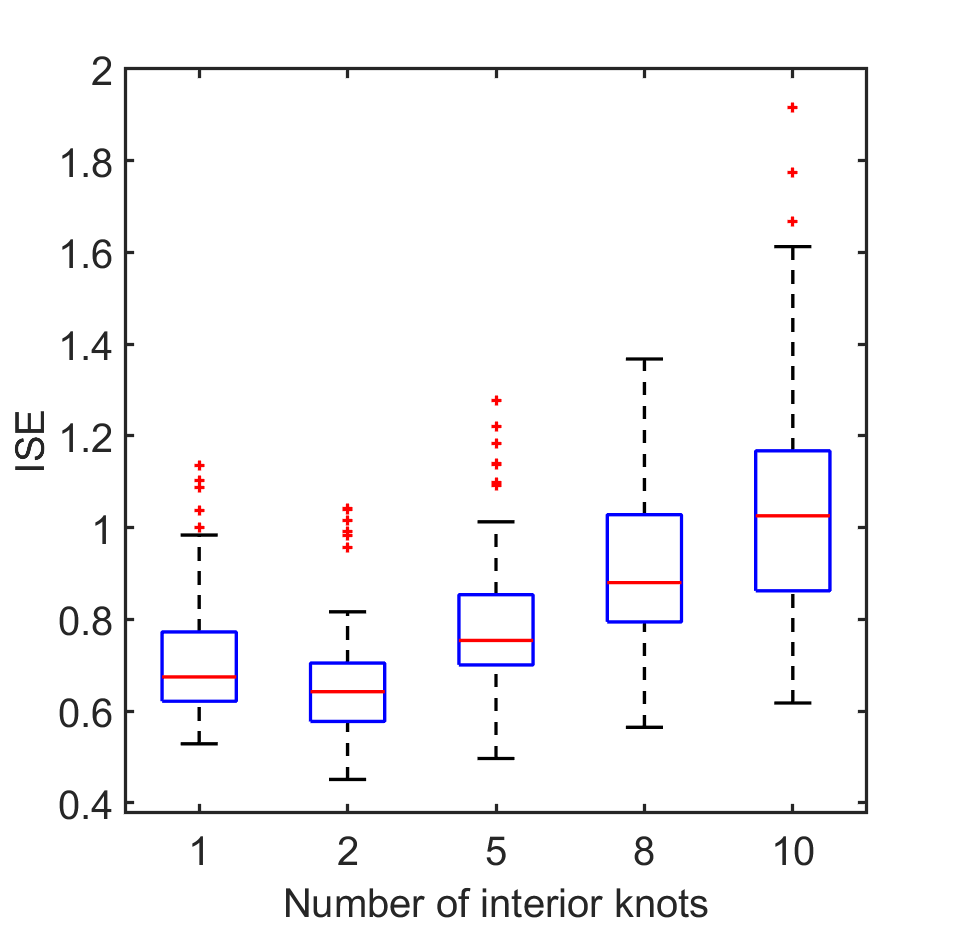}
  \caption{Boxplots of ISE between reference density and its estimates for different numbers of knots per direction using $m=n=13$ histogram classes per direction}
  \label{beta_ISE_13_knots}
\end{figure}

\section{Application} \label{appl}
In this section, we will apply the proposed methodology on empirical geochemical data from the Register of Contaminated Areas (Registr kontaminovan\'ych ploch) collected by the Department of Agriculture of the Czech Republic, which gives information about soil concentrations for various chemical elements in 76 Czech administrative districts \cite{grygar23, grygar24, eagri, vacha15}. Our focus is on distributions of log-transformed values of copper (Cu) and zinc (Zn), and some $N\!A$ values, i.e. not available measurements, occuring in either of the two elements were omitted. Accordingly, the number of observations varies for each district from 40 (Brno-m\v{e}sto) to 5956 (Vyso\v{c}ina).

Histograms with equally-spaced classes with a common range of $[0.588,4.58]$ for (log-)Cu and $[1.459,5.663]$ for (log-)Zn in all districts were then built from the raw log-transformed concentration data. Concentrations were measured in $mg\cdot kg^{-1}$. Zero frequencies, which occurred frequently in the resulting histograms but are problematic  for clr transformation,  were imputed heuristically but respecting the bivariate nature of the distributions using a stepwise algorithm. In every step,  zero values with at least one positive neighboring value were replaced with the geometric mean of neighboring non-zero frequencies multiplied by 2/3, which represents a rate of decline. The procedure was repeated until the histogram contained only positive values. Denote again by $\mathbf{x} = (x_1,\,\dots,\,x_m)$ and $\mathbf{y} = (y_1,\,\dots,\,y_n)$ the midpoints of histogram classes in the directions of variables $\mbox{Cu}=x$ and $\mbox{Zn}=y$, respectively. The corresponding histogram frequencies represent the aggregated raw data and 
$\mathbf{F} = \left(f_{ij}\right)_{i,j=1}^{m,n}$ the matrix of clr transformed histogram frequencies. Then $(\mathbf{x},\,\mathbf{y},\,\mathbf{F})$ constitutes the input data for density estimation via spline approximation.

After collection and preprocessing the data separately for every district, we proceeded to find the smoothing spline using Theorem \ref{biv_smoothing_th} with the following parameter setting, in line with recommendations from previous sections. Accordingly, we employed bicubic smoothing splines of degrees $k=l=3$ with the degrees of the derivatives $p=q=2$. We assumed the number of histogram classes to be the same for both variables and given the results of the simulation study, we set $m=n=13$ with equidistant sequences of interior knots with $g=h=3$ in the direction of both variables. The value of the smoothing parameter was determined using the average GCV criterion \eqref{GCV} across all districts, resulting in an optimal value of $\rho = 0.001$, see Figure \ref{fig:GCV_geodata}.

\begin{figure}[h]
  \centering
\includegraphics[width=0.7\textwidth]{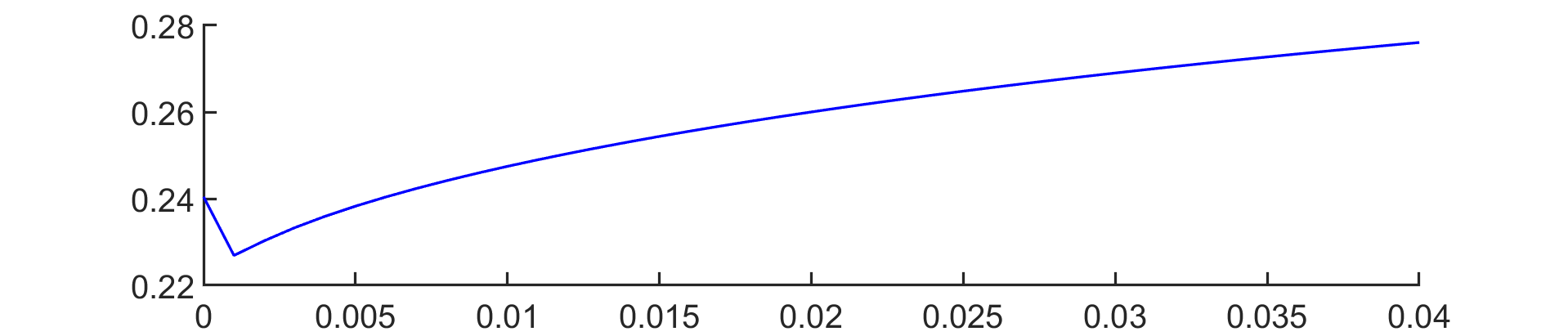}
  \caption{The mean GCV criterion across districts for the choice of an optimal smoothing parameter $\rho$ for fitting the clr transformed histogram values by tensor product smoothing splines}
  \label{fig:GCV_geodata}
\end{figure}

For comparison, we assumed various numbers of knots with other parameters fixed and, again, the condition for smoothing $m\,n > (g+k)(h+l)+g+k+h+l$ has to be taken into account. As the true density function is unknown, instead of the ISE the GCV criterion stated in \eqref{GCV} is used. GCV is calculated separately for each district, the results are displayed in Figure \ref{fig:application_GCV_knots}.

\begin{figure}[h]
    \centering
    \includegraphics[width=0.35\textwidth]{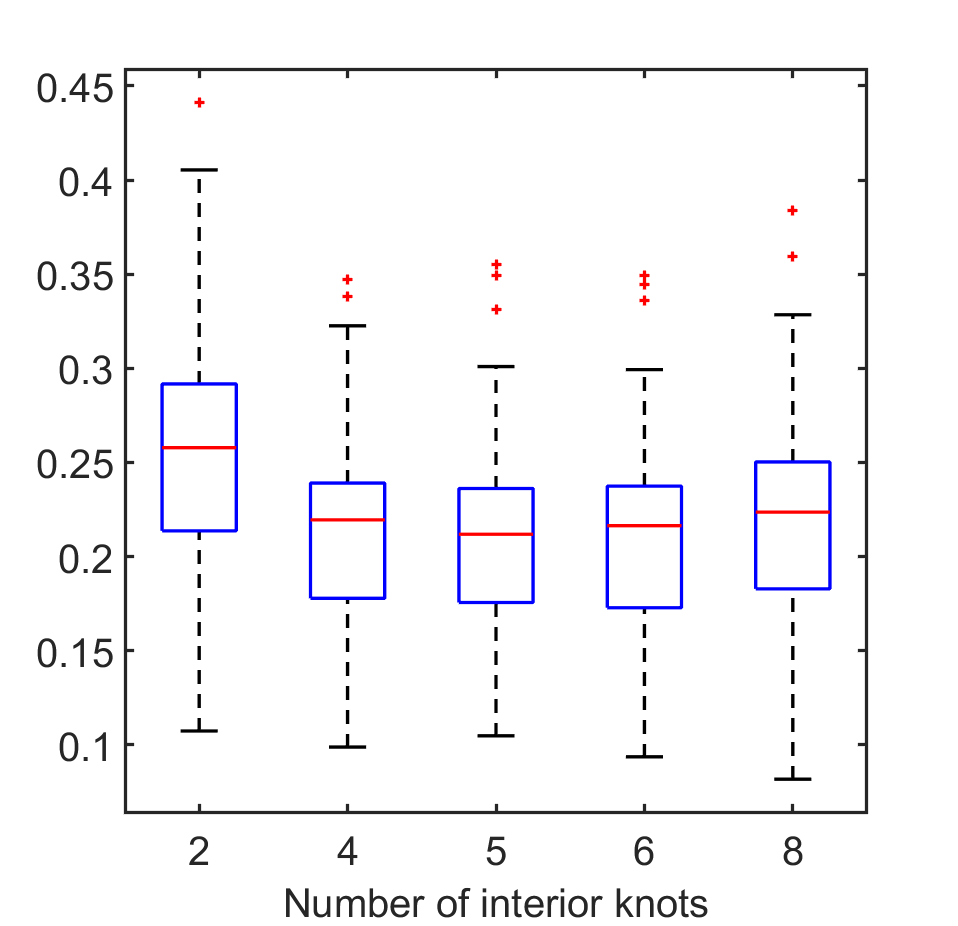}
    \caption{Boxplots of GCV for various numbers of interior knots calculated separately for each district}
    \label{fig:application_GCV_knots}
\end{figure}

The value of GCV varies for each district, however, the typical convex behaviour can be observed in the majority of the regions. On average, the GCV criterion reaches the minimum for $g=h=5$ interior knots, which represents a~reasonable compromise between robustness and the closeness of fit for the given number of histogram classes. The original histograms, the final estimates of the $Z\!B$-spline representations with the points of approximation and the corresponding compositional spline density estimates are visualized in Figure \ref{fig:application_representation}.

\begin{figure}[h!]
    \centering
    \includegraphics[width=0.3\textwidth]{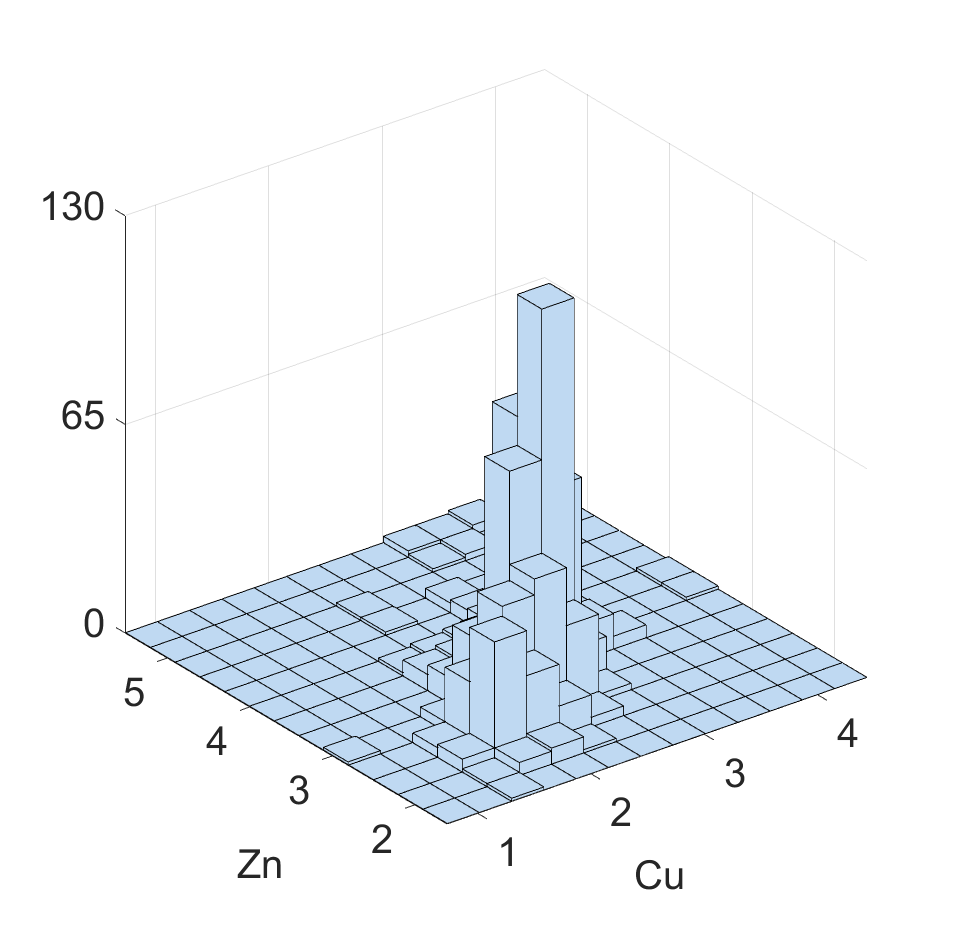}
    \includegraphics[width=0.3\textwidth]{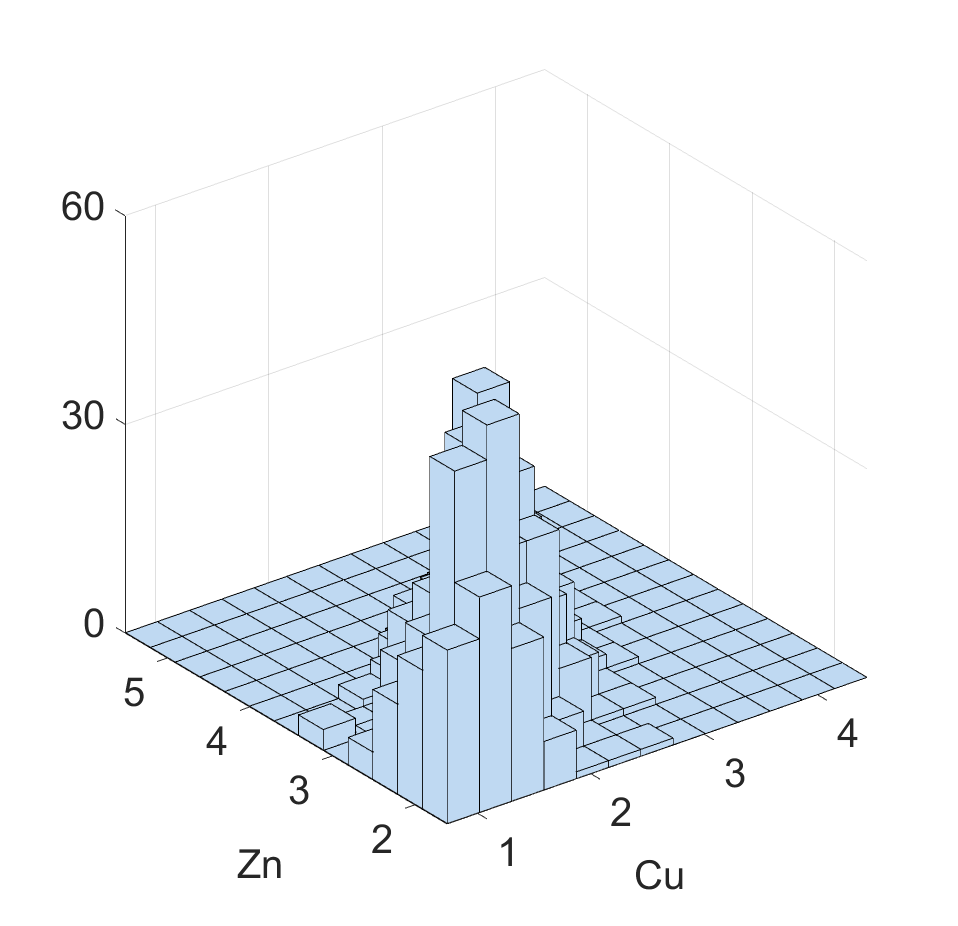}
    \includegraphics[width=0.3\textwidth]{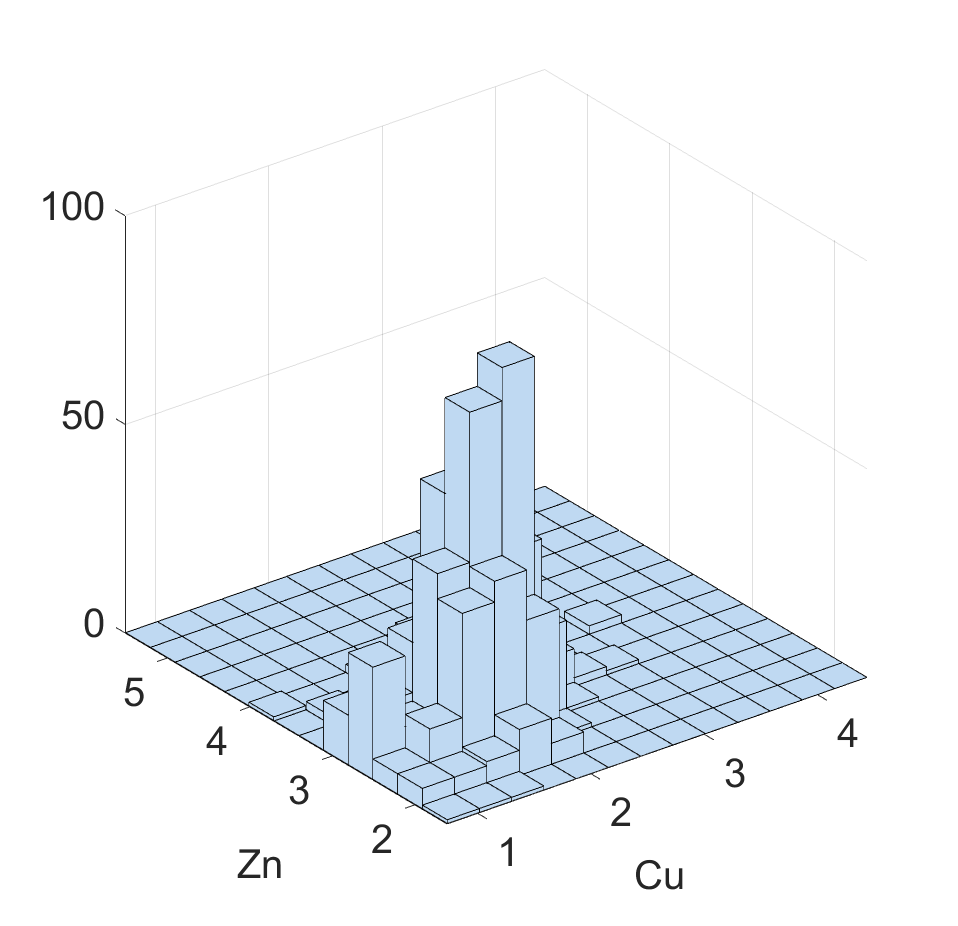}
    \includegraphics[width=0.3\textwidth]{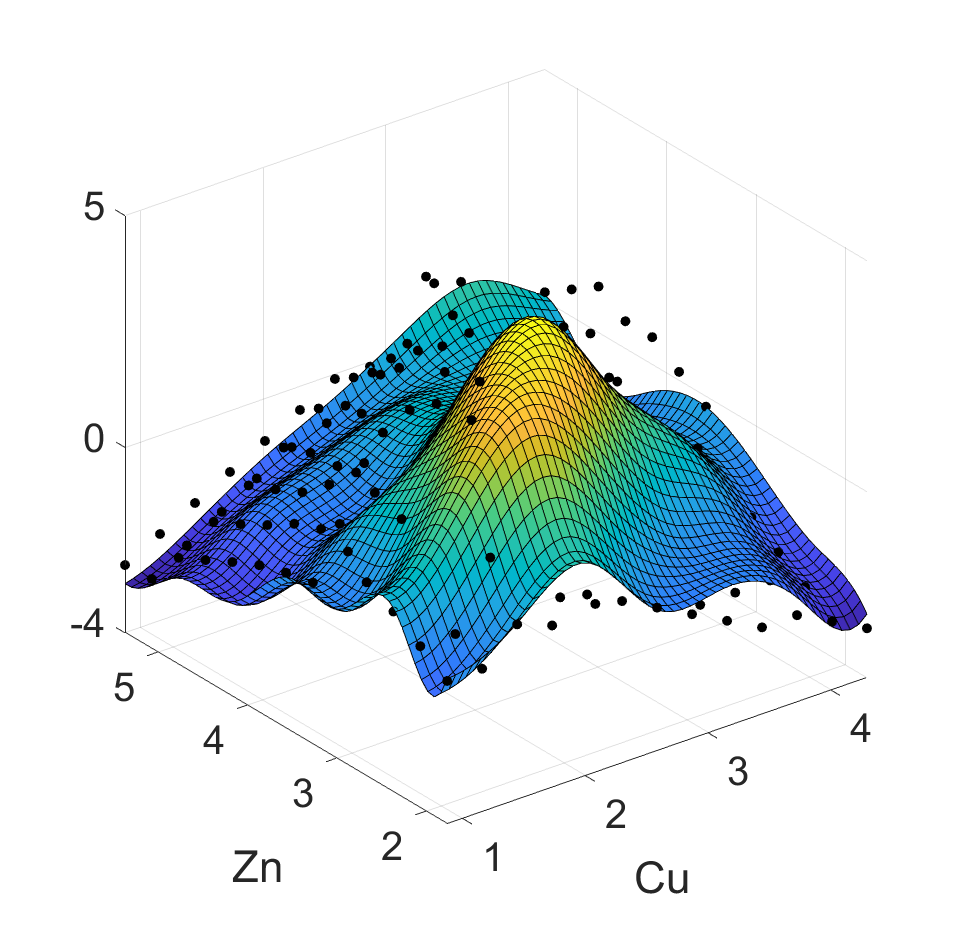}
    \includegraphics[width=0.3\textwidth]{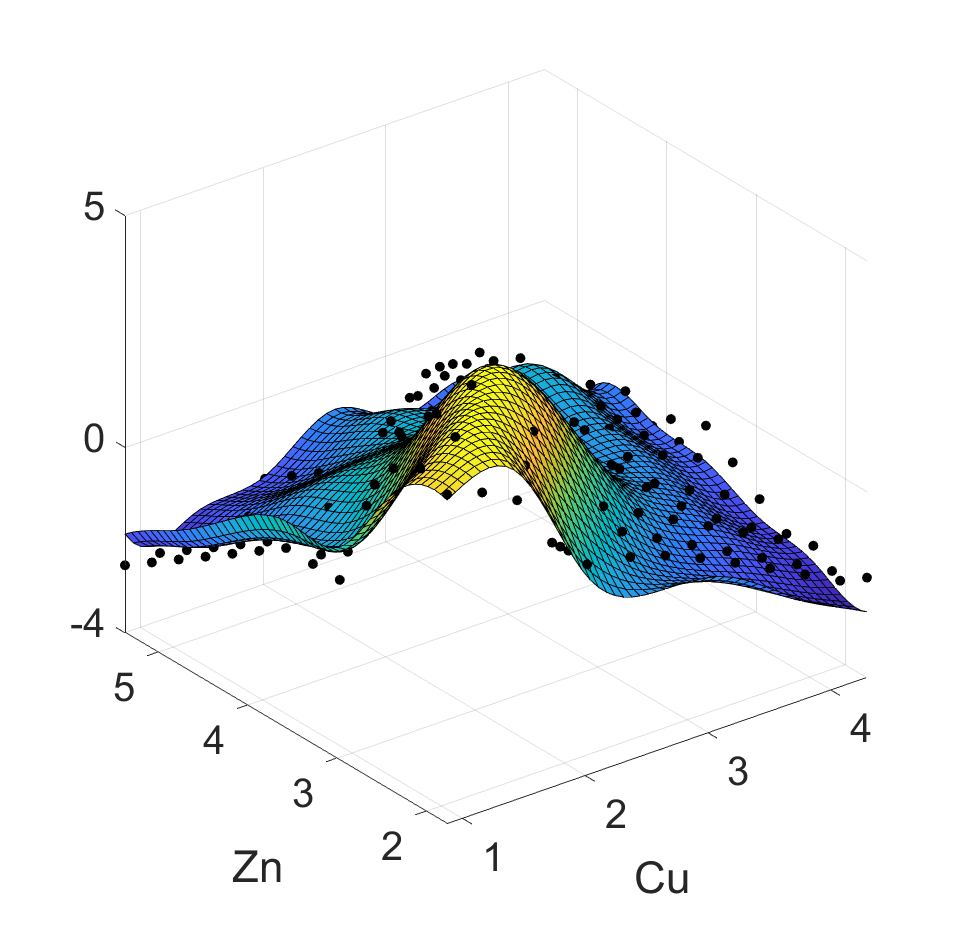}
    \includegraphics[width=0.3\textwidth]{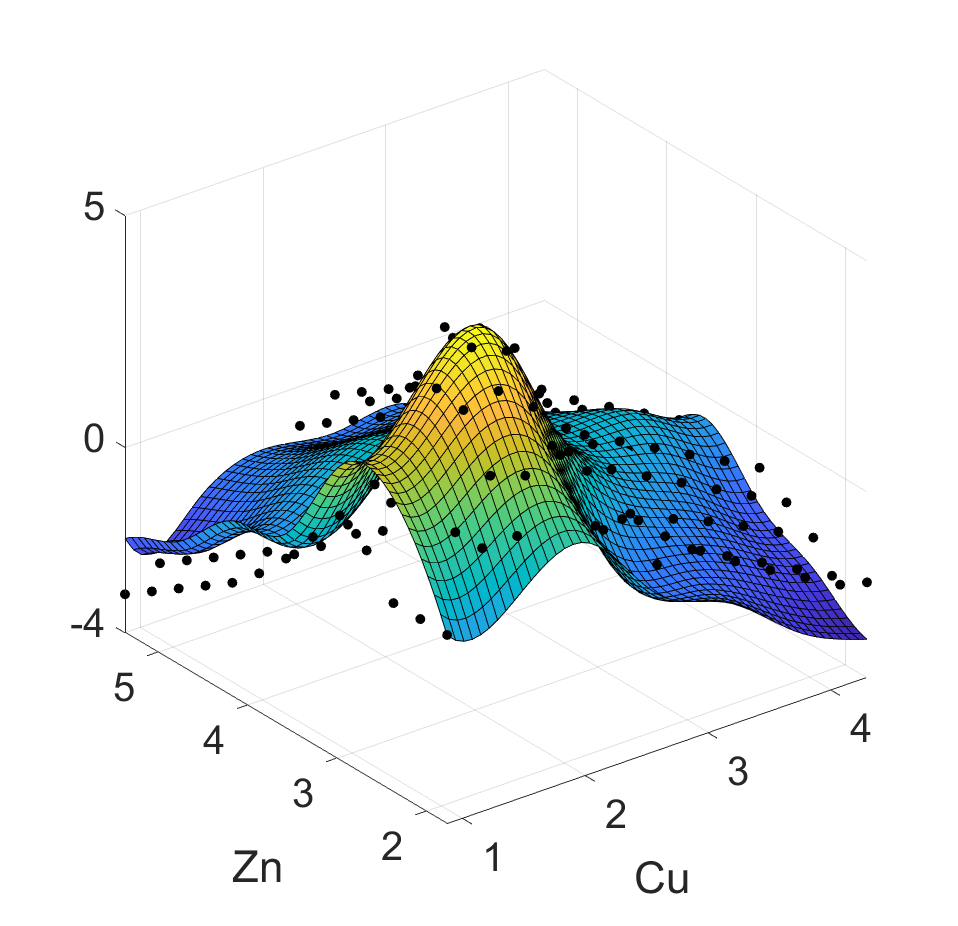}
    \includegraphics[width=0.3\textwidth]{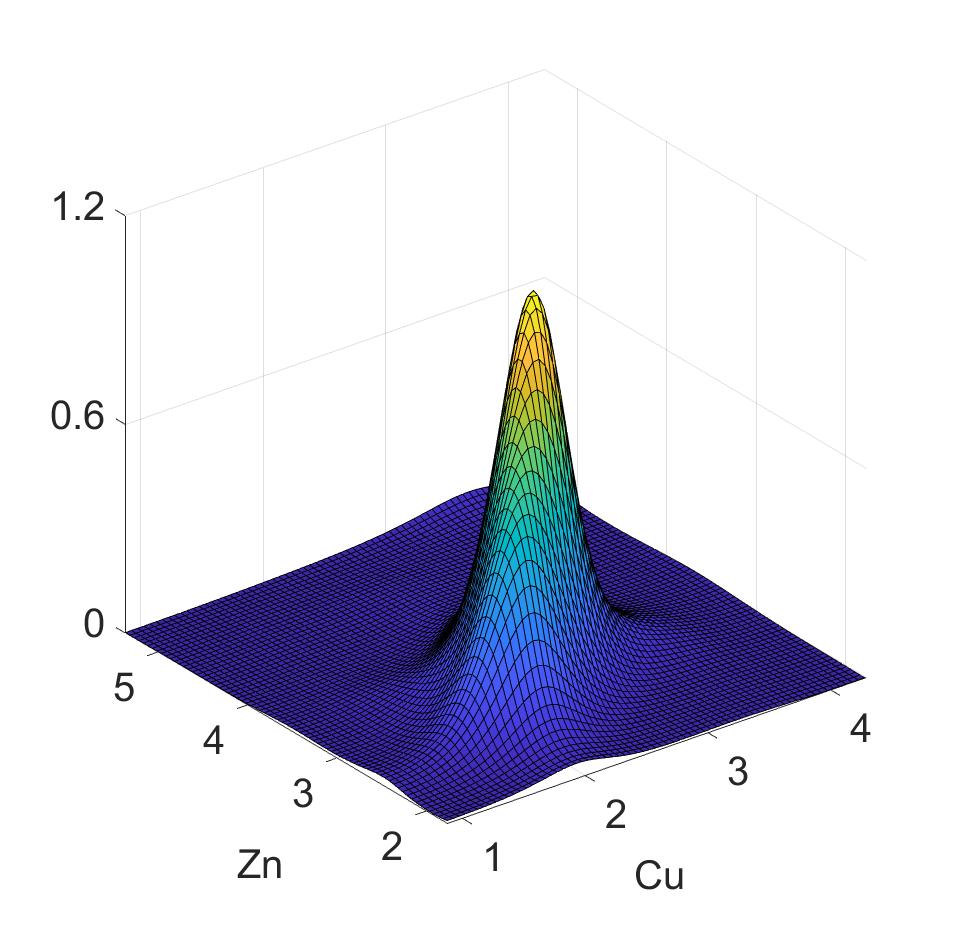}
    \includegraphics[width=0.3\textwidth]{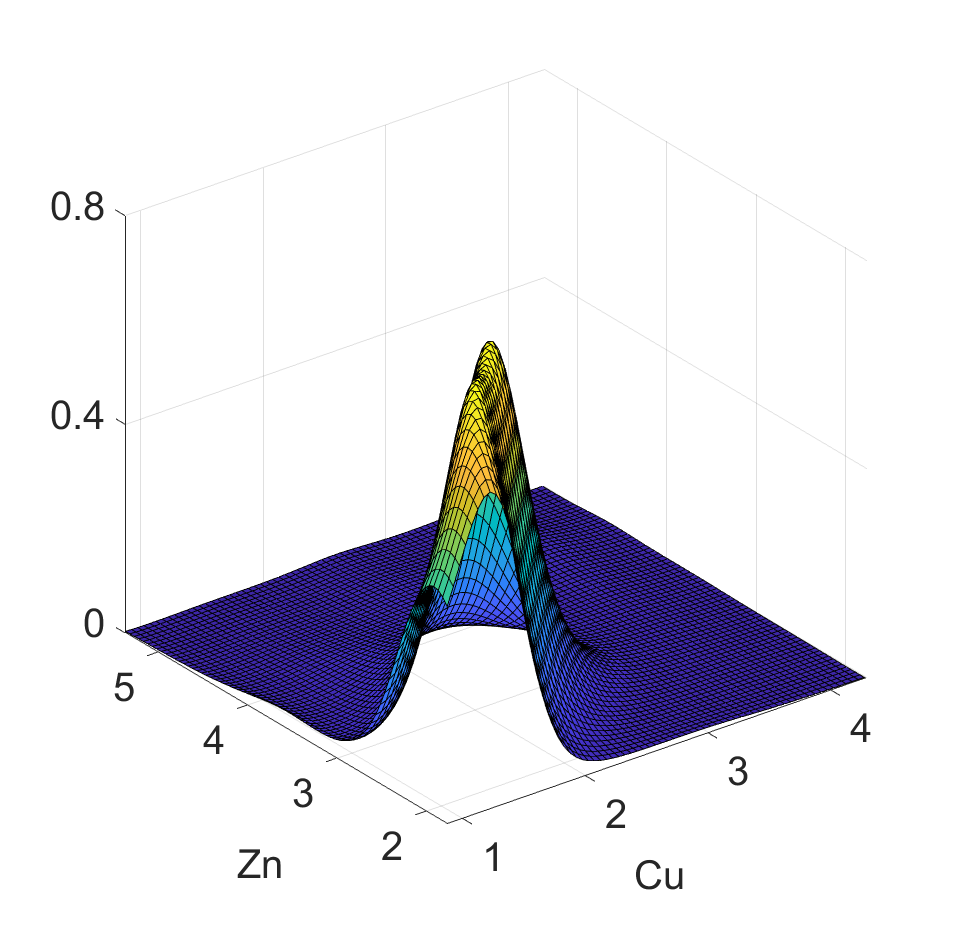}
    \includegraphics[width=0.3\textwidth]{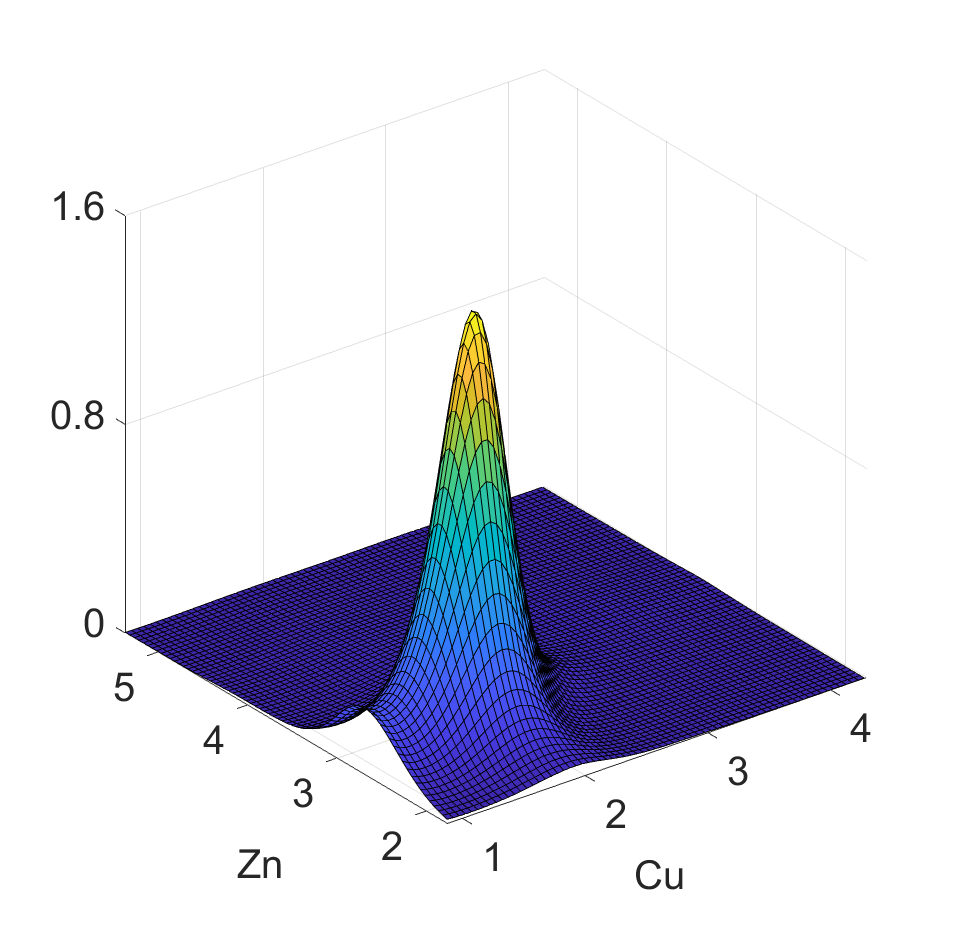}
    \caption{Histograms (top), corresponding $Z\!B$-spline representations (middle) and compositional splines as a density estimate (bottom) for regions Zl\'in (left), Liberec (middle) and \v{C}esk\'e Bud\v{e}jovice (right)}
    \label{fig:application_representation}
\end{figure}

Finally, the spline estimates can be decomposed into interactive and independent parts in $L^2_0(\Omega)$ and in ${\cal B}^2(\Omega)$ according to Section \ref{Bayes}, and for the previous choice of districts they are displayed in Figure \ref{fig:application_decomposition}.

From Figure \ref{fig:application_representation} we can see differences among districts such as clear multimodality of distributions. Note that this can be easily seen in the clr space, however, some important patterns would be lost if only inspecting the original densities in the ${\cal B}^2(\Omega)$. Moreover from decomposition of the densities (Figure \ref{fig:application_decomposition}), the main univariate patterns (from the independent part) as well as dependence structure of both elements (from the interactive part) can be clearly spotted. Specifically, in Zl\'in region there is a strong relation between high concentration values of both elements, which indicates the presence of polluted samples. In the other two regions, this pattern can not be seen and the main mode of the interaction densities corresponds to lower concentrations of Cu and Zn.

The introduced methodology represents a numerically stable approach for the  estimation of bivariate density functions. Since the resulting approximation is determined uniquely by the matrix of $Z\!B$-spline coefficients (or equivalently by $B$-spline coefficients as stated in Theorem \ref{biv_mn}), the resulting representation can be used effectively in subsequent statistical processing using popular methods of functional data analysis (FDA) \citep{kokoszka17}.

For a demonstration, we will briefly sketch the use of the spline approximation in basic descriptive statistics of FDA. We estimated densities for each district separately with a common parameter setting based on the previous results and validation. The mean function of the $Z\!B$-spline representation, which is easily obtained from averaging the spline coefficients in the representations for all districts, and the corresponding $C\!B$-spline are visualized in 
Figure~\ref{fig:application_means}. Compared to the densities of individual districts (Figure \ref{fig:application_representation}), the mean function eliminates the local effects and displays a nice unimodal fitted shape.

\begin{figure}[h!]
    \centering
    \includegraphics[width=0.32\textwidth]{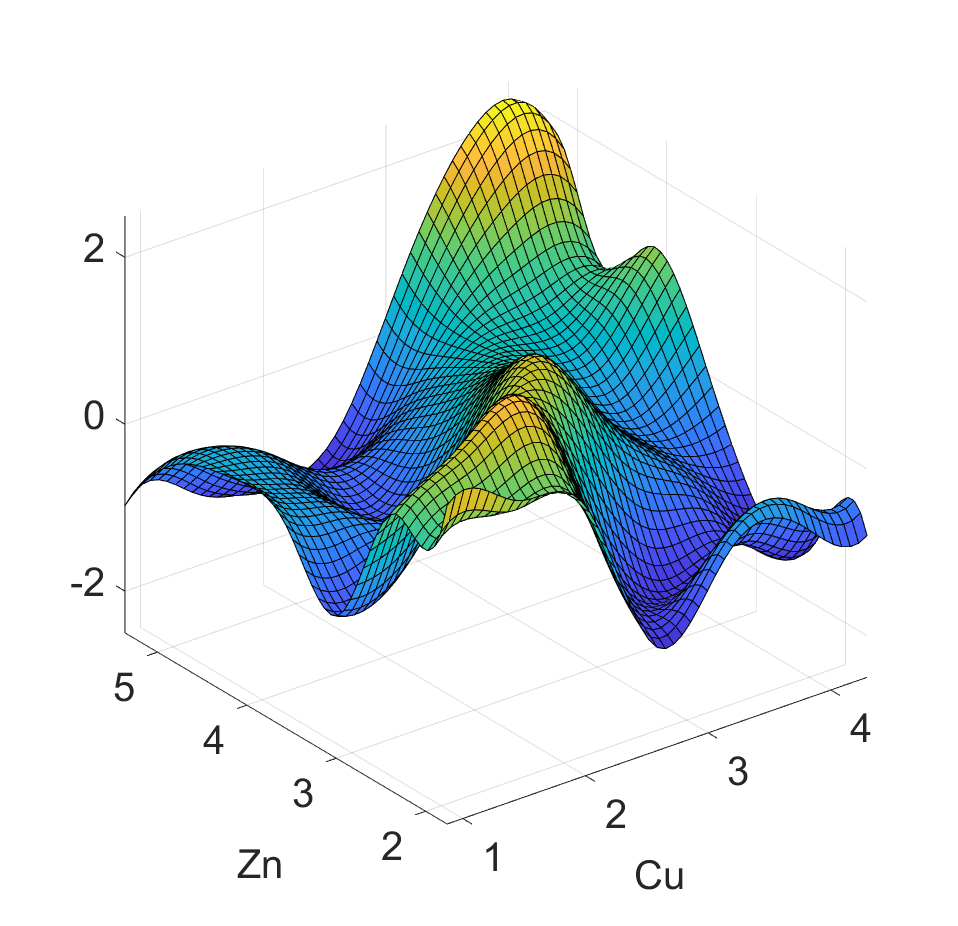}
    \includegraphics[width=0.32\textwidth]{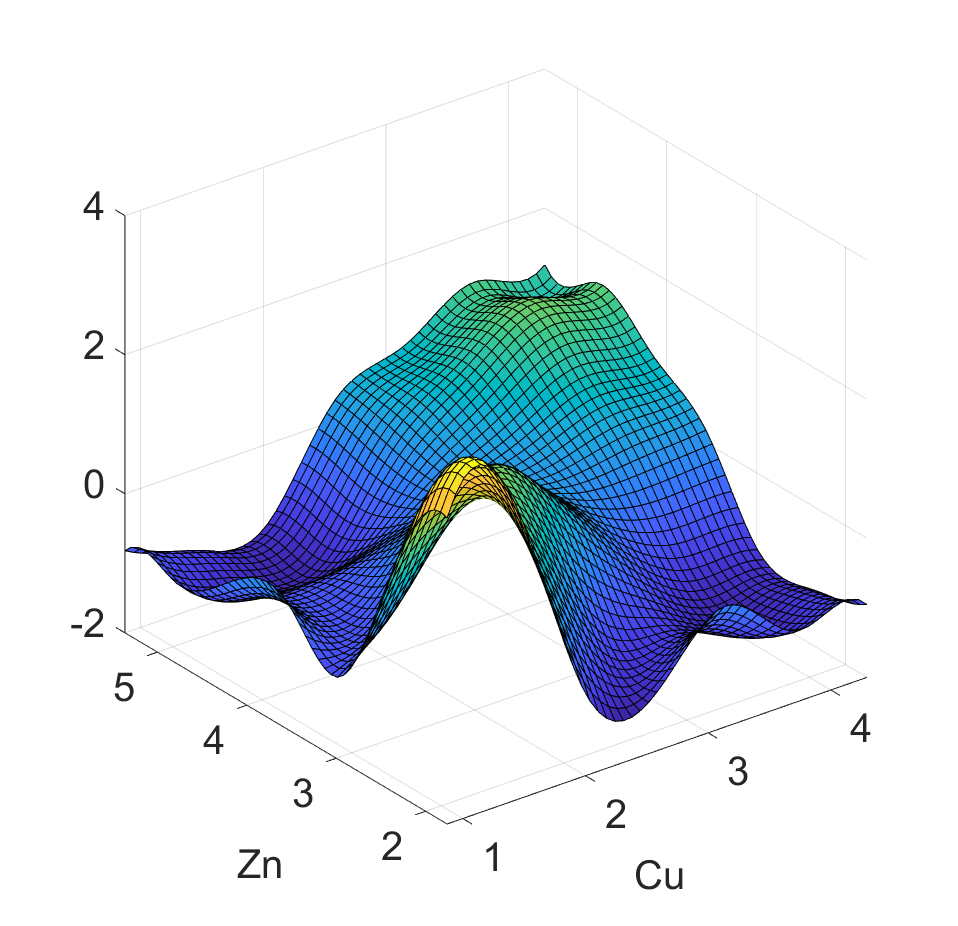}
    \includegraphics[width=0.32\textwidth]{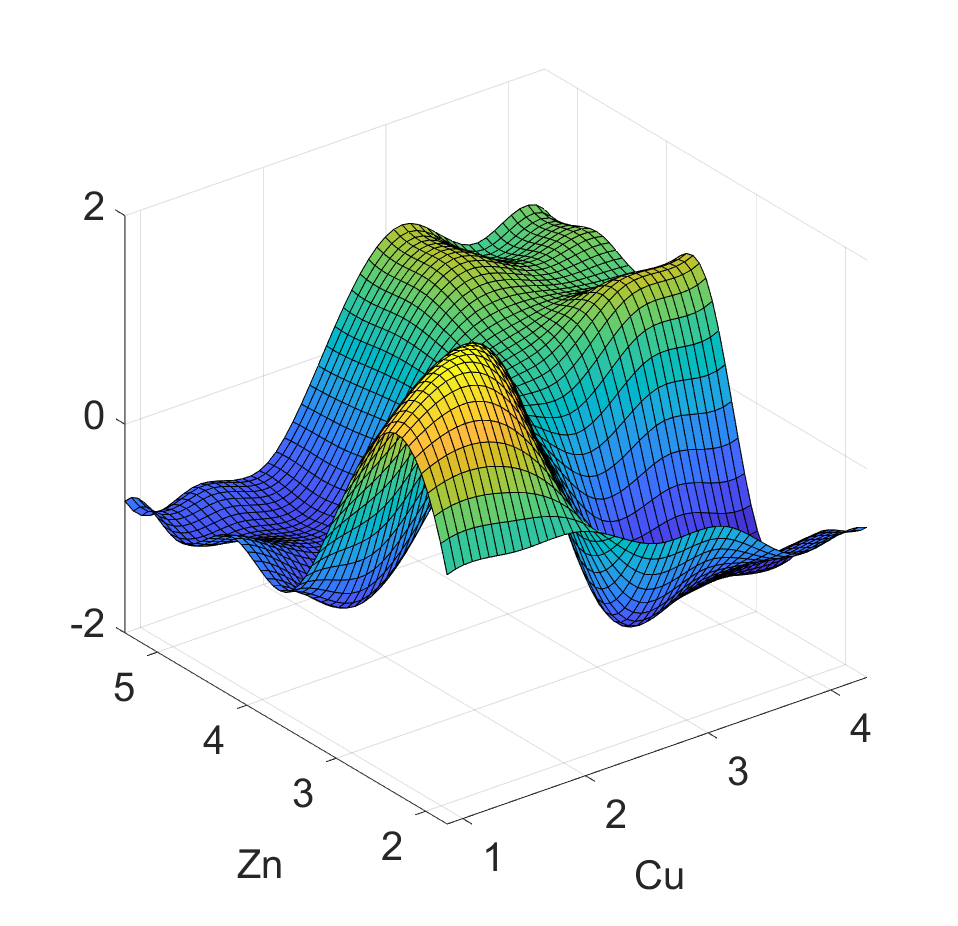}
    \includegraphics[width=0.32\textwidth]{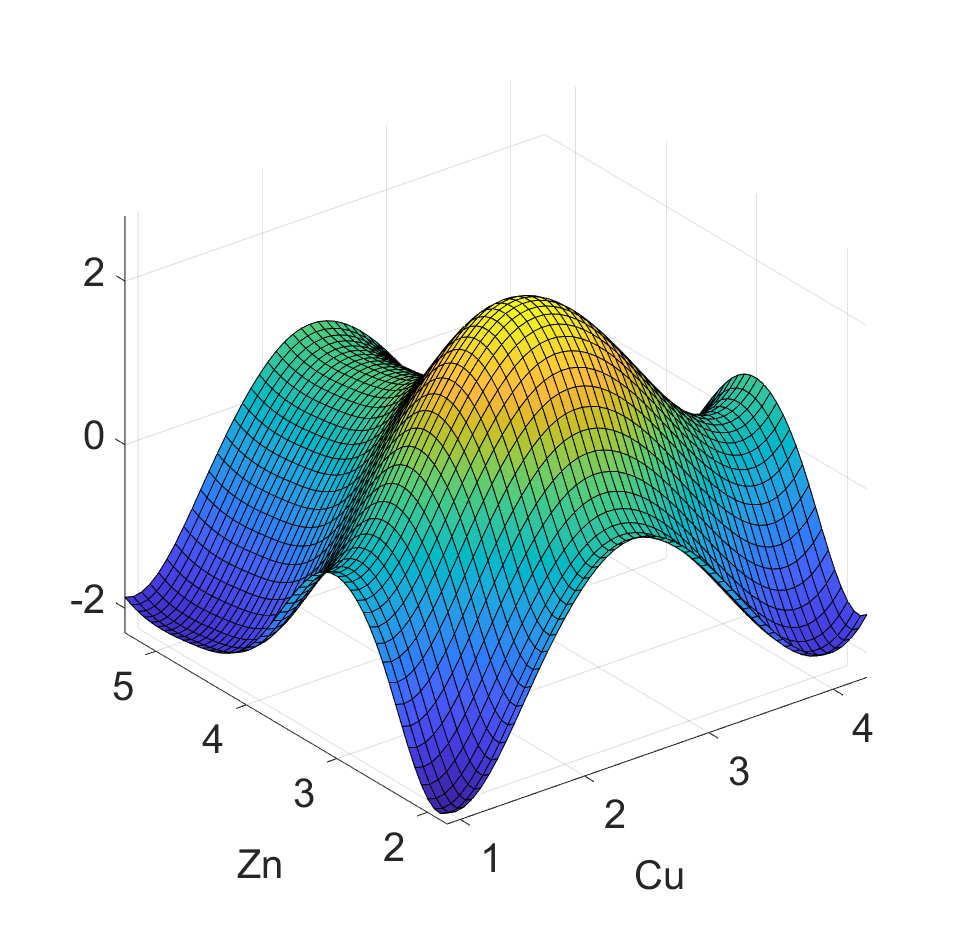}
    \includegraphics[width=0.32\textwidth]{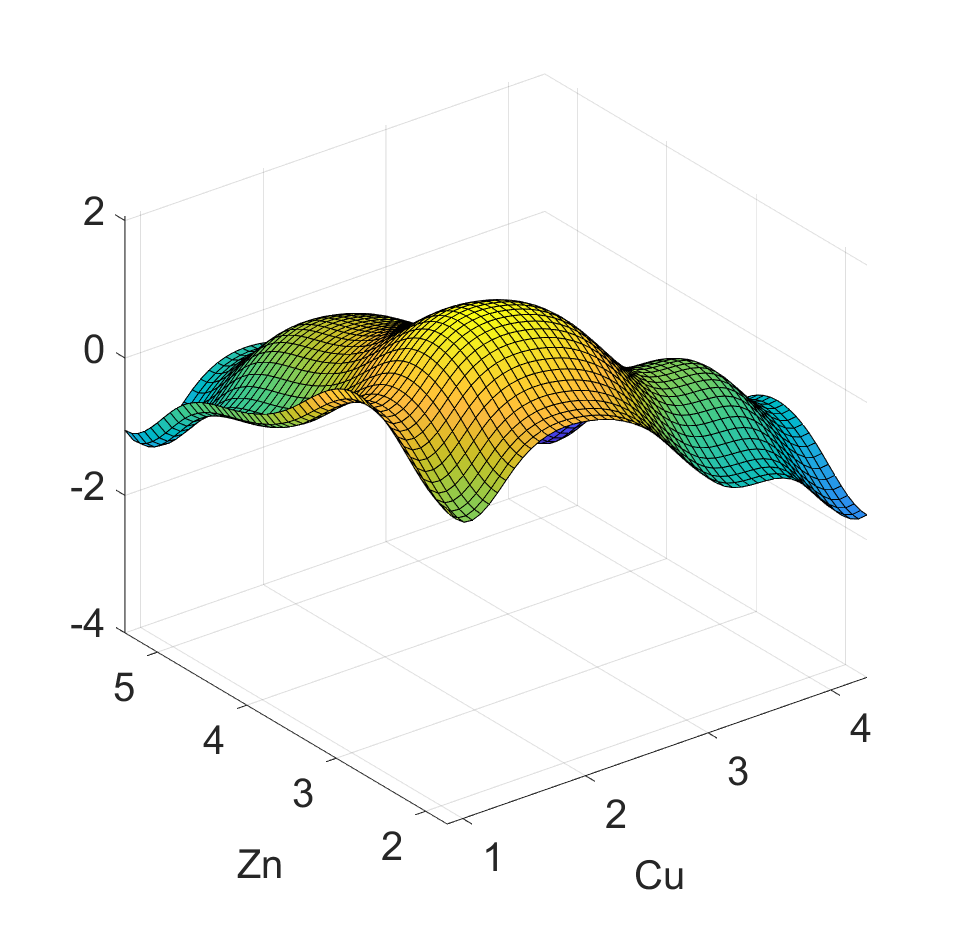}
    \includegraphics[width=0.32\textwidth]{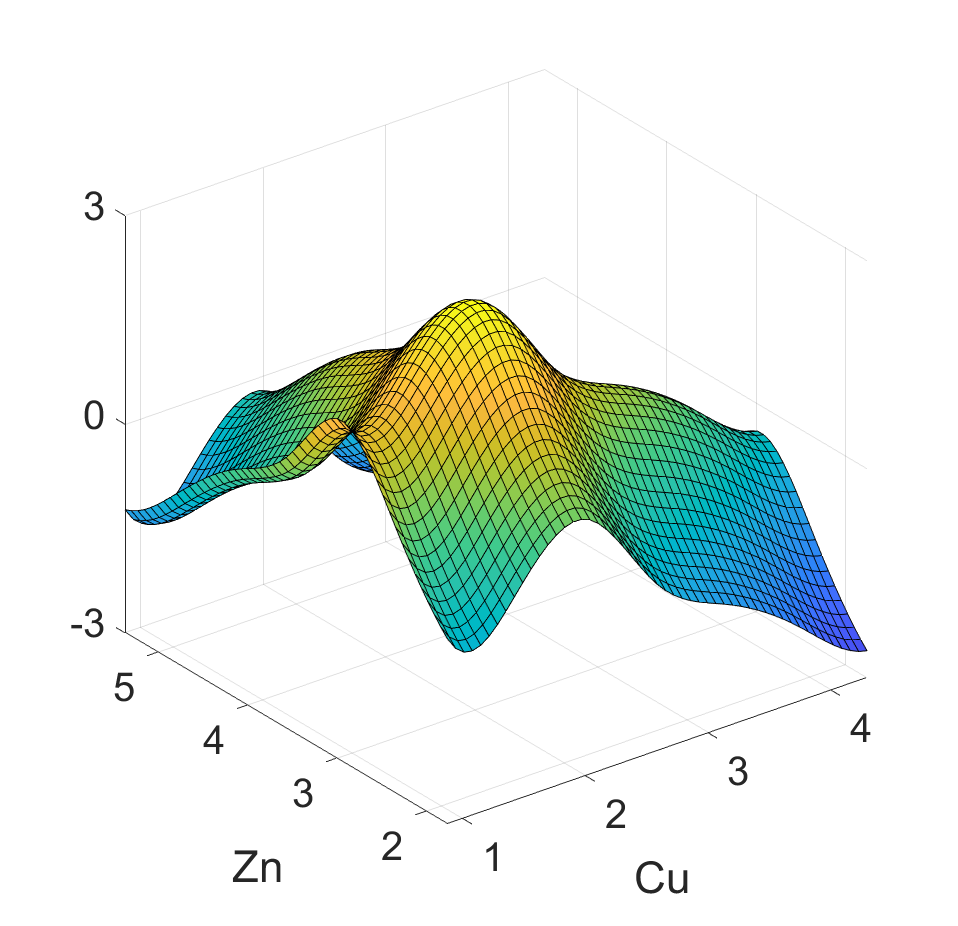}
    \includegraphics[width=0.32\textwidth]{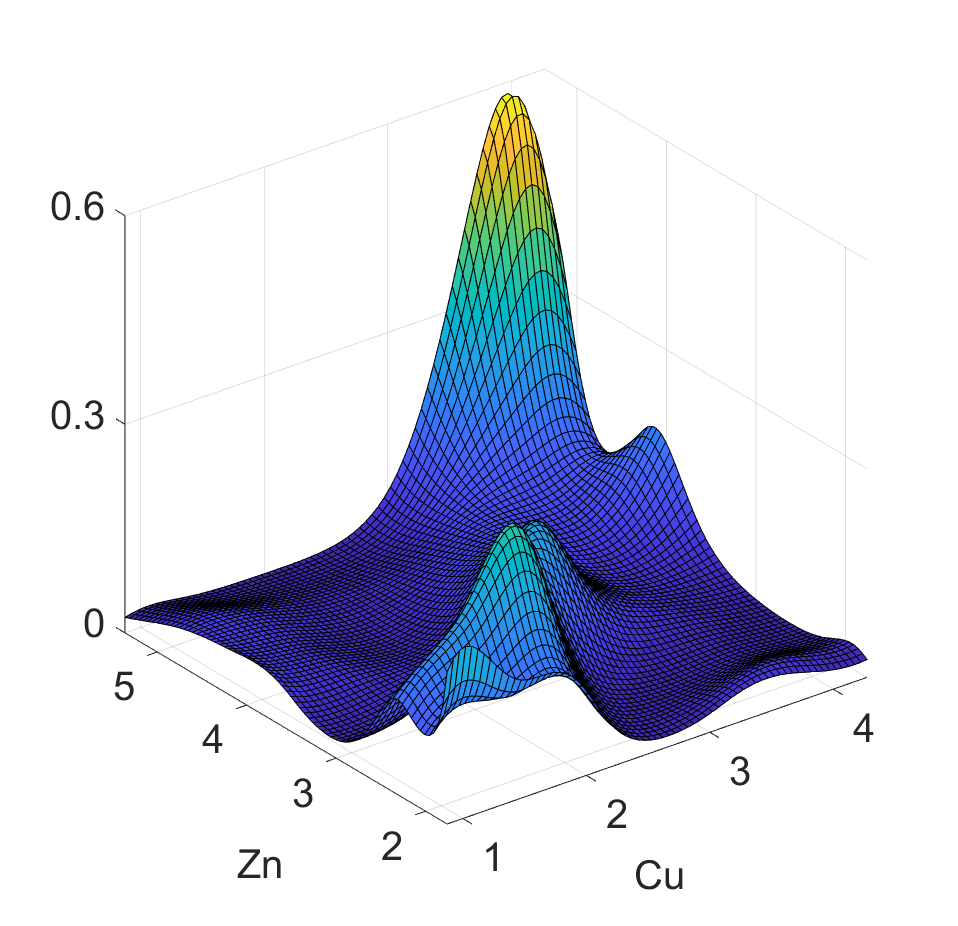}
    \includegraphics[width=0.32\textwidth]{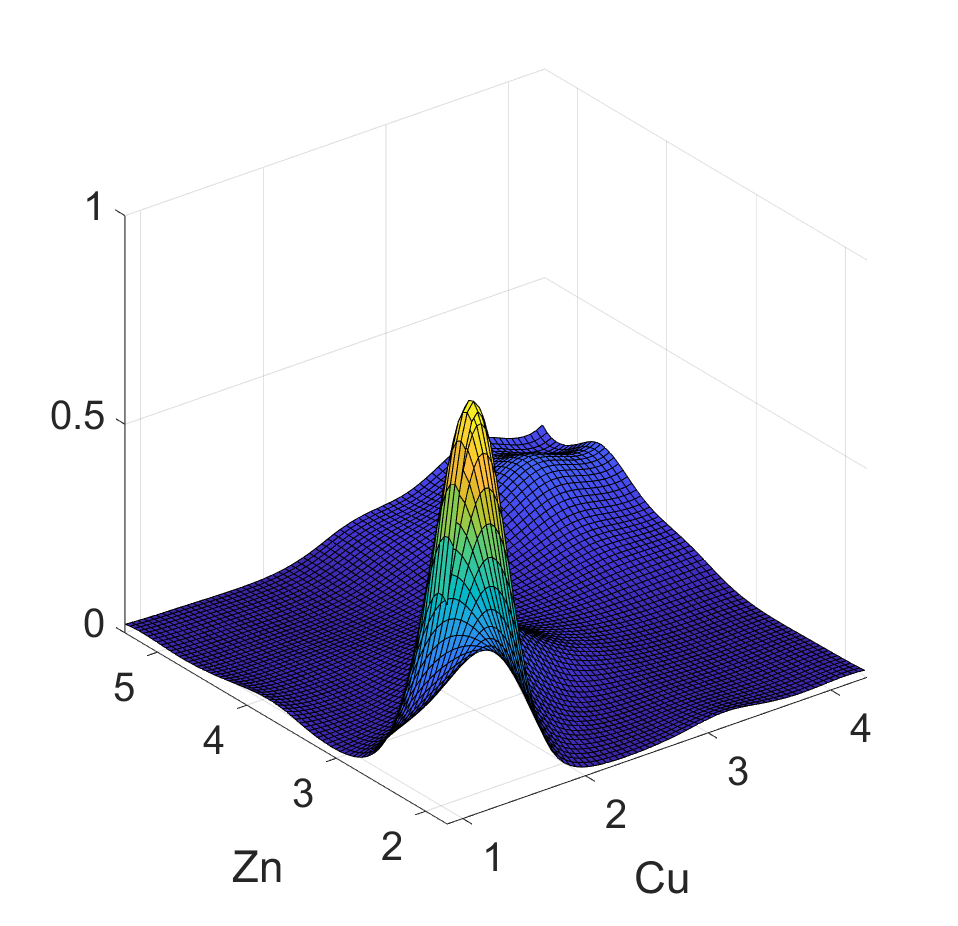}
    \includegraphics[width=0.32\textwidth]{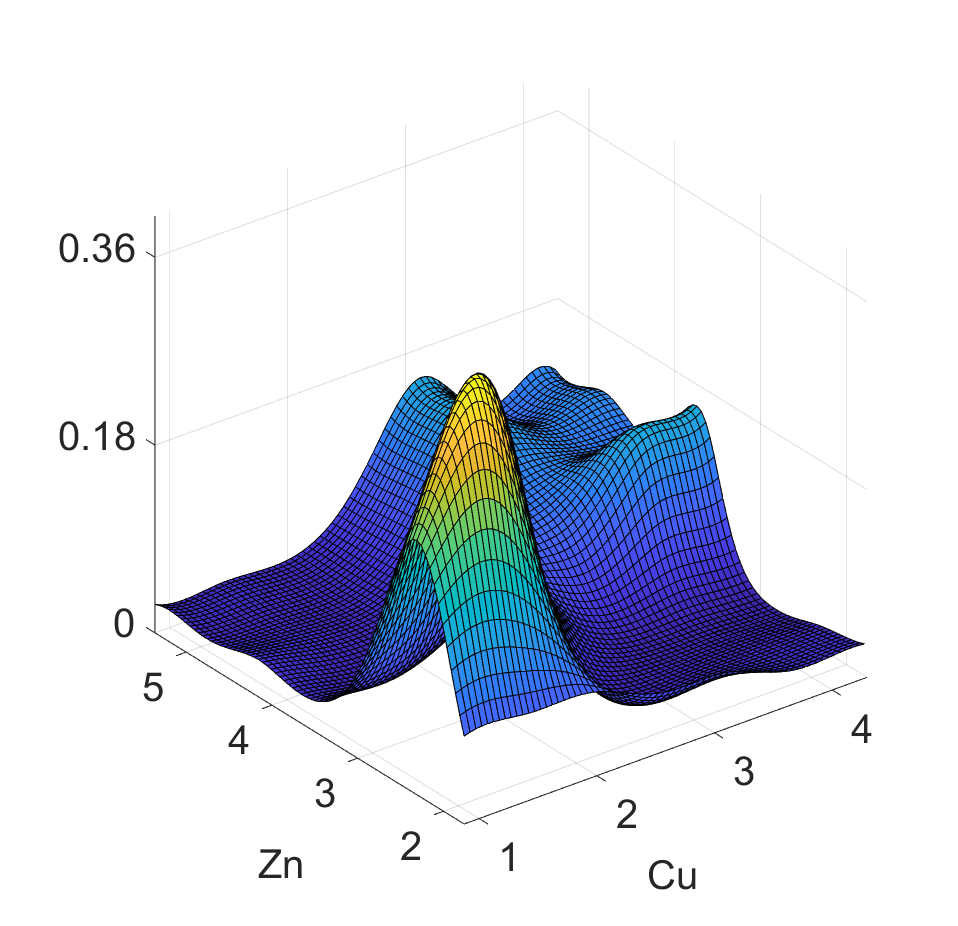}
    \includegraphics[width=0.32\textwidth]{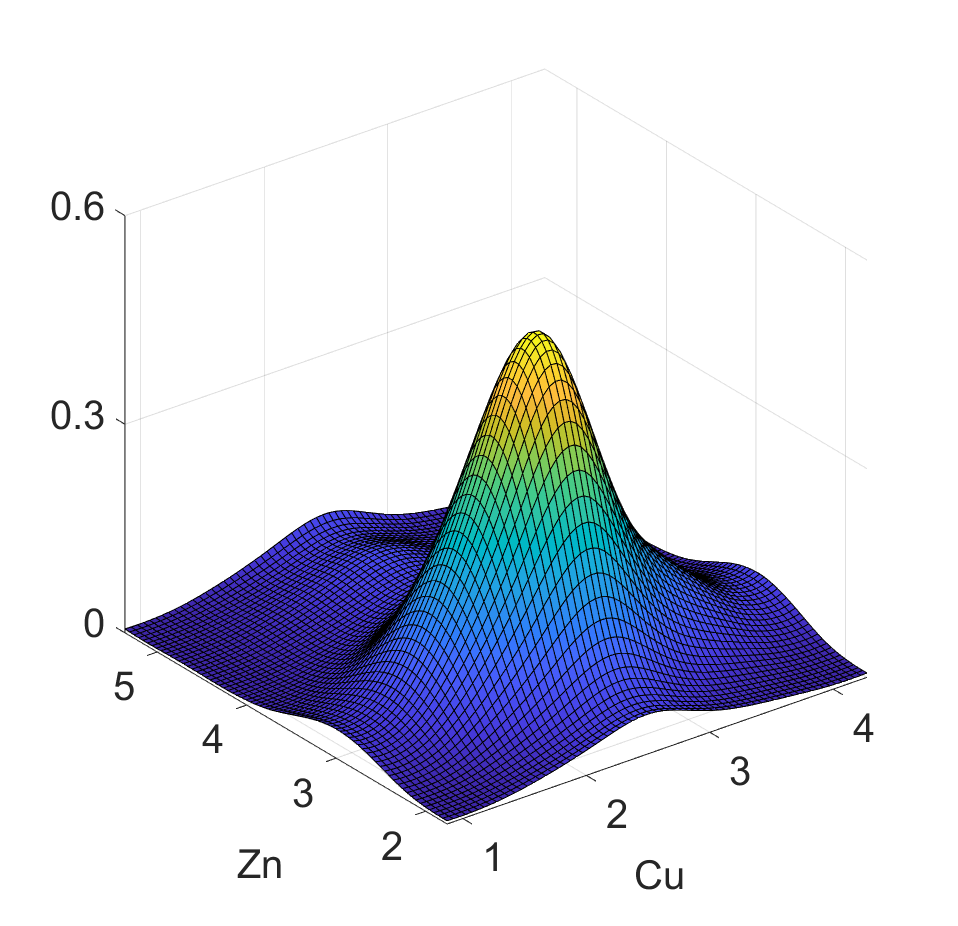}
    \includegraphics[width=0.32\textwidth]{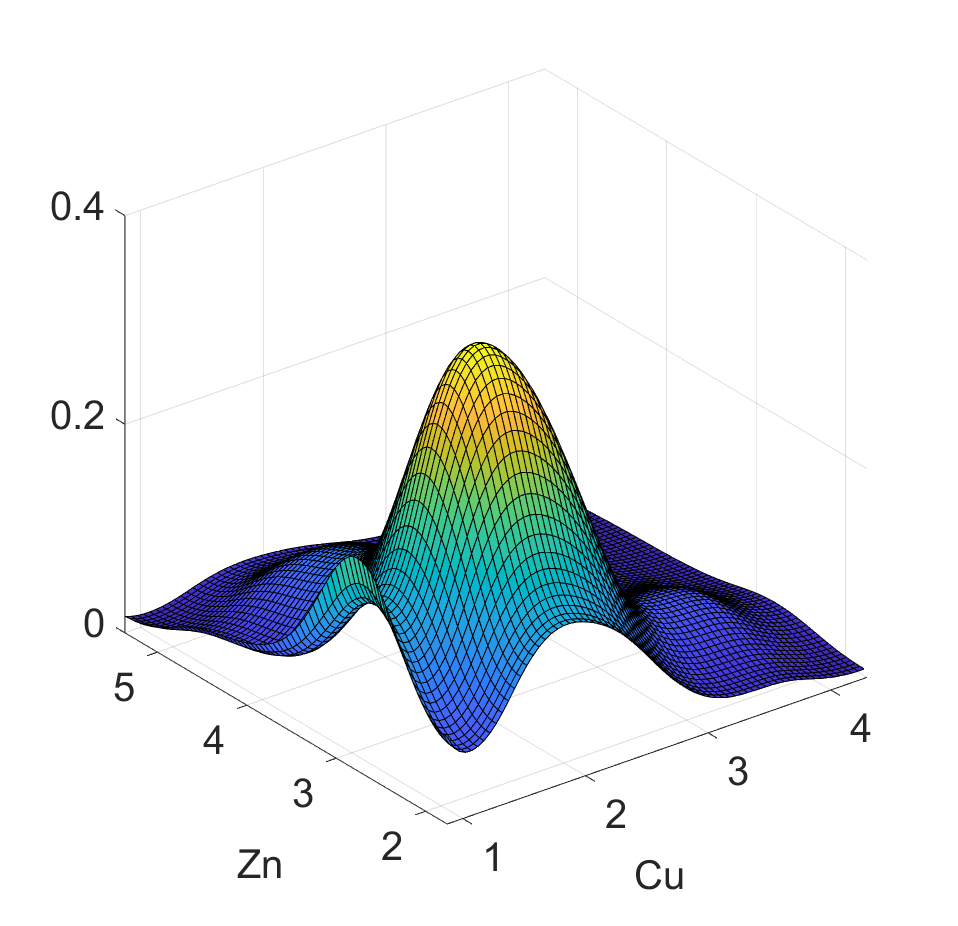}
    \includegraphics[width=0.32\textwidth]{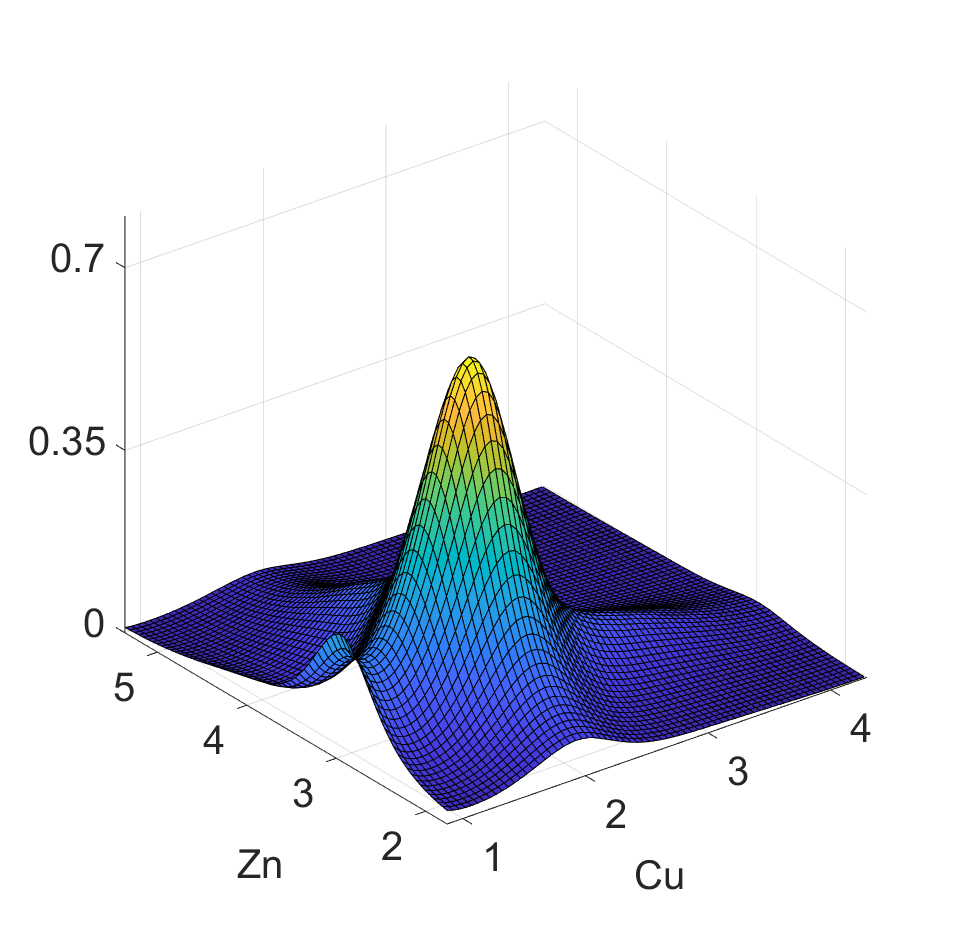}
    \caption{Decomposition of the $Z\!B$-spline representations in $L^2_0(\Omega)$ into their interactive parts (top) and independent parts (second row) and decomposition of the $C\!B$-splines into their interactive parts (third row) and independent part (bottom) for districts Zl\'in (left), Liberec (middle), \v{C}esk\'e Bud\v{e}jovice (right)}
    \label{fig:application_decomposition}
\end{figure}

\begin{figure}[h!]
  \centering
\includegraphics[width=0.35\textwidth]{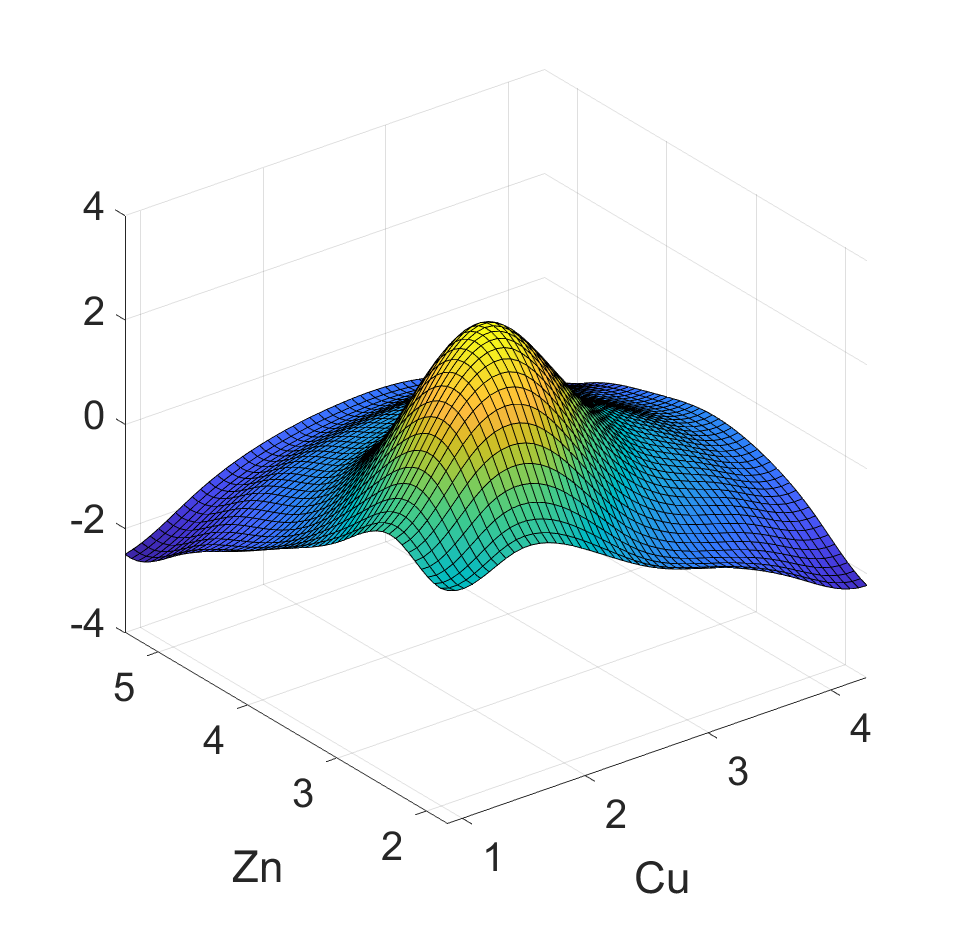}
\includegraphics[width=0.35\textwidth]{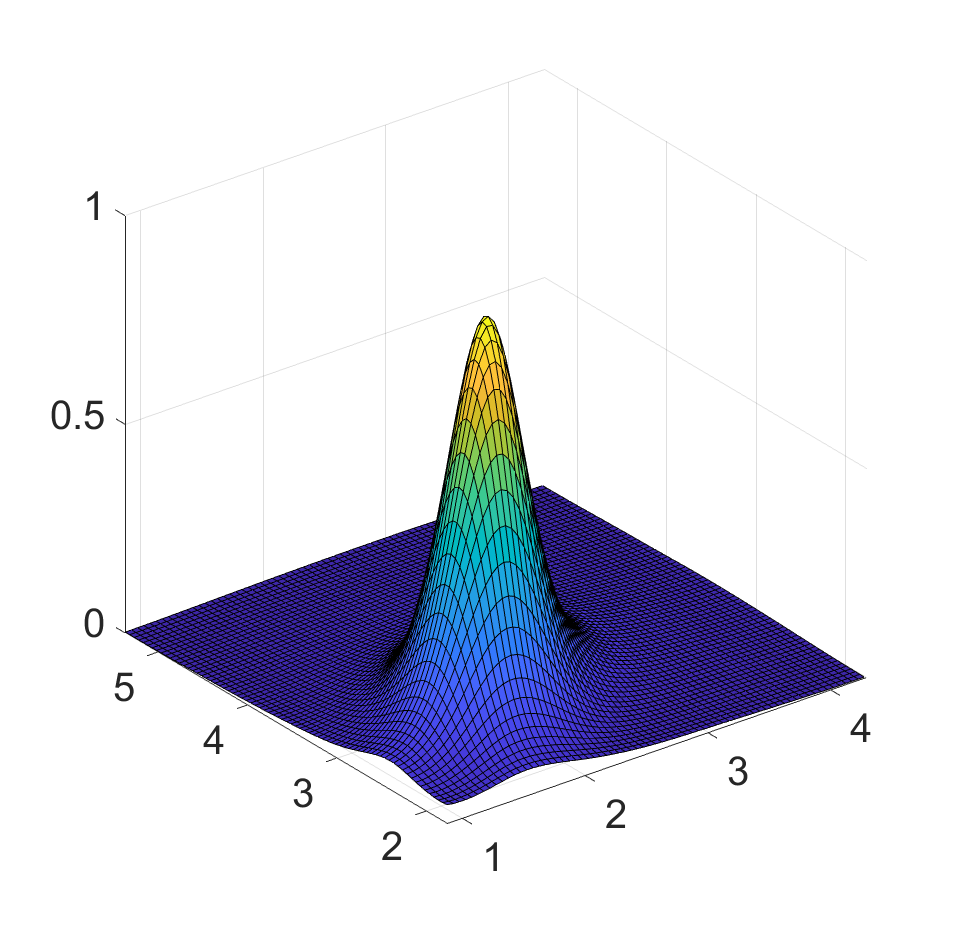}
  \caption{Mean functions calculated from all districts in $L^2_0(\Omega)$ (left) and in ${\cal B}^2 (\Omega)$ (right)}
  \label{fig:application_means}
\end{figure}

The standard deviation function, which can be calculated directly on spline coefficients as proposed in \citep{kokoszka17}, is displayed in Figure \ref{fig:application_var}. In addition to large variation in areas of small contamination, where most districts have their mode, but with variation between districts,  some patterns related  to higher concentrations of both elements can be observed. This is likely due to anthropogenic contamination, which is present in some of the districts \citep{grygar23}. 

\begin{figure}[h!]
  \centering
\includegraphics[width=0.35\textwidth]{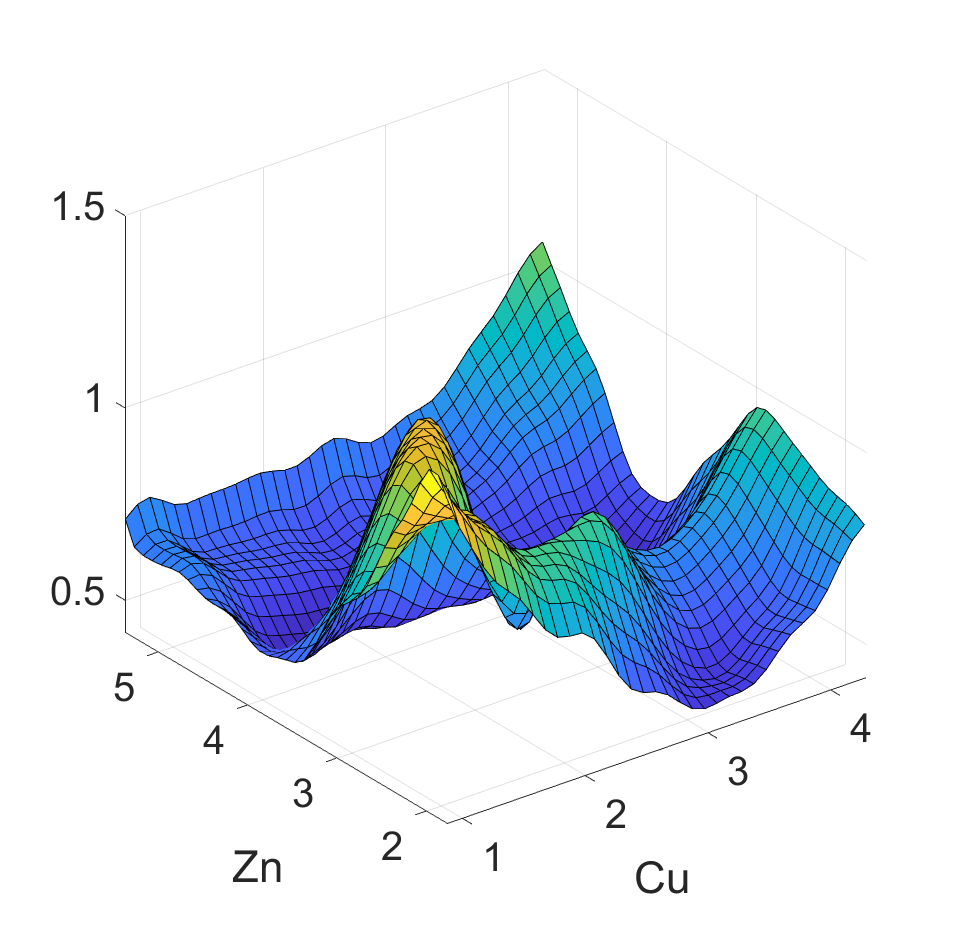}
\includegraphics[width=0.35\textwidth]{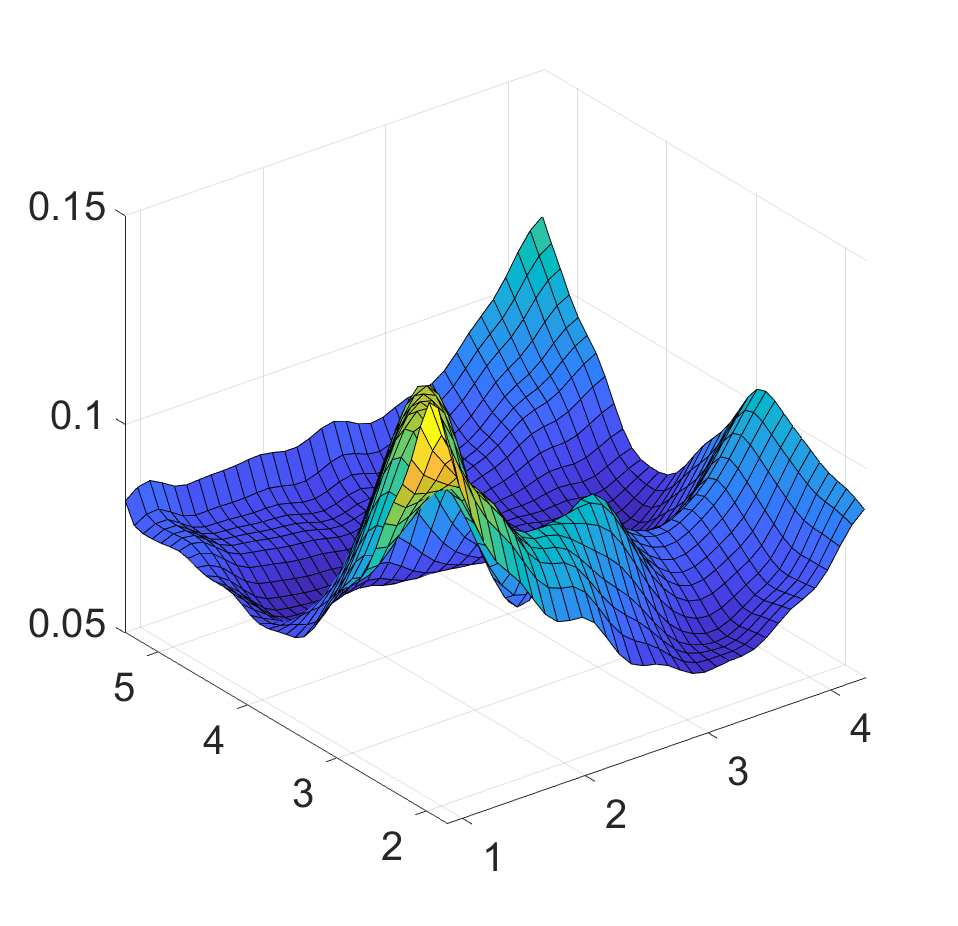}
  \caption{Standard deviation functions calculated from all regions in $L^2_0(\Omega)$ (left) and in ${\cal B}^2 (\Omega)$ (right)}
  \label{fig:application_var}
\end{figure}

Finally, spline coefficients for the estimated $Z\!B$-spline representation of all processed districts are contained in the \ref{secA2}, Table \ref{table}.

Note that approximation with compositional splines represents a parametric approach. As an alternative, kernel smoothing and other nonparametric methods can also be employed in the bivariate case, however, certain limitations start to appear when moving to higher dimensions, i.e. time consuming computations. Our approach, on the other hand, uses the advantage of a basis expansion, which works efficiently in higher dimensions, since the resulting estimate is described by the spline coefficients. This is highly beneficial for statistical processing of PDFs using popular methods of  FDA. Moreover, nonparametric methods cannot capture the decomposition of the resulting density into its interactive and independent part unlike the presented approach. Therefore, compositional splines could serve as a potential cornerstone for approximation and decomposition of PDFs also in higher dimensions and, by providing an easy to handle approximation step, enable to make use of the benefits of the Bayes spaces methodology for statistical processing of PDFs.

\newpage
\section{Conclusion}\label{concl}
The proposed bivariate compositional splines enable a representation of probability density functions in the Bayes space  ${\cal B}^2$, the space respecting their intrinsic properties. The resulting spline approximation can also be expressed in the Lebesgue space $L^2$ using the centred log ratio transformation, which induces a zero integral constraint and forms an isomorphism between the  ${\cal B}^2$ and $L_0^2$  spaces. The clr transformation allows further processing of PDFs with popular FDA methods designed for the  $L^2$ space. In order to estimate the transformed PDFs in $L_0^2$, a suitable basis of bivariate $Z\!B$-splines  was developed. The new $Z\!B$-splines  have the advantage of fulfilling the necessary zero integral condition for clr transformed PDFs naturally. In this way, the proposed bivariate $Z\!B$-splines  represent a natural extension of univariate $Z\!B$-splines introduced in \citep{compositional}.
However, we showed that the basis construction requires more than the tensor product of univariate $Z\!B$-splines, as in the case of bivariate $B$-splines bases in the $L^2$ space, but we needed to design additional basis functions to complete the basis. This concept lays the foundations for further generalizations to the multivariate case.

Moreover, the Bayes space methodology provides a theoretical framework for orthogonal decomposition of densities into their independent and interactive parts, as introduced in \citep{hron22} and extended in \citep{genest22}. A second advantage of the new bivariate compositional splines is that they allow a corresponding decomposition of the basis and the resulting estimated densities. This might be a potential keystone for studying the dependency structure of two random variables and open new horizons for the decomposition of multivariate dependency structures in general. 

The simulation study and the application to empirical large-scale geochemical data clearly demonstrate the practical potential of the spline representation. In the application, the resulting spline approximations for  bivariate density observations from all districts pave the way to numerically stable and effective functional data analysis based on popular spline-based approaches, as we briefly illustrated using descriptive functional data analysis at the end of Section~\ref{appl}.

Overall, the presented approach introduces the foundations for spline approximations of bivariate densities by respecting their intrinsic properties, and allows further analysis of PDFs in the context of functional data analysis. The extension of the introduced methodology to the general multivariate case is highly desirable and is currently under development.
\\\\
\textbf{Acknowledgments.} We acknowledge the support of this research and researchers by the following grants: \\ IGA\_PrF\_2024\_006 Mathematical models, the Czech Science Foundation grant 22-15684L and the German Research Foundation (DFG) grant GR 3793/8-1.
\\\\
\noindent \textbf{Author contribution statement:} J.M. conceived the main conceptual idea. S.\v{S}., J.M. and J.B. developed the theory of bivariate compositional splines. K.H. and S.G. supervised the statistical part of the paper. K.H. designed the simulation study. S.\v{S}. performed the simulation study and computations in the application part. S.\v{S}. prepared the initial manuscript draft. All authors have discussed the results and contributed to the final version of the manuscript.

\appendix
\section{Additional properties of the $Z\!B$-spline representation}\label{secA1}
The independent and interactive parts of a spline $s_{kl}(x,y)\in{\cal Z}_{kl}^{\Delta\lambda,\Delta\mu}(\Omega)$ form its orthogonal decomposition as formulated in Section \ref{Bayes}. Alternatively, this can  also be proven directly in $L^{2}_{0}(\Omega)$ as stated in the following Theorem.
\\
\begin{theorem} \label{biv_orthogonality}
The interactive part $s_{kl}^{int}(x,y)$ and the independent part $s_{kl}^{ind}(x,y)$ of a spline $s_{kl}(x,y)\in{\cal Z}_{kl}^{\Delta\lambda,\Delta\mu}(\Omega)$ are orthogonal.
\end{theorem}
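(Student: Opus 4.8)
The plan is to verify the orthogonality directly by expanding the inner product over the basis and exploiting the tensor-product structure of the $Z\!B$-splines together with their zero-integral property \eqref{zeroInt}. Since the inner product on $L^2_0(\Omega)$ is the standard $L^2$ inner product $\langle f,g\rangle = \iint_\Omega f\,g\,\mbox{d}x\,\mbox{d}y$, and both $s_{kl}^{int}$ and $s_{kl}^{ind}$ are finite linear combinations of the basis functions from Theorem~\ref{basisJM}, bilinearity reduces the task to showing that each interactive basis function $Z_{ij}^{k+1,l+1}(x,y)$ is orthogonal to each independent basis function $Z_{i'}^{k+1}(x,y)$ and $Z_{j'}^{l+1}(x,y)$.

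First I would treat the pairing of $Z_{ij}^{k+1,l+1}(x,y) = Z_{i}^{k+1}(x)\,Z_{j}^{l+1}(y)$ from \eqref{ZBprod} with the function $Z_{i'}^{k+1}(x,y)=Z_{i'}^{k+1}(x)$ from \eqref{ConstZ}, which is constant in $y$. By Fubini the double integral factorizes as
$$
\left(\int_a^b Z_{i}^{k+1}(x)\,Z_{i'}^{k+1}(x)\,\mbox{d}x\right)\left(\int_c^d Z_{j}^{l+1}(y)\,\mbox{d}y\right),
$$
and the second factor vanishes by \eqref{zeroInt}. Symmetrically, pairing $Z_{ij}^{k+1,l+1}(x,y)$ with $Z_{j'}^{l+1}(x,y)=Z_{j'}^{l+1}(y)$ produces a product in which the factor $\int_a^b Z_{i}^{k+1}(x)\,\mbox{d}x$ appears and again vanishes by \eqref{zeroInt}. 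Hence every interactive--independent cross term is zero.

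Collecting these cross terms, bilinearity of the inner product immediately gives $\langle s_{kl}^{int},\,s_{kl}^{ind}\rangle_{\mathcal{B}^2}=\langle s_{kl}^{int},\,s_{kl}^{ind}\rangle_{L^2_0(\Omega)}=0$, which is the assertion; this also makes the decomposition of Theorem~\ref{decomposition} genuinely orthogonal, in accordance with Corollary~\ref{Bdecomp}.

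There is no substantial obstacle here: the argument is entirely driven by the separable form \eqref{ZBprod}, \eqref{ConstZ} of the basis and the single fact \eqref{zeroInt} that each univariate $Z\!B$-spline integrates to zero. The only care needed is bookkeeping—matching each interactive--independent pairing with the correct vanishing marginal integral (the $y$-integral for the $x$-type independent functions and the $x$-integral for the $y$-type ones).
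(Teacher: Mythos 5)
Your proof is correct and follows essentially the same route as the paper's: both arguments reduce the inner product to cross terms that factorize via Fubini into a product containing $\int_a^b Z_{i}^{k+1}(x)\,\mathrm{d}x=0$ or $\int_c^d Z_{j}^{l+1}(y)\,\mathrm{d}y=0$ by \eqref{zeroInt}. The only difference is presentational — you expand over individual basis functions and invoke bilinearity, whereas the paper carries out the identical computation in matrix notation.
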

\begin{proof}
The goal is to show that the inner product of the interactive and the independent part is equal to zero. Thus
\begin{align*}
&\left(s_{kl}^{ind}(x,y), s_{kl}^{int}(x,y) \right)_{L^2} = \iint\limits_{\Omega} s_{kl}^{ind}(x,y)s_{kl}^{int}(x,y)\ \mbox{d}x\mbox{d}y \\
&= \int\limits_{a}^{b}\int\limits_{c}^{d} \left[\mathbf{Z}_{k+1}^{\top}(x)\,\mathbf{v} + \mathbf{u}^{\top}\mathbf{Z}_{l+1}(y)\right]\left[\mathbf{Z}_{k+1}^{\top}(x)\,\mathbf{Z}\,\mathbf{Z}_{l+1}(y)\right] \mbox{d}x\mbox{d}y \\
&= \int\limits_{a}^{b}\int\limits_{c}^{d} \left[\mathbf{Z}_{k+1}^{\top}(x)\,\mathbf{v}\,\mathbf{Z}_{k+1}^{\top}(x)\,\mathbf{Z}\,\mathbf{Z}_{l+1}(y) + \mathbf{u}^{\top}\mathbf{Z}_{l+1}(y)\,\mathbf{Z}_{k+1}^{\top}(x)\,\mathbf{Z}\,\mathbf{Z}_{l+1}(y)\right]\mbox{d}x\mbox{d}y \\
&= \int\limits_{a}^{b}\mathbf{Z}_{k+1}^{\top}(x)\,\mathbf{v}\,\mathbf{Z}_{k+1}^{\top}(x)\,\mathbf{Z}\left(\int\limits_{c}^{d}\mathbf{Z}_{l+1}(y)\ \mbox{d}y\right)\,\mbox{d}x + \int\limits_{c}^{d}\mathbf{u}^{\top}\mathbf{Z}_{l+1}(y)\left(\int\limits_{a}^{b}\mathbf{Z}_{k+1}^{\top}(x)\,\mbox{d}x\right)\,\mathbf{Z}\,\mathbf{Z}_{l+1}(y)\,\mbox{d}x = 0,
\end{align*}
since
\begin{equation*}
\int\limits_{a}^{b}\mathbf{Z}_{k+1}^{\top}(x)\,\mbox{d}x = \mathbf{0}\quad\quad\textnormal{and}\quad\quad\int\limits_{c}^{d}\mathbf{Z}_{l+1}(y)\,\mbox{d}y = \mathbf{0}.
\end{equation*}
\end{proof}

As a consequence of Theorems \ref{biv_mn}, \ref{decomposition} and \ref{biv_orthogonality}, the $L^{2}$-norm of the independent and interactive parts of spline $s_{kl}(x,y)$ can be calculated effectively as stated in the following theorem.
\\
\begin{theorem}\label{zb_norm}
The $L^{2}$-norm of the interactive part $s_{kl}^{int}(x,y)$ and the independent part $s_{kl}^{ind}(x,y)$ of spline $s_{kl}(x,y)\in{\cal Z}_{kl}^{\Delta\lambda,\Delta\mu}(\Omega)$ can be obtained as
\begin{align*}
\Vert s_{kl}^{int}(x,y)\Vert_{L^2} = \left(cs(\mathbf{Z})^{\top}\mathbb{N}_{kl}\,cs(\mathbf{Z})\right)^{1/2}
\end{align*}
and
\begin{align*}
\Vert s_{kl}^{ind}(x,y)\Vert_{L^2} = \left(cs(\mathbf{Q})^{\top}\,\mathbb{M}_{kl}\,cs(\mathbf{Q})\right)^{1/2},
\end{align*}
where $\mathbf{Q} = \left(\mathbf{V}\,\mathbf{K}_{g k}\,\mathbf{D}_{\lambda}\right)^{\top} + \mathbf{U}\,\mathbf{K}_{hl}\,\mathbf{D}_{\mu}$ and $cs(\mathbf{Q})$ is its vectorized form, $\mathbb{M}_{kl}=\mathbf{M}_{l}^{y}\otimes\mathbf{M}_{k}^{x}$, where
\begin{align*}    
\mathbf{M}_{l}^{y} = \int_{c}^{d}\mathbf{B}_{l+1}^{\top}(y)\,\mathbf{B}_{l+1}(y)\,\textnormal{d}y,\quad
\mathbf{M}_{k}^{x} = \int_{a}^{b}\mathbf{B}_{k+1}^{\top}(x)\,\mathbf{B}_{k+1}(x)\,\textnormal{d}x
\end{align*}
and $\mathbb{N}_{kl} = \mathbb{K}\,\mathbb{D}\,\mathbb{M}_{kl}\,\mathbb{D}^{\top}\,\mathbb{K}^{\top}$, where $\mathbb{D}=\mathbf{D}_{\mu}\otimes\mathbf{D}_{\lambda}$, $\mathbb{K}=\mathbf{K}_{h l}\otimes\mathbf{K}_{g k}$.
\end{theorem}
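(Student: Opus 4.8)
The plan is to reduce each norm to a quadratic form in the $B$-spline coefficients of the corresponding part and then substitute the explicit coefficient matrices supplied by Theorem~\ref{biv_mn}. The common engine is the following auxiliary identity: for any spline of tensor-product $B$-spline form $s(x,y)=\mathbf{B}_{k+1}^{\top}(x)\,\mathbf{C}\,\mathbf{B}_{l+1}(y)$ with coefficient matrix $\mathbf{C}$, one has $\|s\|_{L^2}^2 = cs(\mathbf{C})^{\top}\,\mathbb{M}_{kl}\,cs(\mathbf{C})$. To see this I would write the scalar as $s(x,y)=\left(\mathbf{B}_{l+1}^{\top}(y)\otimes\mathbf{B}_{k+1}^{\top}(x)\right)cs(\mathbf{C})$, square it, apply the mixed-product rule $(A\otimes B)(C\otimes D)=AC\otimes BD$, and integrate the outer products $\mathbf{B}_{k+1}(x)\mathbf{B}_{k+1}^{\top}(x)$ and $\mathbf{B}_{l+1}(y)\mathbf{B}_{l+1}^{\top}(y)$ over $[a,b]$ and $[c,d]$ to obtain $\mathbf{M}_{k}^{x}$ and $\mathbf{M}_{l}^{y}$, whose Kronecker product is $\mathbb{M}_{kl}$. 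This is precisely the computation of $J_2$ in Section~\ref{ZBsmoothing} specialized to $p=q=0$ (so that $\mathbb{S}$ is the identity), which I would cite rather than reproduce.

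For the independent part, combining Theorem~\ref{decomposition} with Theorem~\ref{biv_mn} shows that $s_{kl}^{ind}(x,y)$ has $B$-spline coefficient matrix $\mathbf{Q}=(\mathbf{V}\,\mathbf{K}_{g k}\,\mathbf{D}_{\lambda})^{\top}+\mathbf{U}\,\mathbf{K}_{h l}\,\mathbf{D}_{\mu}$: these are exactly the two summands of $\mathbf{B}$ in Theorem~\ref{biv_mn} that remain when the interactive coefficients vanish, and one checks directly via relation~\eqref{VBv} that $\mathbf{B}_{k+1}^{\top}(x)\,\mathbf{Q}\,\mathbf{B}_{l+1}(y)=\mathbf{Z}_{k+1}^{\top}(x)\,\mathbf{v}+\mathbf{u}^{\top}\mathbf{Z}_{l+1}(y)$. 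Applying the auxiliary identity with $\mathbf{C}=\mathbf{Q}$ then yields $\|s_{kl}^{ind}\|_{L^2}^2 = cs(\mathbf{Q})^{\top}\,\mathbb{M}_{kl}\,cs(\mathbf{Q})$ directly.

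For the interactive part, Theorem~\ref{biv_mn} gives coefficient matrix $\mathbf{C}=\mathbf{D}_{\lambda}^{\top}\,\mathbf{K}_{g k}^{\top}\,\mathbf{Z}\,\mathbf{K}_{h l}\,\mathbf{D}_{\mu}$, so the only remaining work is to relate $cs(\mathbf{C})$ to $cs(\mathbf{Z})$. Using $cs(AXB)=(B^{\top}\otimes A)\,cs(X)$ with $A=\mathbf{D}_{\lambda}^{\top}\mathbf{K}_{g k}^{\top}$ and $B=\mathbf{K}_{h l}\mathbf{D}_{\mu}$ gives $cs(\mathbf{C})=\left((\mathbf{K}_{h l}\mathbf{D}_{\mu})^{\top}\otimes(\mathbf{D}_{\lambda}^{\top}\mathbf{K}_{g k}^{\top})\right)cs(\mathbf{Z})$, and the transpose and mixed-product rules for Kronecker products identify this factor as $\mathbb{D}^{\top}\mathbb{K}^{\top}$ with $\mathbb{D}=\mathbf{D}_{\mu}\otimes\mathbf{D}_{\lambda}$ and $\mathbb{K}=\mathbf{K}_{h l}\otimes\mathbf{K}_{g k}$. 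Substituting into the auxiliary identity gives $\|s_{kl}^{int}\|_{L^2}^2 = cs(\mathbf{Z})^{\top}\,\mathbb{K}\,\mathbb{D}\,\mathbb{M}_{kl}\,\mathbb{D}^{\top}\,\mathbb{K}^{\top}\,cs(\mathbf{Z})=cs(\mathbf{Z})^{\top}\,\mathbb{N}_{kl}\,cs(\mathbf{Z})$ by the definition of $\mathbb{N}_{kl}$; taking square roots in both cases completes the proof.

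I expect the main obstacle to be purely bookkeeping, namely keeping the Kronecker-factor ordering and the transposes consistent so that $cs(\mathbf{C})$ collapses to exactly $\mathbb{D}^{\top}\mathbb{K}^{\top}cs(\mathbf{Z})$ rather than to a permuted variant. It is worth noting that the orthogonality of Theorem~\ref{biv_orthogonality} is not actually required to compute the two norms separately; it is only what guarantees the Pythagorean identity $\|s_{kl}\|_{L^2}^2=\|s_{kl}^{int}\|_{L^2}^2+\|s_{kl}^{ind}\|_{L^2}^2$, so I would not invoke it in the core computation.
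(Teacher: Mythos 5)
Your proposal is correct and follows essentially the same route as the paper's proof: both reduce each part to a quadratic form by vectorizing the coefficient matrix, converting $Z\!B$-splines to $B$-splines via \eqref{ZBX_to_B} and \eqref{ZBY_to_B} (equivalently, reading off the $B$-spline coefficient matrices from Theorem~\ref{biv_mn}), and integrating the Kronecker product of $B$-spline outer products to obtain $\mathbb{M}_{kl}$. Your packaging of the common step as an auxiliary identity (the $J_2$ computation with $p=q=0$) and your observation that Theorem~\ref{biv_orthogonality} is only needed for the subsequent Pythagorean corollary, not for the two norms themselves, are both accurate but do not change the substance of the argument.
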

\begin{proof}
According to the definition of the $L^2$-norm, the properties of the tensor product and relations \eqref{ZBX_to_B}, \eqref{ZBY_to_B}, we take the square of the norm to obtain
\begin{align*}
&\lVert s_{kl}^{int}(x,y) \rVert^{2}_{L^2} = \iint\limits_{\Omega} \left[ s_{kl}^{int}(x,y)\right]^{2} \mbox{d}x\mbox{d}y = \iint\limits_{\Omega}\left[\mathbf{Z}_{k+1}^{\top}(x)\,\mathbf{Z}\,\mathbf{Z}_{l+1}(y)\right]^{2} \mbox{d}x\,\mbox{d}y \\
&= \iint\limits_{\Omega}\left[\left(\mathbf{Z}_{l+1}^{\top}(y)\,\otimes\,\mathbf{Z}_{k+1}^{\top}(x)\right)\,cs(\mathbf{Z})\right]^{2}\mbox{d}x\,\mbox{d}y  \\
&= \iint\limits_{\Omega} \left[\left(\mathbf{B}_{l+1}(y)\otimes\mathbf{B}_{k+1}(x)\right)^{\top}\,\left(\mathbf{D}_{\mu}\otimes\mathbf{D}_{\lambda}\right)^{\top}\,\left(\mathbf{K}_{h l}\otimes\mathbf{K}_{g k}\right)^{\top}\,cs(\mathbf{Z})\right]^{2} \mbox{d}x\,\mbox{d}y\\
&= cs(\mathbf{Z})^{\top}\,\mathbb{K}\,\mathbb{D}\,\iint\limits_{\Omega} \left[\mathbf{B}_{l+1}(y)\otimes\mathbf{B}_{k+1}(x)\right]^{\top}\left[\mathbf{B}_{l+1}(y)\otimes\mathbf{B}_{k+1}(x)\right]\,\mbox{d}x\,\mbox{d}y\,\mathbb{D}^{\top}\,\mathbb{K}^{\top}\,cs(\mathbf{Z}) \\
&=cs(\mathbf{Z})^{\top}\,\mathbb{K}\,\mathbb{D}\,\left(\mathbf{M}_{l}^{y}\,\otimes\,\mathbf{M}_{k}^{x}\right)\,\mathbb{D}^{\top}\,\mathbb{K}^{\top}\,cs(\mathbf{Z}) = cs(\mathbf{Z})^{\top}\mathbb{K}\,\mathbb{D}\,\mathbb{M}_{kl}\,\mathbb{D}^{\top}\,\mathbb{K}^{\top}\,cs(\mathbf{Z})  \\
&= cs(\mathbf{Z})^{\top}\,\mathbb{N}_{kl}\,cs(\mathbf{Z}).
\end{align*}
Analogously for the independent part
\begin{align*}
&\Vert s_{kl}^{ind}(x,y) \Vert^{2}_{L^2} = \iint\limits_{\Omega}\left[s_{kl}^{ind}(x,y)\right]^{2}\mbox{d}x\,\mbox{d}y = \iint\limits_{\Omega}\left[\mathbf{Z}_{k+1}^{\top}(x)\,\mathbf{v} + \mathbf{u}^{\top}\mathbf{Z}_{l+1}(y)\right]^{2}\,\mbox{d}x\,\mbox{d}y \\
&=\iint\limits_{\Omega}\left[\mathbf{B}_{k+1}^{\top}(x)\,\left(\left(\mathbf{V}\,\mathbf{K}_{g k}\,\mathbf{D}_{\lambda}\right)^{\top} + \mathbf{U}\,\mathbf{K}_{hl}\,\mathbf{D}_{\mu}\right)\,\mathbf{B}_{l+1}(y)\right]^{2}\,\mbox{d}x\,\mbox{d}y \\
&= \iint\limits_{\Omega}\left[\left(\mathbf{B}_{l+1}(y)\,\otimes\,\mathbf{B}_{k+1}(x)\right)^{\top}\,cs(\mathbf{Q})\right]^{2}\,\mbox{d}x\, \mbox{d}y \\
&= cs(\mathbf{Q})^{\top}\left(\mathbf{M}_{l}^{y}\,\otimes\,\mathbf{M}_{k}^{x}\right)\,cs(\mathbf{Q}) = cs(\mathbf{Q})\,\mathbb{M}_{kl}\,cs(\mathbf{Q}).
\end{align*}
Taking square roots of the obtained formulas completes the proof.
\end{proof}
\begin{corollary}
Since the independent part and the interactive part are orthogonal, the square of the $L^{2}$-norm of the spline $s_{kl}(x,y)$ can be calculated by the formula
\begin{align*}
\Vert s_{kl}(x,y) \Vert^{2}_{L^2} = \Vert s_{kl}^{int}(x,y) \Vert^{2}_{L^2} + \Vert s_{kl}^{ind}(x,y) \Vert^{2}_{L^2} = cs(\mathbf{Z})^{\top}\,\mathbb{N}_{kl}\,cs(\mathbf{Z}) + cs(\mathbf{Q})\,\mathbb{M}_{kl}\,cs(\mathbf{Q}).
\end{align*}
\end{corollary}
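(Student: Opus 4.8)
The plan is to compute each squared norm by writing the relevant part of the spline in matrix form, vectorizing the resulting quadratic expression with the Kronecker product, pulling the (constant) coefficient vector outside the integral, and recognizing what remains as a $B$-spline Gram (mass) matrix. The tools I would rely on are the $Z\!B$-to-$B$ conversions \eqref{ZBX_to_B} and \eqref{ZBY_to_B}, the mixed-product rule $(\mathbf{A}\otimes\mathbf{B})(\mathbf{C}\otimes\mathbf{D}) = (\mathbf{A}\mathbf{C})\otimes(\mathbf{B}\mathbf{D})$, and the partition-of-unity identities $\mathbf{e}_{g+k+1}^{\top}\mathbf{B}_{k+1}(x) = 1$ and $\mathbf{e}_{h+l+1}^{\top}\mathbf{B}_{l+1}(y) = 1$ already used in \eqref{VBv}.

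For the interactive part, I would start from the matrix form $s_{kl}^{int}(x,y) = \mathbf{Z}_{k+1}^{\top}(x)\,\mathbf{Z}\,\mathbf{Z}_{l+1}(y)$ coming from Theorem \ref{decomposition}, rewrite it as $(\mathbf{Z}_{l+1}^{\top}(y)\otimes\mathbf{Z}_{k+1}^{\top}(x))\,cs(\mathbf{Z})$, and then substitute \eqref{ZBX_to_B}, \eqref{ZBY_to_B} to replace each $\mathbf{Z}$-vector by $\mathbf{K}\,\mathbf{D}\,\mathbf{B}$. Squaring and integrating over $\Omega$, the factor $cs(\mathbf{Z})^{\top}\mathbb{K}\,\mathbb{D}$ and its transpose come out of the integral, leaving $\iint_{\Omega}[\mathbf{B}_{l+1}(y)\otimes\mathbf{B}_{k+1}(x)][\mathbf{B}_{l+1}(y)\otimes\mathbf{B}_{k+1}(x)]^{\top}\,\mathrm{d}x\,\mathrm{d}y$, which factorizes via the mixed-product rule as $\mathbf{M}_{l}^{y}\otimes\mathbf{M}_{k}^{x} = \mathbb{M}_{kl}$. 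Collecting the surrounding factors gives exactly $\mathbb{N}_{kl} = \mathbb{K}\,\mathbb{D}\,\mathbb{M}_{kl}\,\mathbb{D}^{\top}\,\mathbb{K}^{\top}$, so that $\lVert s_{kl}^{int}\rVert_{L^2}^{2} = cs(\mathbf{Z})^{\top}\mathbb{N}_{kl}\,cs(\mathbf{Z})$.

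For the independent part, the key preliminary step is to re-express $s_{kl}^{ind}(x,y) = \mathbf{Z}_{k+1}^{\top}(x)\,\mathbf{v} + \mathbf{u}^{\top}\mathbf{Z}_{l+1}(y)$ as a genuine tensor-product $B$-spline form $\mathbf{B}_{k+1}^{\top}(x)\,\mathbf{Q}\,\mathbf{B}_{l+1}(y)$. I would treat each summand separately: using \eqref{ZBX_to_B} gives $\mathbf{Z}_{k+1}^{\top}(x)\mathbf{v} = \mathbf{B}_{k+1}^{\top}(x)\mathbf{D}_{\lambda}^{\top}\mathbf{K}_{gk}^{\top}\mathbf{v}$, and inserting the factor $1 = \mathbf{e}_{h+l+1}^{\top}\mathbf{B}_{l+1}(y)$ together with the identification $\mathbf{v}\,\mathbf{e}_{h+l+1}^{\top} = \mathbf{V}^{\top}$ turns this into $\mathbf{B}_{k+1}^{\top}(x)(\mathbf{V}\,\mathbf{K}_{gk}\,\mathbf{D}_{\lambda})^{\top}\mathbf{B}_{l+1}(y)$; the symmetric manipulation on $\mathbf{u}^{\top}\mathbf{Z}_{l+1}(y)$, inserting $1 = \mathbf{e}_{g+k+1}^{\top}\mathbf{B}_{k+1}(x)$ and using $\mathbf{e}_{g+k+1}\mathbf{u}^{\top} = \mathbf{U}$, yields $\mathbf{B}_{k+1}^{\top}(x)\mathbf{U}\,\mathbf{K}_{hl}\,\mathbf{D}_{\mu}\mathbf{B}_{l+1}(y)$. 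Adding the two recovers $\mathbf{Q} = (\mathbf{V}\,\mathbf{K}_{gk}\,\mathbf{D}_{\lambda})^{\top} + \mathbf{U}\,\mathbf{K}_{hl}\,\mathbf{D}_{\mu}$. From here the computation mirrors the interactive case: vectorize as $(\mathbf{B}_{l+1}(y)\otimes\mathbf{B}_{k+1}(x))^{\top}cs(\mathbf{Q})$, square, integrate to produce $\mathbb{M}_{kl}$, and conclude $\lVert s_{kl}^{ind}\rVert_{L^2}^{2} = cs(\mathbf{Q})^{\top}\mathbb{M}_{kl}\,cs(\mathbf{Q})$.

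I expect the main obstacle to be precisely this re-expression of the independent part in tensor-product $B$-spline form, since the two summands are each constant in one variable and must be promoted to full bivariate form by inserting a partition-of-unity factor and correctly identifying the block matrices $\mathbf{V}$ and $\mathbf{U}$ with their Kronecker definitions. This manipulation is essentially the one already carried out in the proof of Theorem \ref{biv_mn}, in particular \eqref{VBv}, so I would reuse it rather than redo it. Everything else — vectorization through the mixed-product rule and the factorization of the $B$-spline Gram matrix into $\mathbf{M}_{l}^{y}\otimes\mathbf{M}_{k}^{x}$ — is routine, and taking square roots at the end delivers the two stated norms.
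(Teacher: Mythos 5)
Your proposal is correct and follows essentially the same route as the paper: the paper obtains the two norm formulas (its Theorem~\ref{zb_norm}) by exactly the vectorization-and-Gram-matrix computation you describe, including the re-expression of the independent part as $\mathbf{B}_{k+1}^{\top}(x)\,\mathbf{Q}\,\mathbf{B}_{l+1}(y)$ via the partition-of-unity identities from \eqref{VBv}, and the corollary then follows by combining these with the orthogonality established in Theorem~\ref{biv_orthogonality}. The only point worth making explicit in a write-up is the Pythagorean step itself, i.e.\ that the cross term $2\left(s_{kl}^{int},s_{kl}^{ind}\right)_{L^2}$ vanishes by that orthogonality, which your argument uses implicitly through the stated hypothesis.
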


\section{Spline coefficients of processed districts} \label{secA2}
The following table states $Z\!B$-spline coefficients and corresponding $B$-spline coefficients for processed districts Zl\'in, Liberec and \v{C}esk\'e Bud\v{e}jovice used in Section \ref{appl}.

\begin{table}
\centering
\caption{Table of $Z\!B$-spline and $B$-spline coefficients for regions Zl\'in, Liberec and \v{C}esk\'e Bud\v{e}jovice}
\label{table}
\footnotesize
\begin{tabular}{r r r r r r r r}
    \hline
    \hline
    \multicolumn{8}{l}{$Z\!B$-spline coefficients $\mathbf{R} = \{ r_{ij}\}_{\,i,\,j}^{\,g+k+1,\,h+l+1}$ for the Zl\'in district} \\
    \hline
    0.0433 & 0.0818 & 0.1694 & 0.0345 & 0.0713 & 0.0850 & 0.0302 & -0.2207 \\
    0.0491 & 0.0420 & 0.2877 & -0.1483 & 0.0352 & 0.1858 & 0.0696 & -0.7231 \\
    0.3329 & 0.7511 & 1.9396 & 0.5676 & 0.3729 & 0.5489 & 0.1974 & -0.7382 \\
    0.2733 & 0.6452 & 2.8907 & 2.3161 & 1.4441 & 1.2999 & 0.4630 & 0.3199 \\
    0.0364 & -0.0951 & 1.0259 & 2.0287 & 1.7751 & 1.4275 & 0.5143 & 0.9355 \\
    0.0924 & 0.0973 & 0.5932 & 0.9820 & 0.5913 & 0.5155 & 0.1845 & 0.5201 \\
    0.0210 & 0.0010 & 0.1433 & 0.2894 & 0.1398 & 0.1390 & 0.0503 & 0.1152 \\
   -0.1512 & -0.5758 & -0.0754 & 1.0684 & 0.7085 & 0.3582 & 0.0990 & 0 \\
   \hline
   \hline
    \multicolumn{8}{l}{$B$-spline coefficients $\mathbf{B} = \{ b_{ij}\}_{\,i,\,j}^{\,g+k+1,\,h+l+1}$ for the Zl\'in district} \\
    \hline
   -0.7148 & -1.8594 & 0.6025 & -0.8219 & -1.5061 & -1.8114 & -2.9477 & -2.8679 \\
   -2.2410 & -3.0606 & 0.3277 & -1.1327 & -1.3827 & -1.3842 & -2.7270 & -2.7004 \\
    2.2693 & 1.1367 & 4.3953 & -1.0186 & -1.5851 & -0.5921 & -2.0637 & -2.0048 \\
    0.2200 & 0.1799 & 5.4364 & 4.8330 & -0.3488 & 0.0216 & -1.1824 & -1.2009 \\
   -1.9102 & -2.3770 & -1.2707 & 5.8008 & 1.6921 & -0.3096 & -0.1350 & -0.0726 \\
   -1.0482 & -1.2567 & -2.1944 & -0.9155 & -1.7062 & -0.4711 & 1.6821 & 2.2828 \\
   -3.2379 & -2.5991 & -2.2223 & -0.6011 & -0.7012 & -1.4391 & 0.0889 & 0.4786 \\
   -2.2306 & -1.5426 & -1.3241 & -0.2901 & 0.0570 & -1.3289 & 0.0655 & 0.4360 \\
   \hline
   \hline
    \multicolumn{8}{l}{$Z\!B$-spline coefficients $\mathbf{R} = \{ r_{ij}\}_{\,i,\,j}^{\,g+k+1,\,h+l+1}$ for the Liberec district} \\
    \hline
    0.0822 & 0.3022 &  0.3638 &  0.2847 &  0.2263 &  0.0799 &  0.0325 & -0.0313 \\
    0.2844 & 0.9746 &  1.2803 &  0.8982 &  0.6173 &  0.1903 &  0.0702 & -0.1824 \\
    0.4076 & 1.4870 &  2.4502 &  1.8265 &  1.5480 &  0.7192 &  0.2612 &  0.5677 \\
    0.2235 & 1.0008 &  2.3545 &  2.6235 &  1.9302 &  0.8914 &  0.2995 &  1.2520 \\
    0.1473 & 0.5951 &  1.6356 &  2.1922 &  1.4420 &  0.7238 &  0.2401 &  1.1634 \\
    0.0755 & 0.2796 &  0.7868 &  0.9706 &  0.5916 &  0.2972 &  0.0938 &  0.8248 \\
    0.0275 & 0.0968 &  0.2759 &  0.3408 &  0.2138 &  0.1123 &  0.0357 &  0.2831 \\
    0.0966 & 0.0802 &  0.9229 &  1.3868 &  1.2724 &  0.9913 &  0.3157 & 0 \\
   \hline
   \hline
    \multicolumn{8}{l}{$B$-spline coefficients $\mathbf{B} = \{ b_{ij}\}_{\,i,\,j}^{\,g+k+1,\,h+l+1}$ for the Liberec district} \\
    \hline
    3.0944 &  3.4294 &  2.0780 & -0.1924 & -0.8317 & -2.3376 & -2.8736 & -3.0419 \\
    3.4611 &  3.4208 &  2.4859 & -1.0557 & -1.5342 & -2.5306 & -2.9516 & -2.8495 \\
    3.3903 &  3.5940 &  5.4920 &  1.4622 &  1.3264 & -0.5323 & -2.2937 & -2.4143 \\
    0.0240 & -0.2903 &  3.6754 &  3.5223 & -0.0110 & -0.0963 & -1.4440 & -1.0801 \\
   -0.2226 & -1.5479 &  0.5793 &  1.1190 & -0.4102 &  0.2312 & -1.5801 & -1.4115 \\
   -0.9224 & -2.0663 & -1.0615 & -1.0502 &  0.2007 &  0.3728 & -1.0128 & -0.8193 \\
   -1.8597 & -2.7711 & -1.8448 & -1.4468 & -0.7157 & -1.0602 & -2.4475 & -2.4121 \\
   -2.0475 & -2.8773 & -2.0842 & -1.5640 & -0.7802 & -1.0777 & -2.3001 & -2.2629 \\
   \hline
   \hline
    \multicolumn{8}{l}{$Z\!B$-spline coefficients $\mathbf{R} = \{ r_{ij}\}_{\,i,\,j}^{\,g+k+1,\,h+l+1}$ for the \v{C}esk\'e Bud\v{e}jovice district} \\
    \hline
    0.0267 &  0.0604 &  0.2582 &  0.1228 &  0.0721 &  0.0493 &  0.0133 &  0.0817 \\
    0.0799 &  0.2058 &  0.8363 &  0.5654 &  0.3498 &  0.2476 &  0.0780 & -0.0503 \\
    0.0454 &  0.2200 &  1.5853 &  1.3883 &  0.4894 &  0.3522 &  0.0965 &  0.9622 \\
    0.0687 &  0.2754 &  1.9651 &  3.0379 &  1.4591 &  0.9125 &  0.2766 &  1.4043 \\
    0.0616 &  0.1327 &  0.9996 &  1.9312 &  0.7566 &  0.5475 &  0.1520 &  1.2296 \\
    0.0314 &  0.0591 &  0.4995 &  0.9390 &  0.3205 &  0.2703 &  0.0698 &  0.8475 \\
    0.0122 &  0.0221 &  0.1712 &  0.3206 &  0.1089 &  0.0959 &  0.0243 &  0.2871 \\
   -0.0901 & -0.3202 &  0.8131 &  1.5073 &  1.4200 &  0.9185 &  0.3194 & 0 \\
   \hline
   \hline
    \multicolumn{8}{l}{$B$-spline coefficients $\mathbf{B} = \{ b_{ij}\}_{\,i,\,j}^{\,g+k+1,\,h+l+1}$ for the \v{C}esk\'e Bud\v{e}jovice district} \\
    \hline
    0.8664 &  0.3968 &  4.8031 &  0.3323 & -0.0606 & -0.7093 & -1.7995 & -1.7552 \\
   -0.0117 & -0.2693 &  4.1336 &  0.0212 & -1.1995 & -1.7711 & -3.1866 & -3.2630 \\
    1.1094 &  1.6213 &  6.8401 &  3.1792 & -0.0198 &  0.9286 & -0.1635 & -0.0026 \\
    0.3423 &  0.1406 &  3.6783 &  4.2724 & -0.8830 & -1.4211 & -2.6119 & -2.6405 \\
   -0.8244 & -1.4695 & -0.4175 &  0.4232 &  0.4598 & -0.2625 & -0.9426 & -1.0169 \\
   -1.5953 & -1.6423 & -0.2077 & -1.1434 &  0.6658 & -1.1062 & -1.3563 & -1.6383 \\
   -2.4858 & -2.4536 & -1.1516 & -1.8904 & -0.0889 & -2.3923 & -2.2711 & -2.6963 \\
   -2.6117 & -2.5114 & -1.2302 & -1.9674 & -0.0611 & -2.4958 & -2.1909 & -2.6877 \\
   \hline
   \end{tabular}
\end{table}

\newpage
\bibliographystyle{unsrtnat}

\end{document}